\numberwithin{equation}{section}
  \theoremstyle{plain}
  \newtheorem*{thm*}{\protect\theoremname}
\theoremstyle{plain}
\newtheorem{thm}{\protect\theoremname}[section]
  \theoremstyle{plain}
  \newtheorem{cor}[thm]{\protect\corollaryname}
  \theoremstyle{remark}
  \newtheorem{rem}[thm]{\protect\remarkname}
  \theoremstyle{definition}
  \newtheorem{defn}[thm]{\protect\definitionname}
  \theoremstyle{plain}
  \newtheorem{prop}[thm]{\protect\propositionname}
  \theoremstyle{plain}
  \newtheorem{lem}[thm]{\protect\lemmaname}
  \theoremstyle{plain}
  \theoremstyle{plain}
  \newtheorem*{prop*}{\protect\propositionname}
  \theoremstyle{remark}
  \newtheorem*{rem*}{\protect\remarkname}
  \providecommand{\corollaryname}{Corollary}
  \providecommand{\definitionname}{Definition}
  \providecommand{\lemmaname}{Lemma}
  \providecommand{\propositionname}{Proposition}
  \providecommand{\remarkname}{Remark}
  \providecommand{\theoremname}{Theorem}
\providecommand{\theoremname}{Theorem}
\newcommand{\C}{\mathbb{C}}
\newcommand{\E}{\mathbb{E}}
\newcommand{\R}{\mathbb{R}}
\newcommand{\bH}{\mathbb{bH}}
\newcommand{\wind}{\mathrm{wind}}
\newcommand{\Ak}{a_1,\dots,a_n}
\begin{document}

\title[Conformal Invariance of Spin Correlations in the Ising Model]{Conformal Invariance of Spin Correlations in~the~Planar Ising Model}

\author[Dmitry Chelkak]{Dmitry Chelkak$^\mathrm{a,c}$}
\author[Cl\'ement Hongler]{Cl\'ement Hongler$^\mathrm{b}$}
\author[Konstantin Izyurov]{Konstantin Izyurov$^\mathrm{c}$}

\date{\today}

\thanks{\textsc{${}^\mathrm{A}$ St.Petersburg Department of Steklov Mathematical Institute.
Fontanka~27, 191023 St.Petersburg, Russia,} {\it E-mail address:} \texttt{dchelkak@pdmi.ras.ru}}

\thanks{\textsc{${}^\mathrm{B}$ Department of Mathematics, Columbia University. 2990 Broadway, New York, NY 10027, USA,} {\it E-mail address:} \texttt{hongler@math.columbia.edu}}

\thanks{\textsc{${}^\mathrm{C}$ Chebyshev Laboratory, Department of Mathematics and
Mechanics, Saint-Petersburg State University, 14th Line, 29b, 199178 Saint-Petersburg,
Russia. {\it \mbox{Current} address:} Department of Mathematics and Statistics, University of Helsinki, P.O. box 68, 00014 Helsinki, Finland,} {\it E-mail address:} \texttt{konstantin.izyurov@helsinki.fi}}

\begin{abstract}
We rigorously prove the existence and the conformal invariance of scaling limits of the magnetization and multi-point spin correlations in the critical Ising model on arbitrary simply connected planar domains. This solves a number of conjectures coming from the physical and the mathematical literature. The proof relies on convergence results for discrete holomorphic spinor observables and probabilistic techniques.
\end{abstract}
\maketitle

\newpage
\tableofcontents

\section{Introduction}

The Ising model plays a central role in equilibrium statistical mechanics, being a standard example of an order-disorder phase transition in dimensions
two and above. Besides pure mathematical interest, it has found successful applications to several fields in theoretical physics and computer sciences.

The phase transition in the Ising model in two dimensions has been a subject of extensive study, both in the mathematics and in the physics literature. The value of the critical temperature on the square lattice was determined by
Kramers and Wannier \cite{kramers-wannier}. Onsager \cite{onsager} computed the free energy and the critical exponents of the model. Later on, many exact computations were carried out, in particular by McCoy and Wu \cite{mccoy-wu-i}.

Further, a gradual understanding of the Ising model at criticality led to the conjecture by Belavin, Polyakov and Zamolodchikov that its scaling limit (as well as scaling limits of other critical models) should be conformally invariant and described by Conformal Field Theory \cite{belavin-polyakov-zamolodchikov-i,belavin-polyakov-zamolodchikov-ii}.
Loosely speaking, this conjecture can be formulated as follows: for any conformal map $\varphi:\Omega\to\Omega'$,
\begin{center}
\noindent (scaling limit of the model on $\Omega'$) = $\varphi$(scaling limit of the model on $\Omega$).
\end{center}
In particular, if $\Omega_\delta$ are discrete approximations to a continuous planar domain $\Omega$, then various quantities (expectations, probabilities etc.) in the model, under a proper normalization, have conformally invariant or covariant limits as the mesh size $\delta$ tends to zero. Moreover, Conformal Field Theory predicts exact formulae for these limits.

In the \emph{full-plane} case, many results were obtained following the seminal works of Onsager and Kaufman in late 1940's. The Onsager's formula for the spontaneous magnetization was proven by Yang \cite{yang}. The diagonal and horizontal spin-spin correlations were explicitly computed by Wu \cite{mccoy-wu-i}. A number of remarkable results were obtained for the massive limits, see \cite{wu-mccoy-tracy-barouch,sato-miwa-jimbo-iv,palmer-tracy} and references therein. Palmer \cite{palmer} justified the CFT predictions at criticality by taking the zero-mass limit. At criticality, the full-plane {energy} correlation functions (that is, the correlations of $n$ pairs of neighboring spins) were computed on periodic isoradial graphs by Boutillier and de Tili\`ere \cite{boutillier-de-tiliere-ii,boutillier-de-tiliere-i}, and the 2n-point full-plane {spin} correlation functions were treated by Dub\'edat, combining exact bosonization techniques \cite{dubedat-ii} and results on monomer correlations in the dimer model \cite{dubedat-i}.

However, the group of conformal self-maps of the full plane is only finite dimensional. Hence, in order to reveal the full strength of the conformal invariance property, it is important to consider \emph{general planar domains with boundary}. In this setting, mathematical proofs of {conformal invariance} and CFT predictions at criticality have remained out of reach until recently. Smirnov \cite{smirnov-i} has rigorously established conformal covariance of the \emph{fermionic observables} in the Ising model on the square grid. Later, this result has been proven to be universal in the family of isoradial graphs \cite{chelkak-smirnov-ii}, and led to the proof of convergence of the interfaces in the Ising model to Schramm's SLE${}_3$ curves \cite{CDHKS}. At the same time, the scaling limit of \emph{energy correlations} for bounded domains has been rigorously treated in \cite{hongler-smirnov-ii,hongler-i}, confirming the CFT predictions for the energy field. Nevertheless, the corresponding question about the \emph{spin correlations} remained open.

In this paper, we rigorously prove the existence and the conformal covariance of the scaling limits of all the multi-point spin correlation functions in any simply connected planar domain with $+$ boundary conditions at criticality.

Our main result (see Theorem~\ref{thm:1-3pts}) reads as follows. Let $\E_{\Omega_\delta}^{+}[\sigma_{a_1}\dots\sigma_{a_n}]$ denote the correlation of spins at the sites $a_1,\dots,a_n$ with $+$ boundary conditions in a discrete domain $\Omega_\delta$. Then
\[
\delta^{-\frac{n}{8}}\E_{\Omega_\delta}^{+}[\sigma_{a_1}\dots\sigma_{a_n}] \to \mathcal{C}^n\cdot\left\langle\sigma_{a_1}\dots\sigma_{a_n}\right\rangle^+_{\Omega}
\]
as $\Omega_\delta$ approximates $\Omega$ and the mesh size $\delta$ tends to zero. Here $\mathcal{C}$ is an explicit lattice-dependent constant and  $\left\langle\sigma_{a_1}\dots\sigma_{a_n}\right\rangle^+_{\Omega}$ is an explicit conformally covariant tensor of degree $\frac{1}{8}$ with respect to each of the variables, see (\ref{eq: zeta_constant}), (\ref{12ptFcts}) and (\ref{conf-cov}).

For example, in the case of the magnetization (the expectation of a single spin), one gets
\[
\delta^{-\frac{1}{8}}\E_{\Omega_\delta}^{+}[\sigma_a] \to \mathcal{C}\cdot2^{\frac14}\mathrm{rad}^{-\frac{1}{8}}(a,\Omega),
\]
where $\text{rad}(a,\Omega)$ denotes the conformal radius of $\Omega$ as seen from $a$, in other words, $\text{rad}(a,\Omega)=|\varphi'(0)|$, where $\varphi$ is a conformal map from the disc $\{z\in\C:|z|<1\}$ to $\Omega$ mapping the origin to $a$. A sketch of the proof of this result can also be found in the ICMP2012 proceedings \cite{hongler-icmp-2012}.

In the case of free boundary conditions, we establish a similar convergence result for the two-point function (see Theorem~ \ref{thm:2pts}). In that framework, the convergence of the corresponding $n$-point correlations (which by symmetry are only non-zero for even $n$) can be obtained by our methods as well, but for conciseness it is not included in the present paper.

Our previous results \cite{chelkak-izyurov} immediately allow one to treat alternating $+$/$-$ boundary conditions (see Corollary \ref{rem: alt_bc}). The technique we use also applies to \emph{mixed correlations}, involving spin, energy, disorder and boundary change operators, and extends to \emph{multiply connected domains}, which will be worked out in a subsequent paper.

The explicit formulae for what we prove to be the scaling limits of $\E_{\Omega_\delta}^{+}[\sigma_{a_1}\dots\sigma_{a_n}]$ were predicted by Conformal Field Theory methods in a number of papers originating in the seminal work \cite{belavin-polyakov-zamolodchikov-i}. In \cite{cardy-i}, it was explained how to handle the half-plane case by CFT means, in particular, the two-point correlations were treated. This result was later extended to $n=3$ \cite{burkhard-guim-i}, and extrapolated to larger $n$ \cite{burkhard-guim}.

Our method is based on the extraction of information from some discrete holomorphic observables in bounded domains by means of discrete complex analysis -- the approach that was firstly implemented in \cite{smirnov-i,smirnov-ii} for basic fermionic observables. More precisely, we use the spinor version of those which was introduced in \cite{chelkak-izyurov}. Fermionic observables per se essentially go back to the Kaufman-Onsager considerations and can be written as a product $\psi_z=\sigma_z\mu_z$ of spin and disorder operators in the notation of \cite{kadanoff-ceva}. The correlators $\langle\psi_z\sigma_{a_1}\mu_{a_2}\dots\mu_{a_n}\rangle$ can be found in the works of Kyoto's school \cite{sato-miwa-jimbo-i,sato-miwa-jimbo-ii,sato-miwa-jimbo-iii,sato-miwa-jimbo-iv}. However, rigorous proofs of convergence results require a delicate analysis of some Riemann-type boundary value problems for discrete holomorphic functions, the analysis thereof was initiated more recently \cite{smirnov-i,smirnov-ii,chelkak-smirnov-ii}.

Simultaneously and independently of our work, Dub\'edat announced analogous results for $2n$-point spin correlations in bounded domains $\Omega$ 
via the exact bosonization approach \cite{dubedat-ii}, and Camia, Garban and Newman obtained some results \cite{camia-garban-newman-i} about the properly renormalized spin field seen as a random generalized function (i.e. Schwartz distribution) on $\Omega$.

\subsection{Main results} The Ising model on a graph $\mathcal{G}$ is a random assignment of
$\pm1$ spins to the vertices of $\mathcal{G}$. In our paper we prefer a dual setup and consider the model on the \emph{faces} $\mathcal{F}=\mathcal{V}^\circ_{\Omega_\delta}$ of lattice approximations $\Omega_\delta$, $\delta\to 0$, to a bounded simply connected planar domain $\Omega\subset\C$. The probability of a spin configuration $\sigma\in\left\{ \pm1\right\} ^{\mathcal{F}}$ is proportional to $e^{-\beta\mathbf{H}\left(\sigma\right)}$, where $\beta>0$ is the inverse temperature and
\[
\textstyle \mathbf{H}\left(\sigma\right):=-\sum_{x\sim y}\sigma_{x}\sigma_{y}
\]
is the energy of the configuration. More precisely,
we will work with discrete planar domains $\Omega_\delta$ which are subsets of the \emph{square grid} rotated by 45\textdegree{} of diagonal mesh size $2\delta$ (the distance between adjacent spins is thus $\sqrt{2}\delta$, see Figure~\ref{Fig:discrete-domain}).

From now on we only consider the model at its \emph{critical point}, which for the square grid corresponds to the parameter value $\beta_\mathrm{c}=\frac{1}{2}\ln\left(\sqrt{2}+1\right)$. 
We also introduce a (lattice-dependent) constant that will appear in the statements of our theorems:
\begin{equation}
\label{eq: zeta_constant}
\mathcal{C}=2^{\frac{1}{6}}e^{-\frac32\zeta'\left(-1\right)},
\end{equation}
where $\zeta'$ denotes the derivative of Riemann's zeta function.

Below we follow the CFT notation $\langle\,\dots\rangle$ to represent the (predicted formulae for the scaling limit of) correlations on a simply connected domain $\Omega$: we define $\langle \sigma_{a_1}\dots\sigma_{a_n}\rangle_{\Omega}^\mathfrak{b}$ as some \emph{explicit functions} of the points $\Ak\in\Omega$ which may also depend on $\Omega$ and boundary conditions $\mathfrak{b}$ as parameters.
In particular, we define the one-point and the two-point functions $\left\langle \sigma_{a}\right\rangle _{\Omega}^{+}$ and $\left\langle \sigma_{a}\sigma_{b}\right\rangle _{\Omega}^{+}$, $\left\langle \sigma_{a}\sigma_{b}\right\rangle_{\Omega}^{\mathrm{free}}$ on the upper half-plane $\bH$ by the formulae
\begin{align}
\label{Expl_free}
\langle \sigma_{a}\rangle_{\bH}^{+}=
\frac{2^{\frac14}}{(2\Im\mathfrak{m}\,a)^{\frac18}}\,,\quad
\langle \sigma_{a}\sigma_{b}\rangle_{\bH}^{+} & = \langle \sigma_{a}\rangle_{\bH}^{+}\langle \sigma_{b}\rangle_{\bH}^{+}\cdot \biggl[\frac12\biggl(\left|\frac{b-\overline{a}}{b-a}\right|^{\frac12}\!+\left|\frac{b-a}{b-\overline{a}}\right|^\frac12\biggr)\biggr]^{\frac12},
\cr
 \langle \sigma_{a}\sigma_{b}\rangle_{\bH}^{\mathrm{free}} & =  \langle \sigma_{a}\rangle_{\bH}^{+}\langle \sigma_{b}\rangle_{\bH}^{+}\cdot {\biggl[\frac12\biggl(\left|\frac{b-\overline{a}}{b-a}\right|^{\frac12}\!-\left|\frac{b-a}{b-\overline{a}}\right|^\frac12\biggr)\biggr]^{\frac12}}
\end{align}
and on all other simply-connected domains $\Omega\ne\C$ by the condition that, for any conformal mapping $\varphi:\Omega\to\Omega'$, one has
\begin{align}
\left\langle \sigma_{a}\right\rangle_{\Omega}^{+} = \langle\sigma_{\varphi(a)}\rangle_{\Omega'}^+\cdot|\varphi'\left(a\right)|^{\frac{1}{8}},\quad
\left\langle \sigma_{a}\sigma_{b}\right\rangle_{\Omega}^{+} & = \langle\sigma_{\varphi(a)}\sigma_{\varphi(b)}\rangle_{\Omega'}\cdot|\varphi'\left(a\right)|^{\frac{1}{8}}|\varphi'\left(b\right)|^{\frac{1}{8}},
\cr
\left\langle \sigma_{a}\sigma_{b}\right\rangle_{\Omega}^{\mathrm{free}} & = \langle\sigma_{\varphi(a)}\sigma_{\varphi(b)}\rangle_{\Omega'}\cdot|\varphi'\left(a\right)|^{\frac{1}{8}}|\varphi'\left(b\right)|^{\frac{1}{8}}.
\label{eq: conf_conv_2}
\end{align}
Note in particular that this definition is consistent, as (\ref{eq: conf_conv_2}) is satisfied when $\Omega=\Omega'=\bH$ and $\varphi$ is a M\"obius map since the ratio ${|b-\overline{a}|}/{|b-a|}$ is M\"obius invariant. Therefore, due to the Riemann mapping theorem, the relations (\ref{Expl_free}), (\ref{eq: conf_conv_2}) uniquely define $\left\langle \sigma_{a}\sigma_{b}\right\rangle^{+}_{\Omega}$ and $\left\langle \sigma_{a}\sigma_{b}\right\rangle_{\Omega}^{\text{free}}$ for any simply-connected domain $\Omega\neq \C$.  Equivalent formulae for the one- and two-point functions can be given in terms of the hyperbolic distance $\mathrm{d}_{\Omega}(a,b)$ and the conformal radius $\text{rad}(a,\Omega)$:
\begin{align*}
\left\langle \sigma_{a}\right\rangle _{\Omega}^{+}  =  2^{\frac14}\text{rad}^{-\frac{1}{8}}(a,\Omega),\quad \left\langle \sigma_{a}\sigma_{b}\right\rangle _{\Omega}^{+} & = \left\langle \sigma_{a}\right\rangle _{\Omega}^{+}\left\langle \sigma_{b}\right\rangle _{\Omega}^{+}\cdot(1-\exp(-2\mathrm{d}_{\Omega}(a,b)))^{-\frac{1}{4}},\\ \left\langle \sigma_{a}\sigma_{b}\right\rangle _{\Omega}^{\mathrm{free}} & =
\left\langle \sigma_{a}\sigma_{b}\right\rangle _{\Omega}^{+} \cdot \exp\left(-\tfrac12\mathrm{d}_{\Omega}(a,b)\right).
\end{align*}

We have the following convergence theorem for the two-point functions, both with $+$ (the spins on the boundary of $\Omega_\delta$ are set to $+$) and free (no restrictions are set for boundary spins) boundary conditions:
\begin{thm}\label{thm:2pts}
Let $\Omega$ be a bounded simply connected domain and for $\delta>0$, let $\Omega_\delta$ be a discretization of $\Omega$ by the square grid of diagonal mesh size $2\delta$. Then, for any $\epsilon>0$, we have
\begin{eqnarray*}
\delta^{-\frac14}\mathbb{E}_{\Omega_{\delta}}^{+}\left[\sigma_{a}\sigma_{b}\right] \; \to \; \mathcal{C}^2\cdot\left\langle \sigma_{a}\sigma_{b}\right\rangle _{\Omega}^{+}\quad\text{and}\quad
\delta^{-\frac14}\mathbb{E}_{\Omega_{\delta}}^{\mathrm{free}}\left[\sigma_{a}\sigma_{b}\right] \; \to \; \mathcal{C}^2 \cdot \left\langle \sigma_{a}\sigma_{b}\right\rangle _{\Omega}^{\mathrm{free}},
\end{eqnarray*}
as $\delta\to 0$, uniformly over all $a,b\in\Omega$ at distance at least $\epsilon$ from $\partial\Omega$ and from each other, where the constant $\mathcal{C}$ is given by (\ref{eq: zeta_constant}).
\end{thm}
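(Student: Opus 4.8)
The plan is to reduce Theorem~\ref{thm:2pts} to a convergence statement for discrete holomorphic fermionic (spinor) observables, followed by an integration argument. Fix $a,b\in\Omega$ and consider, following \cite{chelkak-izyurov}, the spinor observable $F^{[a,b]}_{\Omega_\delta}(z)$ living on the double cover of $\Omega_\delta$ branching over $a$ and $b$: up to normalization it is the ratio of a modified partition function, in which a disorder line joins $a$ to $b$ and a fermion is inserted at $z$, to the partition function with the disorder line alone. Since $\mathbb{E}_{\Omega_\delta}^{+}[\sigma_a\sigma_b]$ is itself, up to the loop partition function, exactly such a disorder-line partition function, the Kadanoff--Ceva combinatorics \cite{kadanoff-ceva} gives: (i) $F^{[a,b]}_{\Omega_\delta}$ is s-holomorphic on $\Omega_\delta\setminus\{a,b\}$; (ii) it satisfies Riemann-type boundary conditions along $\partial\Omega_\delta$ dictated by the $+$ boundary conditions; (iii) near the branch point $a$ it admits an expansion of the form $(z-a)^{-1/2}+\mathcal{A}_\delta\,(z-a)^{1/2}+\dots$ on the double cover, whose coefficient $\mathcal{A}_\delta$ is, up to an explicit constant, a discrete version of $\partial_a\log\mathbb{E}_{\Omega_\delta}^{+}[\sigma_a\sigma_b]$, and analogously for $\bar\partial_a$ and for the branch point $b$.

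I would then prove that $\delta^{-1/2}F^{[a,b]}_{\Omega_\delta}$ converges, uniformly on compact subsets of $\Omega\setminus\{a,b\}$, to the unique holomorphic spinor $f^{[a,b]}_\Omega$ on the double cover of $\Omega$ branching over $a,b$, normalized with leading singularities $(z-a)^{-1/2}$ and $(z-b)^{-1/2}$ and satisfying the continuous counterpart of the boundary conditions in (ii). This is where the analytic machinery of \cite{smirnov-i,chelkak-smirnov-ii} enters: s-holomorphicity produces a discrete primitive with discrete harmonic components; a priori regularity together with compactness yields subsequential limits; the boundary values and the prescribed local behavior at $a,b$ single out the limit, so the whole family converges. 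Since $\mathcal{A}_\delta$ depends continuously on the observable near $a$, it converges to the analogous coefficient of $f^{[a,b]}_\Omega$, which a direct computation identifies with (a constant times) $\partial_a\log\langle\sigma_a\sigma_b\rangle_\Omega^{+}$, where $\langle\sigma_a\sigma_b\rangle_\Omega^{+}$ is the explicit expression (\ref{eq: conf_conv_2}); similarly for $\bar\partial_a$ and for $b$.

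Consequently the (holomorphic and antiholomorphic) logarithmic derivatives, in $a$ and in $b$, of $\mathbb{E}_{\Omega_\delta}^{+}[\sigma_a\sigma_b]$ and of $\langle\sigma_a\sigma_b\rangle_\Omega^{+}$ agree in the limit. Using a priori bounds (after the $\varrho(\delta)$-normalization the correlations are bounded above and below on compact subsets away from $\partial\Omega$ and from the diagonal, via monotonicity and crossing estimates), one extracts subsequential limits of $\mathbb{E}_{\Omega_\delta}^{+}[\sigma_a\sigma_b]/\varrho(\delta)$ and concludes that any such limit equals $\langle\sigma_a\sigma_b\rangle_\Omega^{+}$ up to a single multiplicative constant. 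That constant is pinned down by letting $b\to a$: on the continuous side $\langle\sigma_a\sigma_b\rangle_\Omega^{+}\sim|a-b|^{-1/4}$, while on the discrete side the same short-distance behaviour, with exactly the lattice constant absorbed into $\varrho(\delta)$ through the full-plane two-point asymptotics of \cite{mccoy-wu-i}, forces the constant to be $1$. The free boundary case runs in parallel with the observable adapted to free boundary conditions (interfaces may terminate on $\partial\Omega_\delta$, which changes both the boundary condition on $F$ and the topology of the double cover), the limit being identified with (\ref{2ptFcts}). Uniformity over $a,b$ at distance $\ge\epsilon$ from $\partial\Omega$ and from each other is inherited from the uniformity in the convergence of the observables and in the a priori estimates.

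The main obstacle is, I believe, twofold. First, the exact identity in (iii) between the subleading coefficient of the discrete observable at a branch point and the discrete logarithmic derivative of $\mathbb{E}_{\Omega_\delta}^{+}[\sigma_a\sigma_b]$: this requires careful bookkeeping of the disorder-insertion combinatorics and of the $\sqrt{\cdot}$-type local expansions on the double cover near $a$. Second, proving convergence of the observable \emph{up to the branch points}, i.e. controlling the discrete $(z-a)^{-1/2}$ singularity precisely enough that the extracted coefficient passes to the limit; the boundary analysis of s-holomorphic functions with Riemann-type boundary values is by now standard from \cite{chelkak-smirnov-ii}, but the local analysis at the branchings, and the passage from ``the logarithmic gradient converges'' to ``the normalized correlation converges, with the correct constant'', are the genuinely delicate parts.
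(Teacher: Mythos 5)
Your overall framework matches the paper's skeleton (spinor observables, s-holomorphicity, convergence away from singularities, relating subleading coefficients at branch points to discrete logarithmic derivatives, then integrating), but the two delicate steps you single out are resolved differently, and the difference matters.

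First, the paper's identity corresponding to your item (iii) is not really an ``expansion coefficient'' of the observable; it is the exact combinatorial identity of Lemma~\ref{lem:ratios-disc-magnet-spinor}, which says $\mathbb{E}_{\Omega_\delta}^{+}[\sigma_{a+2\delta}\sigma_b]/\mathbb{E}_{\Omega_\delta}^{+}[\sigma_{a}\sigma_b]=F_{[\Omega_\delta,a;b]}(a+\tfrac{3\delta}{2})$. The analytic content you describe (controlling the observable \emph{up to} the branch points) is then handled in Theorems~\ref{thm:localization-near-a}--\ref{thm:localization-near-b} by comparing $F_{[\Omega_\delta,a;\dots]}$ with an explicitly constructed full-plane spinor $F_{[\mathbb{C}_\delta,a]}$ (built from harmonic measure in a slit plane) and a discrete antiderivative $G_{[\mathbb{C}_\delta,a]}$, so the second-order coefficient is extracted by a symmetrization and a Beurling estimate rather than by any direct Taylor-type argument on the lattice. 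Your description of this step is compatible with the paper's but considerably more optimistic about how ``standard'' it is.

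Second, and more importantly, the way you fix the global normalization is a genuine departure. You propose to take $b\to a$ and invoke the full-plane asymptotics $\varrho(\delta)\sim\mathcal{C}_2\delta^{1/4}$ of Wu, plus ``monotonicity and crossing estimates'' to show that $\mathbb{E}_{\Omega_\delta}^{+}[\sigma_a\sigma_b]/\varrho(\delta)$ recovers $|a-b|^{-1/4}$ in the double limit. To carry this out you would also need the rotational invariance of the full-plane two-point function (so that $\mathbb{E}_{\mathbb{C}_\delta}[\sigma_a\sigma_b]\sim\varrho(\delta)|a-b|^{-1/4}$ for general directions), which was not available as a prior black box, and a spatial decorrelation statement of the form $\mathbb{E}_{\Omega_\delta}^{+}[\sigma_a\sigma_b]/\mathbb{E}_{\mathbb{C}_\delta}[\sigma_a\sigma_b]\to 1$ as $b\to a$; your phrase ``with exactly the lattice constant absorbed into $\varrho(\delta)$'' is not an argument but a restatement of what needs to be proved. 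The paper avoids Wu's result entirely (Remark~1.2(iii) is explicit on this) and instead pins the constant by Theorem~\ref{thm:ratio-free-plus}: the Kramers--Wannier identity~(\ref{eq:spinor-duality}) expresses the ratio $\mathbb{E}_{\Omega_\delta^\bullet}^{\mathrm{free}}[\sigma_{a+\delta}\sigma_{b+\delta}]/\mathbb{E}_{\Omega_\delta}^{+}[\sigma_{a}\sigma_{b}]$ as the value of the \emph{same} observable $F_{[\Omega_\delta,a;b]}$ near the \emph{other} branch point $b$, its limit $\mathcal{B}_\Omega(a;b)$ tends to $1$ as $b\to a$, and FKG sandwiches $\mathbb{E}_{\mathbb{C}_\delta}[\sigma_a\sigma_b]$ between the free and $+$ correlations. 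This one observable therefore does double duty: it gives the free boundary two-point function for free (so no separate observable with different double-cover topology is needed, contrary to your last paragraph) and it supplies exactly the decorrelation estimate needed to nail the normalization. In short, your route is not wrong in spirit, but it trades the paper's self-contained duality trick for Wu's asymptotics plus RSW-type mixing, and the mixing step is left unproven in your sketch.
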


Our result generalizes to multipoint correlations as follows. As for the two-point functions, define the continuous correlation functions $\langle\sigma_{a_1}\dots\sigma_{a_n}\rangle_{\Omega}^{+}$ in the upper half-plane $\bH$ by the explicit formula
\def\MuPm{{\mu \in \{ \pm 1 \}^n}}
\begin{equation}
\label{12ptFcts}
 \langle \sigma_{a_1} \dots \sigma_{a_n}\rangle_{\bH}^{+}=\prod\limits_{k=1}^{n}\frac{1}{\left(2\Im\mathfrak{m}\, a_k\right)^{\frac18}}\,\cdot \left(2^{-\frac{n}{2}}\sum\limits_{\MuPm}
 \, \prod\limits_{1\le k<m\le n}\left|\frac{a_k-a_m}{a_k-\overline{a}_m}\right|^{\frac{\mu_k\mu_m}{2}}\right)^{\!\frac12}
\end{equation}
and on other domains by the condition that for any conformal mapping $\phi:\Omega\to\Omega'$, we have
\begin{equation}
\label{conf-cov}
\langle\sigma_{a_1}\dots\sigma_{a_n}\rangle_{\Omega}^{+}=\langle\sigma_{\varphi(a_1)}\dots\sigma_{\varphi(a_n)}\rangle _{\Omega'}^{+} \cdot {\textstyle\prod_{k=1}^n}|\varphi'\left(a_k\right)|^{\frac{1}{8}}.
\end{equation}
Note as before that the explicit functions (\ref{12ptFcts}) are covariant under M\"obius maps and thus the multipoint correlations are well defined.

\begin{thm}\label{thm:1-3pts}
Let $\Omega$ be a bounded simply connected domain and let $\Omega_\delta$ be a discretization of $\Omega$ by the square grid of diagonal mesh size $2\delta$. Then, for any $\epsilon>0$ and any $n=1,2,\dots$, we have
\begin{eqnarray*}
\delta^{-\frac{n}{8}}\cdot \mathbb{E}_{\Omega_{\delta}}^{+}\left[\sigma_{a_1}\dots\sigma_{a_n}\right] & \to & \mathcal{C}^n\cdot \left\langle \sigma_{a_1}\dots\sigma_{a_n}\right\rangle _{\Omega}^{+}
\end{eqnarray*}
as $\delta\to 0$, uniformly over all $a_1,\dots,a_n\in\Omega$ at distance at least $\epsilon$ from $\partial\Omega$ and from each other.
\end{thm}
Using the results of \cite{chelkak-izyurov}, one immediately arrives at the following generalization. Let $\mathfrak{b}_\delta=\left\{ b^\delta_{1},\dots,b^\delta_{2m}\right\}$ be a collection of points on $\partial\Omega_\delta$ appearing in counterclockwise
order and approximating a continuous collection $\mathfrak{b}=\{b_1,\dots,b_{2m}\}\subset\partial\Omega$. Denote by $\mathbb{E}_{\Omega_{\delta}}^{\mathfrak{b}_\delta}$ an expectation for the Ising model with $+$ boundary conditions on the counterclockwise arcs $\left[b^\delta_{2j-1},b^\delta_{2j}\right]$ and $-$ boundary condition on the complementary arcs $\left[b^\delta_{2j},b^\delta_{2j+1}\right]$, $j=1,\dots,m$, where we set $b^\delta_{2m+1}:=b^\delta_1$.

\begin{cor}
Suppose that the approximation $\Omega_\delta$ of $\Omega$ is regular near the points of $\mathfrak{b}$ in the sense of \cite[Definition 3.14]{chelkak-izyurov}. As $\delta\to 0$, one has
\[\delta^{-\frac{n}{8}}\mathbb{E}_{\Omega_{\delta}}^{\mathfrak{b}_\delta}[\sigma_{a_1}{\dots}\sigma_{a_n}]\to \mathcal{C}^n\cdot\langle \sigma_{a_1}{\dots}\sigma_{a_n}\rangle_\Omega^\mathfrak{b},
\]
where $\langle \sigma_{a_1}{\dots}\sigma_{a_n}\rangle_\Omega^\mathfrak{b}$ is explicit and satisfies the conformal covariance property (\ref{conf-cov}).
\label{rem: alt_bc}
\end{cor}
\begin{proof}
 Write $\mathbb{E}_{\Omega_{\delta}}^{\mathfrak{b}_\delta}[\sigma_{a_1}{\dots}\sigma_{a_n}]= ({\mathbb{E}_{\Omega_{\delta}}^{\mathfrak{b}_\delta}[\sigma_{a_1}{\dots}\sigma_{a_n}]}/{\mathbb{E}_{\Omega_{\delta}}^{+}[\sigma_{a_1}{\dots}\sigma_{a_n}]})\cdot \mathbb{E}_{\Omega_{\delta}}^{+}[\sigma_{a_1}{\dots}\sigma_{a_n}]$. By \cite[Corollary 5.10]{chelkak-izyurov}, the first term converges to an explicit conformally \emph{invariant} limit. Thus the result follows from Theorem~\ref{thm:1-3pts}.
\end{proof}
\begin{rem}
The condition of \emph{boundedness} of $\Omega$ in Theorems \ref{thm:2pts}, \ref{thm:1-3pts} above is imposed for the sake of simplicity and can be removed with no essential changes in the proofs. In particular, one could take the \emph{bulk limit} (i.e., $\Omega_\delta\to \C$ and $\delta\to 0$ while keeping the positions of $\Ak$ fixed), in which case \cite{difrancesco-saleur-zuber, dubedat-ii}
\[
\delta^{-\frac{n}{8}}\cdot \mathbb{E}_{\Omega_{\delta}}^{+}\left[\sigma_{a_1}\dots\sigma_{a_n}\right] \to \mathcal{C}^n\cdot
\left\langle \sigma_{a_1}\dots\sigma_{a_n}\right\rangle _{\C},
\]
where
\begin{equation}
\label{expl_bulk}
\left\langle \sigma_{a_1}\dots\sigma_{a_n}\right\rangle _{\C}=\biggl(2^{-\frac{n}{2}}\!\!\!\sum\limits_{\substack{\MuPm:\\\mu_1+\dots+\mu_n=0}}
 \, \prod\limits_{1\le k<m\le n}\left|a_k-a_m\right|^{\frac{\mu_k\mu_m}{2}}\biggr)^{\frac12}
\end{equation}
(this formula can be seen as a limit of (\ref{12ptFcts}) when all $a_k$ are replaced by $a_k+iy$ and $y\to+\infty$).
It is worth noting that, if $\Ak$ are kept at a \emph{finite number of lattice steps} from each other, then other scaling exponents appear. For $n=2$, the scaling limit of the {energy densities} $\lim_{\delta\to 0}\delta^{-1}(\mathbb{E}_{\Omega_{\delta}}\left[\sigma_{a}\sigma_{a'}\right]-\sqrt{2}/2)$, where two {neighboring} faces $a,a'$ approximate the same point $a\in\Omega$, has been treated in~\cite{hongler-smirnov-ii}. For $n>2$, two terms of asymptotics have been obtained at \cite{gheissari-hongler-park}. One can also wonder about the intermediate situation when $\Ak$ are at distances of order $\delta^{\beta}$ from each other, for some fixed $0<\beta<1$. Then, the leading term in the asymptotic expansion of the discrete correlation functions has the order $\delta^{\frac{1-\beta}{8}n}$ and is provided by the bulk limit (\ref{expl_bulk}). A more sophisticated analysis is required in order to find the second term of asymptotics in this case.
\end{rem}

\subsection{Key steps in the proof}\label{sub:key-theorems}
In this section we list the key results that allow us to prove Theorems~\ref{thm:2pts} and \ref{thm:1-3pts}. The first small step deals with the normalizing factors. It is a celebrated result of Wu \cite{mccoy-wu-i} that in the unique infinite-volume limit of the critical planar Ising model (i.e., in the case $\Omega=\C$), one has the following asymptotics:
\begin{equation}
\mathbb{E}_{\mathbb{C}_{\delta}}[\sigma_{0_\delta}\sigma_{1_\delta}]\sim \mathcal{C}^2\cdot\delta^{\frac{1}{4}},\quad \delta\to 0,
\label{eq: T_T_Wu}
\end{equation}
where $\mathbb{C}_{\delta}$ denotes the square grid $\delta (1\!+\!i)\mathbb{Z}^2$, while $0_\delta$ and $1_\delta$ stand for proper approximations of the points $0,1\in\mathbb{C}$ (keep in mind that our square lattice is rotated by 45\textdegree, so this is the \emph{diagonal} spin-spin correlation).
Instead of deriving the correct normalization of spin correlations in bounded domains directly, we relate it to the behavior of the normalizing factor
\[
\varrho\left(\delta\right):=\mathbb{E}_{\mathbb{C}_{\delta}}[\sigma_{0_\delta}\sigma_{1_\delta}].
\]
Namely, we prove that, as $\delta\to 0$,
\begin{eqnarray*}
(\varrho(\delta))^{-\frac{n}{2}}\cdot \mathbb{E}_{\Omega_{\delta}}^{+}\left[\sigma_{a_1}\dots\sigma_{a_n}\right] & \to &  \left\langle \sigma_{a_1}\dots\sigma_{a_n}\right\rangle _{\Omega}^{+}, \\
(\varrho(\delta))^{-1}\cdot\mathbb{E}_{\Omega_{\delta}}^{\mathrm{free}}\left[\sigma_{a}\sigma_{b}\right] & \to & \left\langle \sigma_{a}\sigma_{b}\right\rangle _{\Omega}^{\mathrm{free}},
\end{eqnarray*}
which, combined with (\ref{eq: T_T_Wu}), readily gives Theorems \ref{thm:2pts} and \ref{thm:1-3pts}. We point out that apart from this reduction, we never use (\ref{eq: T_T_Wu}) in this paper. On the other hand, our methods also allow one to give a new proof of the explicit formula for the diagonal spin-spin correlations in the full-plane case as well as to derive  explicit formulae for the magnetization in the half-plane, see the forthcoming work \cite{chelkak-hongler}. 

\smallskip

The following theorem, concerning \emph{discrete logarithmic derivatives} of the spin correlations with $+$ boundary conditions, is a cornerstone for the whole paper:
\begin{thm}
\label{thm:log-derivatives}
Let $\Omega$ be a bounded simply connected domain and $\Omega_\delta$ be discretizations of $\Omega$ by the square grids $\delta(1\!+\!i)\mathbb{Z}^2$. Then, for any $\epsilon>0$ and any $n=1,2\dots$, we have
\begin{eqnarray}
\frac{1}{2\delta}\left(\frac{\mathbb{E}_{\Omega_{\delta}}^{+}\left[\sigma_{a_1+2\delta}\sigma_{a_2}\dots\sigma_{a_n}\right]} {\mathbb{E}_{\Omega_{\delta}}^{+}\left[\sigma_{a_1}\dots\sigma_{a_n}\right]}-1\right) & \to & \Re\mathfrak{e}\mathcal{A}_\Omega(a_1,\dots,a_n),
\label{eq:log-der-x}\\
\frac{1}{2\delta}\left(\frac{\mathbb{E}_{\Omega_{\delta}}^{+}\left[\sigma_{a_1+2i\delta}\sigma_{a_2}\dots\sigma_{a_n}\right]} {\mathbb{E}_{\Omega_{\delta}}^{+}\left[\sigma_{a_1}\dots\sigma_{a_n}\right]}-1\right) & \to & -\Im\mathfrak{m}\mathcal{A}_\Omega(a_1,\dots,a_n)
\label{eq:log-der-y}
\end{eqnarray}
as $\delta\to 0$, uniformly over all faces $a_1,\dots a_n\in\Omega_{\delta}$ at distance at least $\epsilon$ from $\partial\Omega$ and from each other. The function $\mathcal{A}_\Omega(a_1,\dots,a_n)$ is defined explicitly via the solution to some special interpolation problem (see further details in Section~\ref{sub:cont-spinors}) and has the following covariance property under conformal mappings $\varphi:\Omega\to\Omega'$:
\begin{equation}
\label{A-covariance}
\mathcal{A}_\Omega(a_1,\dots,a_n)= \mathcal{A}_{\Omega'}(\varphi(a_1),\dots,\varphi(a_n))\cdot\varphi'(a_1)+
\frac{1}{8}\frac{\varphi''(a_1)}{\varphi'(a_1)}.
\end{equation}
\end{thm}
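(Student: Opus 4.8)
The plan is to realize the discrete logarithmic derivative as the boundary value (or a ``diagonal limit'') of a discrete holomorphic spinor observable attached to the configuration with sources at $a, a_1, \dots, a_k$, and then to pass to the scaling limit using the convergence theory for such observables. Concretely, I would introduce the discrete fermionic observable $F_{\Omega_\delta}(\,\cdot\,; a, a_1, \dots, a_k)$ living on (corners/edges adjacent to) the faces of $\Omega_\delta$, defined as a ratio of partition functions: the numerator counts low-temperature contour configurations with an extra disorder insertion near the running point and order insertions at $a,a_1,\dots,a_k$, normalized by $\mathbb{E}^+_{\Omega_\delta}[\sigma_a\sigma_{a_1}\cdots\sigma_{a_k}]$. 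The first step is the combinatorial identity showing that the real and imaginary parts of $F$ evaluated at the corners around the face $a$ reproduce exactly the quantities $\tfrac{1}{2\delta}\big(\mathbb{E}^+[\sigma_{a\pm 2\delta}\cdots]/\mathbb{E}^+[\sigma_a\cdots]-1\big)$ in the limit --- equivalently, that $F$ has a prescribed singularity (a simple ``pole'' with a spinor monodromy $-1$) at $a$ whose coefficient encodes the logarithmic derivative, while at $a_1,\dots,a_k$ it has the milder branching dictated by the order operators. This is the step I expect to be the main obstacle: setting up the right normalization of $F$ at the source $a$ so that the subleading term in its local expansion is precisely the lattice logarithmic derivative, and checking that this is consistent with $s$-holomorphicity (in the sense of Smirnov / Chelkak--Smirnov) away from the sources and with the $+$ boundary condition (which forces $F$ to lie in a prescribed line, i.e. the Riemann--Hilbert boundary condition $\Im\mathfrak{m}(F^2\cdot n)=0$ type).

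Once the discrete set-up is in place, the second step is the a~priori estimates and convergence: using the Chelkak--Smirnov machinery for $s$-holomorphic functions (uniform boundedness away from sources, compactness, identification of subsequential limits via the boundary value problem), show that $\delta^{-1}F_{\Omega_\delta}$ converges, uniformly on compact subsets of $\Omega\setminus\{a,a_1,\dots,a_k\}$, to a continuous holomorphic spinor $f_\Omega(\,\cdot\,;a,a_1,\dots,a_k)$ which is the unique solution of the corresponding continuous boundary value problem: holomorphic in $\Omega$ with the prescribed branching/singularity structure at the $k+1$ marked points, and satisfying the conformal boundary condition on $\partial\Omega$. Uniqueness of this limiting object is what pins down $\mathcal{A}_\Omega$: I would \emph{define} $\mathcal{A}_\Omega(a;a_1,\dots,a_k)$ to be the coefficient in the local expansion of $f_\Omega$ at $a$ (the ``interpolation problem'' referenced in Section~\ref{sub:cont-spinors}), and then (\ref{eq:log-der-x})--(\ref{eq:log-der-y}) follow by reading off real and imaginary parts of this coefficient, since translation by $2\delta$ and $2i\delta$ of the source is the discrete analogue of differentiating the observable in the horizontal and vertical directions.

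The third step is the conformal covariance (\ref{A-covariance}). The continuous spinor $f_\Omega$ is a conformal covariant of weight $\tfrac12$ (this is standard for the Ising fermionic observable and comes from the covariance of the Riemann--Hilbert boundary condition and of the $z^{-1/2}$-type singularities): $f_\Omega(z;a,\dots) = f_{\Omega'}(\varphi(z);\varphi(a),\dots)\cdot(\varphi'(z))^{1/2}$. Expanding both sides near $z=a$ and comparing the coefficient of the next order after the leading singular term produces the transformation rule for $\mathcal{A}$; the extra inhomogeneous term $\tfrac18\,\varphi''(a)/\varphi'(a)$ is exactly the contribution of the Taylor expansion of $(\varphi'(z))^{1/2}$ about $z=a$ (the factor $\tfrac18$ being half the spin weight), so the computation is the same ``Schwarzian-type'' bookkeeping that produces anomalous terms in the transformation of primary-field logarithmic derivatives in CFT. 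I would carry this out by a short local computation, taking care that the branch of the square root is chosen consistently on both sides. The only genuinely delicate points throughout are (a) justifying that the discrete-to-continuous convergence holds \emph{uniformly} in the positions of the marked points as long as they stay $\epsilon$-separated from each other and from $\partial\Omega$, which requires uniform versions of the $s$-holomorphic a~priori bounds with explicit dependence on the separation scale, and (b) verifying that the limiting boundary value problem has a unique solution for every admissible configuration of marked points, so that $\mathcal{A}_\Omega$ is well defined.
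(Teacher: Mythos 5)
Your overall architecture matches the paper's: define a discrete $s$-holomorphic spinor observable $F_{[\Omega_\delta,a;a_1,\dots,a_k]}$ branching at the $k+1$ marked points, use a combinatorial bijection (XOR with a short path) to identify its value at the corner $a+\tfrac{3\delta}{2}$ with the ratio $\mathbb{E}^+[\sigma_{a+2\delta}\cdots]/\mathbb{E}^+[\sigma_a\cdots]$ exactly, prove convergence of a suitably normalized $F_\delta$ to a continuous spinor $f_{[\Omega,a;a_1,\dots,a_k]}$ solving a Riemann--Hilbert boundary value problem, and define $\mathcal{A}_\Omega$ as the subleading coefficient in the expansion of $f$ at $a$. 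Step 3 (the conformal covariance of $\mathcal{A}$ from the weight-$\tfrac12$ covariance of $f$ via a local Taylor expansion of $(\varphi')^{1/2}$) is also exactly what the paper does.

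However, there is a genuine gap in your second step, and it is precisely the part the paper flags as ``rather delicate.'' You assert that the limits (\ref{eq:log-der-x})--(\ref{eq:log-der-y}) ``follow by reading off'' the coefficient $\mathcal{A}$ from the expansion of the scaling limit $f$. But the convergence you invoke is uniform on compact subsets of $\Omega\setminus\{a,a_1,\dots,a_k\}$, after dividing by a normalizing factor $\vartheta(\delta)\asymp\sqrt{\delta}$, whereas the quantity you need to control is the \emph{un-normalized} discrete value $F_\delta(a+\tfrac{3\delta}{2})$, at lattice distance $O(\delta)$ from the source, up to an error $o(\delta)$. The two statements live at different scales: knowing $\vartheta(\delta)^{-1}F_\delta\to f$ at fixed distance $\epsilon$ from $a$ gives no direct information about $F_\delta$ at distance $\tfrac{3\delta}{2}$, and $f$ itself blows up at $a$. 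To bridge this, the paper constructs explicit full-plane discrete spinors $F_{[\C_\delta,a]}$ and $G_{[\C_\delta,a]}$ (discrete analogues of $1/\sqrt{z-a}$ and $\Re\mathfrak{e}\sqrt{z-a}$, normalized so that $F_{[\C_\delta,a]}(a+\tfrac{3\delta}{2})=1$ and $G_{[\C_\delta,a]}(a+\tfrac{3\delta}{2})=\delta$), symmetrizes the observable with respect to reflection through the horizontal axis at $a$ so the real-lattice part becomes purely harmonic, subtracts $F_{[\C_\delta,a]}+2\Re\mathfrak{e}\,\mathcal{A}\cdot G_{[\C_\delta,a]}$, and then applies a discrete Beurling estimate in a slit neighborhood of $a$ to show that the resulting harmonic remainder, which vanishes identically on a cut and converges to something $O(|z-a|^{3/2})$ at mid-range, must be $O(\vartheta(\delta)\sqrt{\delta}\,r)$ at the target lattice point. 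This two-scale localization argument is the engine of the proof and does not appear in your proposal; without it, the step from ``convergence of the normalized observable away from $a$'' to ``second-order asymptotics of the un-normalized observable near $a$'' is unjustified. Your identified delicate points (uniformity in marked-point positions, well-posedness of the continuous BVP) are real but secondary; the core missing idea is the construction and use of the two full-plane discrete functions to cancel both the leading singularity and the first correction term before applying a Beurling estimate.
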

\begin{proof}
The proof is based on the convergence results for the discrete spinor observables. A rather delicate analysis is needed since we are interested in the values of observables near their singular points. See further details in Sections~\ref{sec:spinors-and-correlation}~and~\ref{sec:analysis-of-spinors}.
\end{proof}

Integrating the result of Theorem~\ref{thm:log-derivatives}, 
we get the following weaker form of the convergence result for the spin correlations. Note that the conformal covariance degree $\frac{1}{8}$ in (\ref{conf-cov}) is a direct consequence of the covariance rule (\ref{A-covariance}) for $\mathcal{A}_\Omega(a_1,\dots,a_n)$.

\begin{cor}
\label{cor:conv-up-to-normalization}
Under conditions of Theorem~\ref{thm:log-derivatives}, for any $n\ge 1$, there exist some normalizing factors $\varrho_{n}(\delta,\Omega_\delta)$ that might depend on $\Omega_\delta$ but not on the positions of the points $a_1,\dots,a_n$ such that
\[
\mathbb{E}_{\Omega_{\delta}}^{+}\left[\sigma_{a_1}\dots\sigma_{a_n}\right] ~\sim~ \varrho_{n}(\delta,\Omega_\delta)\cdot \left\langle \sigma_{a_1}\dots\sigma_{a_n}\right\rangle _{\Omega}^{+}
\]
as $\delta\to 0$, uniformly over all faces $a_1,\dots a_n\in\Omega_{\delta}$ at distance at least $\epsilon$ from $\partial\Omega$ and from each other.
\end{cor}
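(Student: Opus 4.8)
The plan is to reconstruct the correlation from its discrete logarithmic derivatives supplied by Theorem~\ref{thm:log-derivatives}. Fix $k$ and $\Omega$, and work on the region $\Omega^\epsilon:=\{z\in\Omega:\mathrm{dist}(z,\partial\Omega)>\epsilon\}$. The idea is to compare, for each configuration of points $a_1,\dots,a_k$, the ratio
\[
R_\delta(a;a_1,\dots,a_k):=\frac{\mathbb{E}_{\Omega_\delta}^+[\sigma_a\sigma_{a_1}\dots\sigma_{a_k}]}{\langle\sigma_a\sigma_{a_1}\dots\sigma_{a_k}\rangle_\Omega^+}
\]
and to show that its dependence on $a$ disappears in the limit. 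Concretely, one checks that moving $a$ to a neighbouring face $a\pm 2\delta$ or $a\pm 2i\delta$ multiplies $R_\delta$ by $1+o(\delta)$ uniformly: the numerator ratio is $1+2\delta\,\Re\mathfrak{e}\,\mathcal{A}_\Omega(a;\cdot)+o(\delta)$ (resp.\ with $-2\delta\,\Im\mathfrak{m}\,\mathcal{A}_\Omega$) by Theorem~\ref{thm:log-derivatives}, while the continuous denominator ratio has exactly the same first-order expansion because, by construction (integration of $\mathcal{A}_\Omega$), one has $\partial_x\log\langle\cdots\rangle_\Omega^+=\Re\mathfrak{e}\,\mathcal{A}_\Omega$ and $\partial_y\log\langle\cdots\rangle_\Omega^+=-\Im\mathfrak{m}\,\mathcal{A}_\Omega$; since $\langle\cdots\rangle_\Omega^+$ is smooth in $a\in\Omega^\epsilon$, its increment along a lattice step is also $1+2\delta\,\Re\mathfrak{e}\,\mathcal{A}_\Omega+o(\delta)$. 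Hence the single-step ratio of $R_\delta$ is $1+o(\delta)$ uniformly over faces in $\Omega^{\epsilon/2}$, say, with the points $a_i$ held $\epsilon$-separated.

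Next I would upgrade this infinitesimal statement to a genuine comparison of $R_\delta$ at two far-apart positions $a$ and $a'$. Connect $a$ to $a'$ inside $\Omega^{\epsilon'}$ (for a slightly smaller $\epsilon'$) by a lattice path of length $O(\mathrm{diam}\,\Omega/\delta)$; multiplying the per-step estimates gives $R_\delta(a';\cdot)/R_\delta(a;\cdot)=\prod(1+o(\delta))$. The product of $O(\delta^{-1})$ factors each equal to $1+o(\delta)$ need not converge to $1$ unless the $o(\delta)$ is genuinely $\delta\cdot o(1)$ with the $o(1)$ uniform; this is exactly what the uniform convergence in Theorem~\ref{thm:log-derivatives} provides (the error in each step is $\delta\,\eta(\delta)$ with $\eta(\delta)\to0$ independent of position), so $\sum_{\text{steps}}\delta\,\eta(\delta)=O(\mathrm{diam}\,\Omega)\cdot\eta(\delta)\to0$ and the product tends to $1$. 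Therefore $R_\delta(a;a_1,\dots,a_k)$ is asymptotically independent of $a$: writing $\varrho_{k+1}(\delta,\Omega_\delta):=R_\delta(a^\circ;a_1^\circ,\dots,a_k^\circ)$ for one fixed admissible choice of base points, we get $R_\delta(a;a_1,\dots,a_k)=\varrho_{k+1}(\delta,\Omega_\delta)(1+o(1))$ first with the $a_i$ also fixed.

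Finally one removes the $a_i$ from the base point: by symmetry of the correlation in all $k+1$ marked faces, the same argument applied successively to $a_1,\dots,a_k$ shows that moving each $a_i$ by a lattice step also changes $R_\delta$ by $1+o(\delta)$, so a chaining argument identical to the one above makes $R_\delta$ asymptotically independent of every $a_i$ as well. This yields
\[
\mathbb{E}_{\Omega_\delta}^+[\sigma_a\sigma_{a_1}\dots\sigma_{a_k}]\sim \varrho_{k+1}(\delta,\Omega_\delta)\cdot\langle\sigma_a\sigma_{a_1}\dots\sigma_{a_k}\rangle_\Omega^+
\]
uniformly over $\epsilon$-separated, $\epsilon$-interior configurations, which is the claim; note $\varrho_{k+1}(\delta,\Omega_\delta)$ depends only on $\delta,\Omega$ and $k$, as asserted. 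The main obstacle is the control of the accumulated error along the chaining path: one must ensure the $o(\delta)$ per step is uniform not only in the face position but also jointly in all the other marked points staying $\epsilon$-separated, and that the path can be kept at definite distance from $\partial\Omega$ and from the other $a_i$'s (shrinking $\epsilon$ slightly at each stage) while having length $O(\delta^{-1})$; this is a routine but careful use of the uniformity built into Theorem~\ref{thm:log-derivatives}, together with the smoothness and non-vanishing of the continuous correlation functions on compact subsets of $\Omega^\epsilon$ away from the diagonal.
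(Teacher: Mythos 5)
Your proposal takes essentially the same route as the paper: the paper's proof (Proposition~\ref{prop: conv_ratios}) also integrates the per-step logarithmic increments supplied by Theorem~\ref{thm:log-derivatives} along lattice paths, moving one marked point at a time, and uses the uniformity of the $o(1)$ error to ensure the accumulated error over $O(\delta^{-1})$ steps is $o(1)$. Your reformulation in terms of the ratio $R_\delta$ being asymptotically constant is equivalent to the paper's convergence of $\log\big(\E^+_{\Omega_\delta}[\cdots_b]/\E^+_{\Omega_\delta}[\cdots_a]\big)$ to $\int_\gamma\mathcal{L}_{\Omega,k+1}$.

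One small technical point you omit: on the rotated grid the faces split into two chessboard colours, and the lattice steps $a\mapsto a\pm 2\delta$, $a\mapsto a\pm 2i\delta$ covered by Theorem~\ref{thm:log-derivatives} preserve the colour, so chaining along such moves only compares positions within one colour class. The paper handles the other class by first invoking Remark~\ref{rem:ratios-spinspin-nearby}, which says the ratio of correlations at two \emph{adjacent} (opposite-colour) faces tends to $1$; you would need the same fact to conclude that $R_\delta$ is asymptotically constant over \emph{all} faces and not merely over those of a fixed colour. Also, your appeal to $\partial_x\log\langle\cdots\rangle_\Omega^+=\Re\mathfrak{e}\,\mathcal{A}_\Omega$ implicitly presumes that $\mathcal{L}_{\Omega,k+1}$ is exact, i.e.\ that the continuous correlation function exists as stated; in the paper this exactness is established as a byproduct of precisely the chaining estimate you are running, so the two facts should be proven together rather than the former being taken as given.
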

\begin{proof}
See Section~\ref{sub:expl_formulae}
\end{proof}

We now focus on the special case $n=2$. The next theorem is a crucial tool which allows us to compare the normalizing factors $\varrho_2(\delta,\Omega_\delta)$ with the full-plane case. We denote by $\mathbb{E}_{\Omega_{\delta}^\bullet}^{\mathrm{free}}$ the expectation for the critical Ising model defined on the \emph{vertices} of $\Omega_\delta$ (with free boundary conditions).
\begin{thm}
\label{thm:ratio-free-plus}
Let $\Omega$ be a bounded simply connected domain and $\Omega_\delta$ be a discretization of $\Omega$ by the square grid $\delta(1\!+\!i)\mathbb{Z}^2$.
Then, for any $\epsilon>0$, we have
\[
\frac{\mathbb{E}_{\Omega_{\delta}^\bullet}^{\mathrm{free}}\left[\sigma_{a+\delta}\sigma_{b+\delta}\right]}{\mathbb{E}_{\Omega_{\delta}}^{+}\left[\sigma_{a}\sigma_{b}\right]} ~\underset{\delta\to0}{\longrightarrow}~
\mathcal{B}_{\Omega}(a,b),
\]
uniformly over all faces $a,b$ at distance at least $\epsilon$ from $\partial\Omega$ and from each other, where $\mathcal{B}_{\Omega}(a,b)$ is a conformal invariant of $(\Omega,a,b)$ which can be explicitly written as $\mathcal{B}_{\Omega}(a,b)={\left\langle \sigma_{a}\sigma_{b}\right\rangle_{\Omega}^{\mathrm{free}}}/{\left\langle \sigma_{a}\sigma_{b}\right\rangle_{\Omega}^{+}}$.
\end{thm}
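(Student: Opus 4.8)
The plan is to rewrite the ratio as a quotient of two correlations in one and the same model, recognize both of them as quantities controlled by the discrete holomorphic spinor observables of Section~\ref{sec:spinors-and-correlation}, and then pass to the scaling limit. First, by the Kramers--Wannier duality at criticality (self-duality of the square grid), the free-boundary Ising model on the vertices of $\Omega_\delta$ is dual to the $+$-boundary model on the faces of $\Omega_\delta$, and under this duality a two-spin correlation becomes a two-disorder correlation: concretely $\mathbb{E}_{\Omega_\delta^\bullet}^{\mathrm{free}}[\sigma_{a+\delta}\sigma_{b+\delta}] = \mathbb{E}_{\Omega_\delta}^{+}[\mu_{a+\delta}\mu_{b+\delta}]$, the duality being an identity of partition functions up to an explicit overall factor which cancels in the expectations. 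Hence
\[
\frac{\mathbb{E}_{\Omega_{\delta}^\bullet}^{\mathrm{free}}\left[\sigma_{a+\delta}\sigma_{b+\delta}\right]}{\mathbb{E}_{\Omega_{\delta}}^{+}\left[\sigma_{a}\sigma_{b}\right]}
~=~ \frac{\mathbb{E}_{\Omega_{\delta}}^{+}\left[\mu_{a+\delta}\mu_{b+\delta}\right]}{\mathbb{E}_{\Omega_{\delta}}^{+}\left[\sigma_{a}\sigma_{b}\right]},
\]
a ratio of a disorder--disorder and a spin--spin correlation in the $+$-boundary face model.

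Both correlations on the right-hand side are encoded, as in Section~\ref{sec:spinors-and-correlation}, by the discrete s-holomorphic spinor observable living on the double cover of $\Omega_\delta$ branched over $a$ and $b$: its two independent singular coefficients near the branch point $b$ (of order $(z-b)^{-1/2}$ and $(z-b)^{+1/2}$) reproduce, up to identical lattice normalizations, the spin--spin and the disorder--disorder correlations respectively. Thus the displayed ratio is exactly a ratio of two coefficients of a single discrete observable, and all lattice-dependent constants --- in particular Wu's constant $\mathcal{C}_2$ --- cancel; this is reinforced by the fact that $a$ and $b$ lie at definite distance from $\partial\Omega$, so that the immediate lattice neighbourhoods of the two insertions are indistinguishable from full-plane ones for both correlations.

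It then remains to pass to the limit. By the convergence results of Section~\ref{sec:analysis-of-spinors} --- uniform s-holomorphicity and discrete Cauchy--Riemann estimates, precompactness of the renormalized family and identification of subsequential limits through the boundary value problem, together with the fine near-source asymptotics already developed for Theorem~\ref{thm:log-derivatives} --- the discrete observable converges, after normalization, to the continuous holomorphic spinor on $\Omega$ solving the conformally natural Riemann-type boundary value problem with order-$\tfrac12$ singularities at $a$ and $b$; this is precisely the object of Section~\ref{sub:cont-spinors} through which $\langle\sigma_a\sigma_b\rangle_\Omega^{+}$ and $\langle\sigma_a\sigma_b\rangle_\Omega^{\mathrm{free}}$ are defined. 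Taking the ratio of the two singular coefficients of this continuous observable at $b$ yields $\mathcal{B}_\Omega(a;b) = \langle\sigma_a\sigma_b\rangle_\Omega^{\mathrm{free}}/\langle\sigma_a\sigma_b\rangle_\Omega^{+}$, with the explicit value given by (\ref{2ptFcts}). Conformal invariance of $\mathcal{B}_\Omega(a;b)$ is then immediate: by (\ref{eq: conf_conv_2}) and the analogous covariance of the free two-point function, numerator and denominator are conformal covariants of the same weight $\tfrac18$ in each of $a$ and $b$, so these weights cancel and $\mathcal{B}_\Omega(a;b)$ depends only on the conformal class of $(\Omega,a,b)$.

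The principal difficulty lies in the analytic input of the last paragraph: one must control the discrete spinor observable uniformly up to its two singular points and extract the two competing singular coefficients with the correct $\delta$-scaling, so that no residual lattice constant survives to distinguish the disorder--disorder (equivalently free) contribution from the spin--spin ($+$) one. This is exactly where the delicate estimates near the sources of Section~\ref{sec:analysis-of-spinors} are indispensable; by contrast the combinatorial reduction via duality and the verification of conformal invariance are comparatively soft.
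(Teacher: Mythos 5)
Your plan follows the paper's actual route: Kramers--Wannier duality to rewrite the ratio, the spinor observable $F_{[\Omega_\delta,a;b]}$, a convergence theorem for this observable near its branch point $b$, conformal invariance of the limit, and finally the explicit half-plane formula identifying the limit with $\langle\sigma_a\sigma_b\rangle^{\mathrm{free}}_\Omega/\langle\sigma_a\sigma_b\rangle^{+}_\Omega$. This is exactly the paper's chain (Lemma~\ref{lem:ratios-disc-magnet-spinor}, eq.~(\ref{eq:spinor-duality}), Theorem~\ref{thm:localization-near-b}, and the explicit computation of $\mathcal{B}_\mathbb{H}(a;b)$ in Section~\ref{sub:12pts-computations}).

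However, the middle paragraph misdescribes the mechanism, and it is worth being precise because the misdescription hides what is actually soft and what is actually hard. You write that the two correlations appear as ``two independent singular coefficients near the branch point $b$'' so that the desired quantity ``is exactly a ratio of two coefficients of a single discrete observable.'' That is not how the observable encodes things. The discrete spinor $F_{[\Omega_\delta,a;b]}$ is, by its very definition~(\ref{spinor-corners-def}), already normalized by $\mathcal{Z}^+_{\Omega_\delta}[\sigma_a\sigma_b]$, so the spin--spin correlation is built into the normalization and not into any expansion coefficient at $b$. The Kramers--Wannier duality step then gives a clean pointwise identity, not a ratio of two coefficients: equation~(\ref{eq:spinor-duality}) says $F_{[\Omega_\delta,a;b]}(b+\tfrac{\delta}{2}) = \pm i\,\mathbb{E}^{\mathrm{free}}_{\Omega_\delta^\bullet}[\sigma_{a+\delta}\sigma_{b+\delta}]/\mathbb{E}^+_{\Omega_\delta}[\sigma_a\sigma_b]$. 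So the quantity you want to control is a single value of a single observable, and the entire analytic burden is to show that this value converges to $\pm i\mathcal{B}_{[\Omega,a;b]}$ --- this is Theorem~\ref{thm:localization-near-b}, which requires only first-order matching near $b$ (easier than the second-order matching near $a$ used for Theorem~\ref{thm:log-derivatives}, not a reuse of it as your wording suggests). The remark about Wu's constant $\mathcal{C}_2$ ``cancelling'' is a red herring: no such constant ever enters, because the observable is already normalized at the level of its definition. Finally, your last step only shows $\mathcal{B}_\Omega(a;b)$ is a conformal invariant from the covariance~(\ref{eq: obs_covariance}); the identification of its explicit value as the free-to-plus ratio of two-point functions still requires the explicit half-plane formula~(\ref{eq: 2pts-spinor}) and the resulting expression for $\mathcal{B}_\mathbb{H}(a;b)$, which you should name rather than fold into ``immediate.''
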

\begin{proof}
The proof is based on the convergence results for the discrete spinor observables and the Kramers-Wannier duality,
see further details in Sections~\ref{sub:discrete-spinors-to-ratios-corr},~\ref{sub:convergence-of-spinors}. 
\end{proof}

\begin{proof}[Sketch of the proof of Theorem~\ref{thm:2pts}]
Having the results of Corollary~\ref{cor:conv-up-to-normalization} and Theorem~\ref{thm:ratio-free-plus}, we only need to prove that $\varrho_2(\delta,\Omega_\delta)\sim \varrho(\delta)$ as $\delta\to 0$. The classical FKG inequality gives
\[
\mathbb{E}_{\Omega_{\delta}^\bullet}^{\mathrm{free}}\left[\sigma_{a+\delta}\sigma_{b+\delta}\right]\le \mathbb{E}_{\mathbb{C}_{\delta}}\left[\sigma_{a}\sigma_{b}\right]\le \mathbb{E}_{\Omega_{\delta}}^{+}\left[\sigma_{a}\sigma_{b}\right]
\] and it is easy to see that $\mathcal{B}_{\Omega}(a,b)\to 1$ as $b$ approaches $a$. Since the normalizing factors $\varrho_2(\delta,\Omega_\delta)$ do not depend on the positions of $a,b\in\Omega$, we conclude that $\mathbb{E}_{\Omega_{\delta}}^{+}\left[\sigma_{a}\sigma_{b}\right]\sim \mathbb{E}_{\mathbb{C}_{\delta}}\left[\sigma_{a}\sigma_{b}\right]$ in the double limit when we first let $\delta\to 0$ and then $b\to a$. This relates $\varrho_2(\delta,\Omega_\delta)$ to the full-plane normalization,  
see details in Section~\ref{sub:ratios-to-thm-2pts}.
\end{proof}

\begin{proof}[Sketch of the proof of Theorem~\ref{thm:1-3pts}]
Once the asymptotics $\varrho_2(\delta,\Omega_\delta)\sim\varrho(\delta)$ as {$\delta\to 0$} is established, we derive the asymptotics of all the other $\varrho_{n}(\delta,\Omega_\delta)$, $n\ne 2$, using the following observation: as the point $a_1$ approaches $\partial\Omega$, the continuous correlation functions behave in the following way:
\begin{equation}
\label{eq: dec-identity-intro}
\langle\sigma_{a_1}\dots\sigma_{a_n}\rangle_\Omega^+\sim \langle\sigma_{a_1}\rangle_\Omega^+ \langle\sigma_{a_2}\dots\sigma_{a_n}\rangle_\Omega^+
\end{equation}
and the same decorrelation result $\E_{\Omega_\delta}^{+}[\sigma_{a_1}\dots\sigma_{a_n}]\sim\E_{\Omega_\delta}^{+}[\sigma_{a_1}]\E_{\Omega_\delta}^{+}[\sigma_{a_2}\dots\sigma_{a_n}]$ holds true in the double limit $\delta\to 0$ and $a_1\to\partial\Omega$. This implies the recurrent formula $\varrho_{n+1}(\delta,\Omega_\delta)\sim \varrho_1(\delta,\Omega_\delta)\varrho_n(\delta,\Omega_\delta)$ for $n=1,2,\dots$ and, further, $\varrho_{n}(\delta,\Omega_\delta)\sim(\varrho(\delta))^{n/2}$ for all $n$, see further details in Section \ref{sub:proof-thm-k-pts}.
\end{proof}

\subsection{Organization of the paper}
Section~\ref{sec:spinors-and-correlation} 
contains all the main ideas. Details, especially those involving hard discrete complex analysis techniques, are mostly postponed to Section~\ref{sec:analysis-of-spinors}. The readers not interested in these details may restrict themselves to Section~\ref{sec:spinors-and-correlation} only.

We fix the notation in Section~\ref{sub:notation-and-definitions}. The main tool of this paper, the discrete holomorphic spinor observables, is introduced and discussed in Sections~\ref{sub:discrete-spinors} and~\ref{sub:s-holomorphicity}. In Section~\ref{sub:discrete-spinors-to-ratios-corr}, we prove that the ratios of spin correlations that appear in Theorems~\ref{thm:log-derivatives} and~\ref{thm:ratio-free-plus} can be expressed in terms of these observables. In Sections~\ref{sub:cont-spinors} and~\ref{sub:convergence-of-spinors}, we discuss the continuous counterparts of the discrete observables and state the convergence Theorems \ref{thm:obs-away},~\ref{thm:localization-near-a} and~\ref{thm:localization-near-b}, that easily imply Theorems \ref{thm:log-derivatives} and~\ref{thm:ratio-free-plus}. We do some explicit computaions and prove Corollary~\ref{cor:conv-up-to-normalization} in Section \ref{sub:expl_formulae}. We complete the proofs of Theorems~\ref{thm:2pts} and~\ref{thm:1-3pts} in Sections~\ref{sub:ratios-to-thm-2pts} and~\ref{sub:proof-thm-k-pts}, respectively.

Section~\ref{sec:analysis-of-spinors} is devoted to the proof of Theorems \ref{thm:obs-away},~\ref{thm:localization-near-a} and~\ref{thm:localization-near-b}. We discuss the discrete properties of our observables and their full-plane analogue in Sections~\ref{sub:s-holomorphicity-proof}--\ref{sub: riemann-problem}, and finish the proof of the main convergence theorems in Sections~\ref{sub:convergence-away-sing} and~\ref{sub:convergence-near-sing}. In the Appendix, we explicitly compute $\mathcal{A}_{\bH}(a_1,\dots,a_n)$ for arbitrary $n$, which completes the proof of the formula (\ref{12ptFcts}).

\medskip \noindent {\bf Acknowledgements.}
All three authors wish to thank Stanislav Smirnov for introducing them to the Ising model and discrete complex analysis, for sharing his ideas and for constant support. We also would like to thank St\'ephane Benoist, David Cimasoni, Julien Dub\'edat, Hugo Duminil-Copin, Pierluigi Falco, Christophe Garban, Nikolai Makarov, Barry McCoy, Charles Newman, John Palmer, Yuval Peres, Mikhail Sodin, Tom Spencer and Yvan Velenik for many helpful discussions and comments.

The work of Dmitry Chelkak and Konstantin Izyurov was supported by the Chebyshev Laboratory (Department of Mathematics and Mechanics, St.-Petersburg State University) under RF Government grant 11.G34.31.0026. Dmitry Chelkak was partly supported by P.~Deligne's 2004 Balzan prize in Mathematics.
Cl\'ement Hongler was supported by the National Science Foundation under grant DMS-1106588 and by the Minerva Foundation.
Konstantin Izyurov was partially supported by ERC AG CONFRA, the Swiss National Science Foundation, and the Academy of Finland.

\section{\label{sec:spinors-and-correlation}Holomorphic Spinors and Correlation Functions}

\subsection{\label{sub:notation-and-definitions}Notation} We start by fixing the notation which is used throughout the paper.

\begin{figure}

\centering{\includegraphics[height=0.33\textheight]{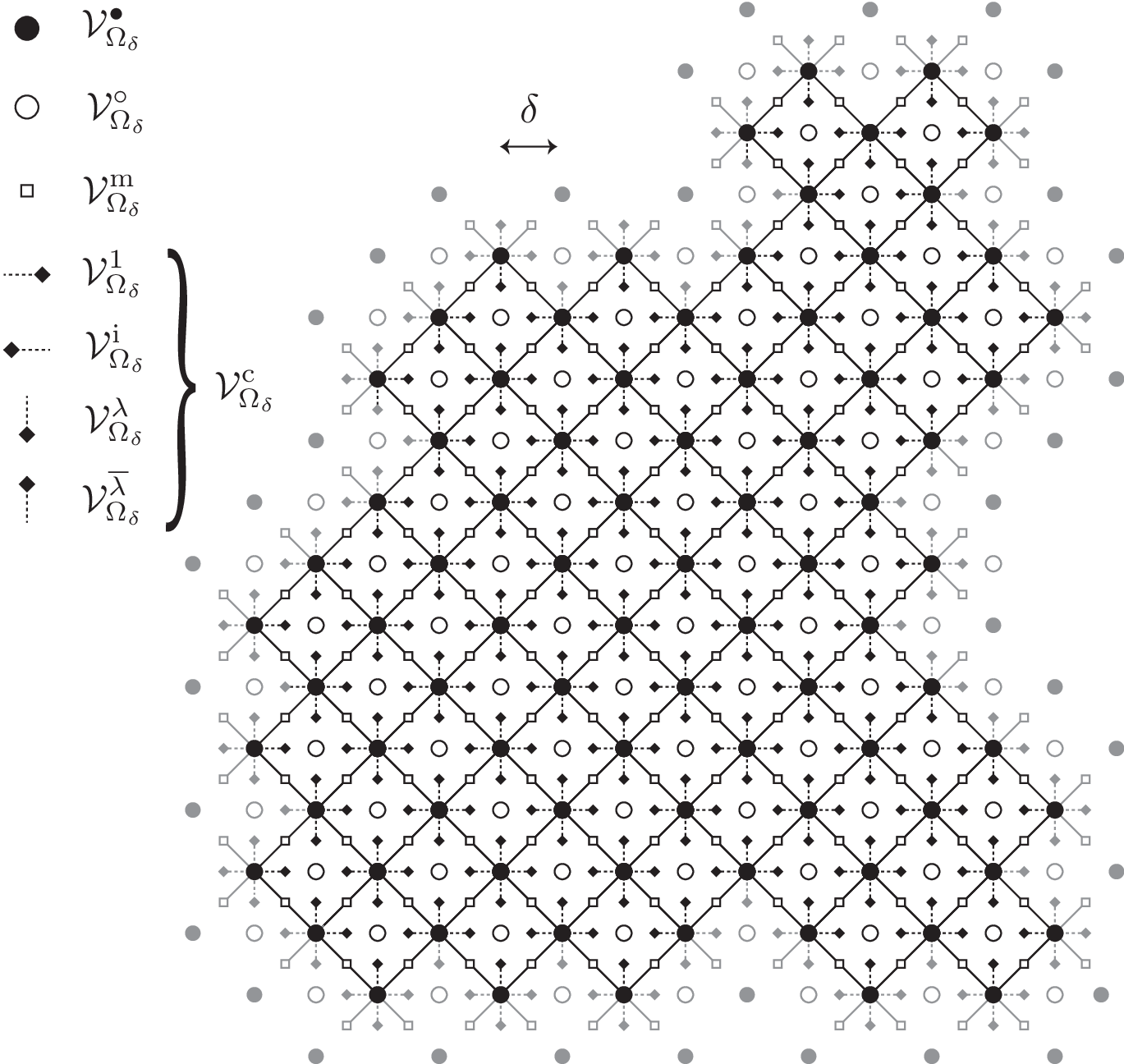}}

\caption{\label{Fig:discrete-domain} An example of a discrete domain $\Omega_\delta$ and notation for the sets of vertices ($\mathcal{V}_{\Omega_\delta}^\bullet$), faces ($\mathcal{V}_{\Omega_\delta}^\circ$), edge midpoints ($\mathcal{V}_{\Omega_\delta}^\mathrm{m}$) and (four types) of corners ($\mathcal{V}_{\Omega_\delta}^\mathrm{c}=\mathcal{V}_{\Omega_\delta}^\mathrm{1}\cup\mathcal{V}_{\Omega_\delta}^\mathrm{i}\cup \mathcal{V}_{\Omega_\delta}^\mathrm{\lambda}\cup\mathcal{V}_{\Omega_\delta}^\mathrm{\overline{\lambda}}$). The mesh size $\delta$ is a half-diagonal of
a square face, thus the distance between adjacent spins is $\sqrt{2}\delta$. The inner vertices, faces, edges and corners are colored black, while the boundary ones are colored gray.}
\end{figure}

\subsubsection{Graph notation}
Recall that we work on the square grid rotated by 45\textdegree{}
\[
\mathbb{C}_{\delta}:= \left\{\delta\left(1+i\right)\left(m+in\right):m,n\in\mathbb{Z}\right\}.
\]
The mesh size $\delta$ is hence the size of a half-diagonal of a square face. We often will identify the vertices of $\mathbb{C}_{\delta}$ with the corresponding complex numbers, the faces of $\mathbb{C}_{\delta}$ with their centers, the edges of $\mathbb{C}_{\delta}$ with their midpoints, etc.

We call \emph{discrete domain of mesh size $\delta$} a union of grid faces (see also Figure~\ref{Fig:discrete-domain} for notation given below). We say that $\Omega_{\delta}$ is \emph{simply connected} if the corresponding polygonal domain is simply connected.

\begin{itemize}
\item We denote by ${\mathrm{Int}\mathcal{V}_{\Omega_{\delta}}^{\circ}}$ the set of all \emph{faces} belonging to $\Omega_{\delta}$ (which are identified with their centers), by ${\mathrm{Int}\mathcal{V}_{\Omega_{\delta}}^{\bullet}}$ the set of all \emph{vertices} incident to these faces, and by ${\mathrm{Int}\mathcal{V}_{\Omega_{\delta}}^{\mathrm{m}}}$ the set of all \emph{edges} incident to $\mathcal{V}_{\Omega_{\delta}}^{\circ}$, which are identified with their midpoints (or the vertices of a medial lattice).
\end{itemize}

In order to simplify the presentation, we also assume that \emph{$\C_\delta\setminus\Omega_\delta$ has no fiords of a single face width}, i.e., all the edges joining vertices from $\mathrm{Int}\mathcal{V}_{\Omega_{\delta}}^{\bullet}$ belong to $\mathrm{Int}\mathcal{V}_{\Omega_{\delta}}^{\mathrm{m}}$. This technical assumption can be easily relaxed, if necessary.

\begin{itemize}
\item We denote by ${\partial\mathcal{V}_{\Omega_{\delta}}^{\circ}}$, ${\partial\mathcal{V}_{\Omega_{\delta}}^{\bullet}}$ and ${\partial\mathcal{V}_{\Omega_{\delta}}^{\mathrm{m}}}$ the sets of \emph{boundary faces, vertices and edges}, i.e. those faces, vertices and edges which are incident but do not belong to $\mathrm{Int}\mathcal{V}_{\Omega_{\delta}}^{\circ}$, $\mathrm{Int}\mathcal{V}_{\Omega_{\delta}}^{\bullet}$ and $\mathrm{Int}\mathcal{V}_{\Omega_{\delta}}^{\mathrm{m}}$, respectively (see Figure~\ref{Fig:discrete-domain}).
\item We set $\mathcal{V}_{\Omega_{\delta}}^{\circ}:=\mathrm{Int}\mathcal{V}_{\Omega_{\delta}}^{\circ}\cup\partial\mathcal{V}_{\Omega_{\delta}}^{\circ}$ etc.
\end{itemize}

Below we also need to work with four \emph{corners} of a given square face separately (in particular, see the Definition~\ref{def:discrete-spinor} below). For a given vertex $v\in\mathbb{C}_\delta$, we identify the nearby corners with the points $v\pm\frac{1}{2}\delta$ and $v\pm\frac{1}{2}\delta i$ on the complex plane.
\begin{itemize}
\item We denote by ${\mathcal{V}_{\Omega_{\delta}}^{\mathrm{c}}}$ the set of all corners incident to the vertices from $\mathrm{Int}\mathcal{V}_{\Omega_{\delta}}^{\bullet}$. We also set ${\mathcal{V}_{\Omega_{\delta}}^{\mathrm{cm}}}:=\mathcal{V}_{\Omega_{\delta}}^{\mathrm{c}}\cup\mathcal{V}_{\Omega_{\delta}}^{\mathrm{m}}$.
\item We use the notation $x\sim y$ if each of $x,y$ is either a vertex, a face, an edge or a corner, and they are adjacent or incident to each other.
\end{itemize}

\subsubsection{Double covers}\label{sub:double-covers}
In this paper we often deal with holomorphic functions (both discrete and continuous) which are defined on a double cover of a planar domain $\Omega$ and with opposite signs on opposite sheets (i.e., they have  a $-1$ multiplicative monodromy around branching points). We call such functions \emph{holomorphic spinors}. The following notation will be used below:
\begin{itemize}
\item For a planar domain $\Omega$ and $a\in\Omega$, we denote by ${\left[\Omega,a\right]}$
the double cover of $\Omega\setminus\left\{ a\right\}$  branching around $a$. All such double covers are naturally
viewed as subdomains of $\left[\mathbb{C},a\right]$. We often identify points on a double cover with their base points (so each base point is identified with the two points above it).
\item For several marked points $a_1,\dots,a_n\in\Omega$, we denote by $\left[\Omega,a_1,\dots,a_n\right]$
    the double cover of $\Omega\setminus\{a_1,\dots,a_n\}$ which branches around each of $a_1,\dots,a_n$.
\item We will often compare spinors defined on $\left[\Omega,a_1,\dots,a_n\right]$ with those defined on $[\C,a_1]$ (e.g., with 1/$\sqrt{z-a_1}$ or $\sqrt{z-a_1}$) near the common branching point $a_1$. Such equations are understood to be valid in a small neighborhood of $a_1$, together with the natural correspondence between the sheets of both covers.
\end{itemize}
We will also consider double covers of discrete domains. In this case the following (slightly modified) notation will be convenient:
\begin{itemize}
\item For a discrete domain $\Omega_{\delta}$ and a \emph{face} $a\in\mathrm{Int}\mathcal{V}_{\Omega_{\delta}}^{\circ}$, we set $\left[\Omega_{\delta},a^{\rightarrow}\right]:=\left[\Omega_{\delta},a\right]\setminus\left\{a+\frac{\delta}{2}\right\}$, excluding both points over the corner $a+\frac{\delta}{2}$ from the natural double cover branching at $a$. Similarly, we set $\left[\Omega_{\delta},a_1^{\rightarrow},a_2,\dots a_n\right]:=\left[\Omega_{\delta},a_1,\dots,a_n\right]\setminus\left\{a_1+\frac{\delta}{2}\right\}$, if several faces $a_1,\dots,a_n\in\mathrm{Int}\mathcal{V}_{\Omega_{\delta}}^{\circ}$ are marked.
\end{itemize}

\subsubsection{Contours}\label{sub:contours}
Recall that we consider the critical Ising model on the \emph{faces} of $\Omega_\delta$. In order to define the main tool of the paper (the discrete holomorphic spinors), we need some additional notation related to the contour representation of the model known as the \emph{low-temperature expansion}, see \cite[Chapter 1]{palmer}.
\begin{itemize}
\item We denote by $\mathcal{C}_{\Omega_{\delta}}$ the family of all collections of closed
contours on $\Omega_{\delta}$, i.e. the family of subsets of edges $\omega\subset\mathcal{V}_{\Omega_{\delta}}^\mathrm{m}$
such that every vertex $v\in\mathcal{V}_{\Omega_{\delta}}^{\bullet}$ belongs to an even number of edges in $\omega$.
\end{itemize}
The set $\mathcal{C}_{\Omega_{\delta}}$ is in a natural one-to-one correspondence with the spin configurations on $\Omega_\delta$
with $+$ boundary conditions: trace an edge between any two adjacent faces with opposite spins. Under this mapping, the probability of a collection of interfaces $\omega\subset\Omega_{\delta}$ becomes proportional to $\alpha_\mathrm{c}^{\#\mathrm{edges}\left(\omega\right)}$,
where $\alpha_\mathrm{c}=\exp\left(-2\beta_\mathrm{c}\right)=\sqrt{2}-1$.

\smallskip

Below we also introduce families of contour collections which, besides a number of closed loops, contain a single path running from one fixed corner $x$ to another corner or an edge midpoint $y$:
\begin{itemize}
\item For $x,y\in\mathcal{V}_{\Omega_{\delta}}^{\mathrm{cm}}$, let $\pi_{x,y}=x\sim v_{1}\sim\ldots \sim v_{n}\sim y$
be some simple lattice path with $v_{1},\ldots,v_{n}\in\mathcal{V}_{\Omega_{\delta}}^{\bullet}$.
We set $\mathcal{C}_{\Omega_{\delta}}\left(x,y\right):=\{\omega\oplus\pi_{x,y},\omega\in\mathcal{C}_{\Omega_{\delta}}\}$,
where $\oplus$ denotes the XOR, or symmetric difference. It is easy to see that the set $\mathcal{C}_{\Omega_{\delta}}\left(x,y\right)$ does not depend on the particular choice of $\pi_{x,y}$. Note that, for any  $\gamma\in\mathcal{C}_{\Omega_{\delta}}\left(x,y\right)$, there exists a (non-unique) decomposition of $\gamma$ into a collection of disjoint, simple loops and a path  $\mathrm{p}\left(\gamma\right)\subset\gamma$ running from $x$ to~$y$. By a decomposition we mean that each edge in $\gamma\in\mathcal{C}_{\Omega_{\delta}}\left(x,y\right)$ belongs to exactly one loop (or to $\mathrm{p}(\gamma)$) and is visited only once, and that there are no transversal intersections or self-intersections (see Figure~\ref{Fig:spinor}).
\end{itemize}

\subsection{\label{sub:discrete-spinors}Construction of the discrete spinor observables}

Now we are ready to introduce the discrete spinor observables. The following definition generalizes the construction given in \cite{chelkak-izyurov} to the case when a ``source point'' is inside $\Omega_\delta$.

\begin{defn}\label{def:discrete-spinor} Let $\Omega_{\delta}$ be a discrete domain and $a_1,\dots,a_n\in\mathrm{Int}\mathcal{V}_{\Omega_{\delta}}^{\circ}$
be inner faces. For a corner $z\in\mathcal{V}_{\left[\Omega_{\delta},a_1^{\rightarrow},\dots,a_n\right]}^{\mathrm{c}}$ (below we also extend this definition to edge midpoints, see Remark~\ref{rem:spinor-def}(iii)), we define
\begin{equation}
\label{spinor-corners-def} F_{\left[\Omega_{\delta},a_1,\dots,a_n\right]}\left(z\right) ~:=~ \frac{1}{\mathcal{Z}_{\Omega_{\delta}}^+\left[\sigma_{a_1}\dots\sigma_{ a_n}\right]}\sum_{\gamma\in\mathcal{C}_{\Omega_{\delta}}(a_1+\frac{\delta}{2},z)}\alpha_{\mathrm{c}}^{\#\mathrm{edges}\left(\gamma\right)}
\cdot\phi_{a_1,\dots,a_n}\left(\gamma,z\right),
\end{equation}
where
\begin{itemize}
\item $\#\mathrm{edges}\left(\gamma\right)$ is the number of full
edges contained in $\gamma$ and $\alpha_{\mathrm{c}}=\sqrt{2}-1$;
\item the complex phase $\phi_{a_1,\dots,a_n}(\gamma,z)$ is defined by (see also Figure~\ref{Fig:spinor})
\[
\phi_{a_1,\dots,a_n}\left(\gamma,z\right):= e^{-\frac{i}{2}\mathrm{wind}\left(\mathrm{p}\left(\gamma\right)\right)} \cdot \left(-1\right)^{\#\mathrm{loops}_{a_1,\dots,a_n}\left(\gamma\setminus\mathrm{p}\left(\gamma\right)\right)}\cdot \mathrm{sheet}_{a_1,\dots,a_n}\left(\mathrm{p}\left(\gamma\right),z\right),
\]
where, for a decomposition of $\gamma$ mentioned in Section~\ref{sub:contours},
\begin{itemize}
\item $\mathrm{wind}\left(\mathrm{p}\left(\gamma\right)\right)$ is
the total winding (increment of the argument of the tangent vector) of the path $\mathrm{p}\left(\gamma\right)$ when going from
$a_1+\frac{\delta}{2}$ to $z$,
\item $\#\mathrm{loops}_{a_1,\dots,a_n}\left(\gamma\setminus\mathrm{p}\left(\gamma\right)\right)$
is the number of loops in $\gamma\setminus\mathrm{p}\left(\gamma\right)$
that contain an odd number of marked points $a_1,\dots,a_n$ (equivalently, that do not lift to the double cover $[\Omega_\delta,a_1,\dots,a_n]$ as closed loops),
\item the last factor $\mathrm{sheet}_{a_1,\dots,a_n}\left(\mathrm{p}\left(\gamma\right),z\right)$
is equal to $+1$ if $z$ is on the same sheet of $\left[\Omega_{\delta},a_1,\dots,a_n\right]$ as the end of the lift of $\mathrm{p}\left(\gamma\right)$,
and to $-1$ otherwise (more precisely, we fix one of the two points lying over the ``source'' $a_1+\frac{\delta}{2}$ once forever and identify all other $z\in[\Omega_\delta,a_1,\dots,a_n]$ with paths running from \emph{this} $a_1+\frac{\delta}{2}$ to $z$ modulo homotopy and an appropriate index $2$ subgroup of the fundamental group);
\end{itemize}
\item the normalizing factor $\mathcal{Z}_{\Omega_{\delta}}^+\left[\sigma_{a_1}\dots\sigma_{a_n}\right]$
is defined by
\begin{eqnarray}
\label{eq: Z+=sum}
\mathcal{Z}_{\Omega_{\delta}}^+\left[\sigma_{a_1}\dots\sigma_{a_n}\right] & := & \sum_{\omega\in\mathcal{C}_{\Omega_{\delta}}}\alpha_{\mathrm{c}}^{\#\mathrm{edges}\left(\omega\right)} \left(-1\right)^{\#\mathrm{loops}_{a_1,\dots,a_n}\left(\omega\right)}.
\end{eqnarray}
\end{itemize}
\end{defn}

\begin{figure}

\centering{\includegraphics[height=0.33\textheight]{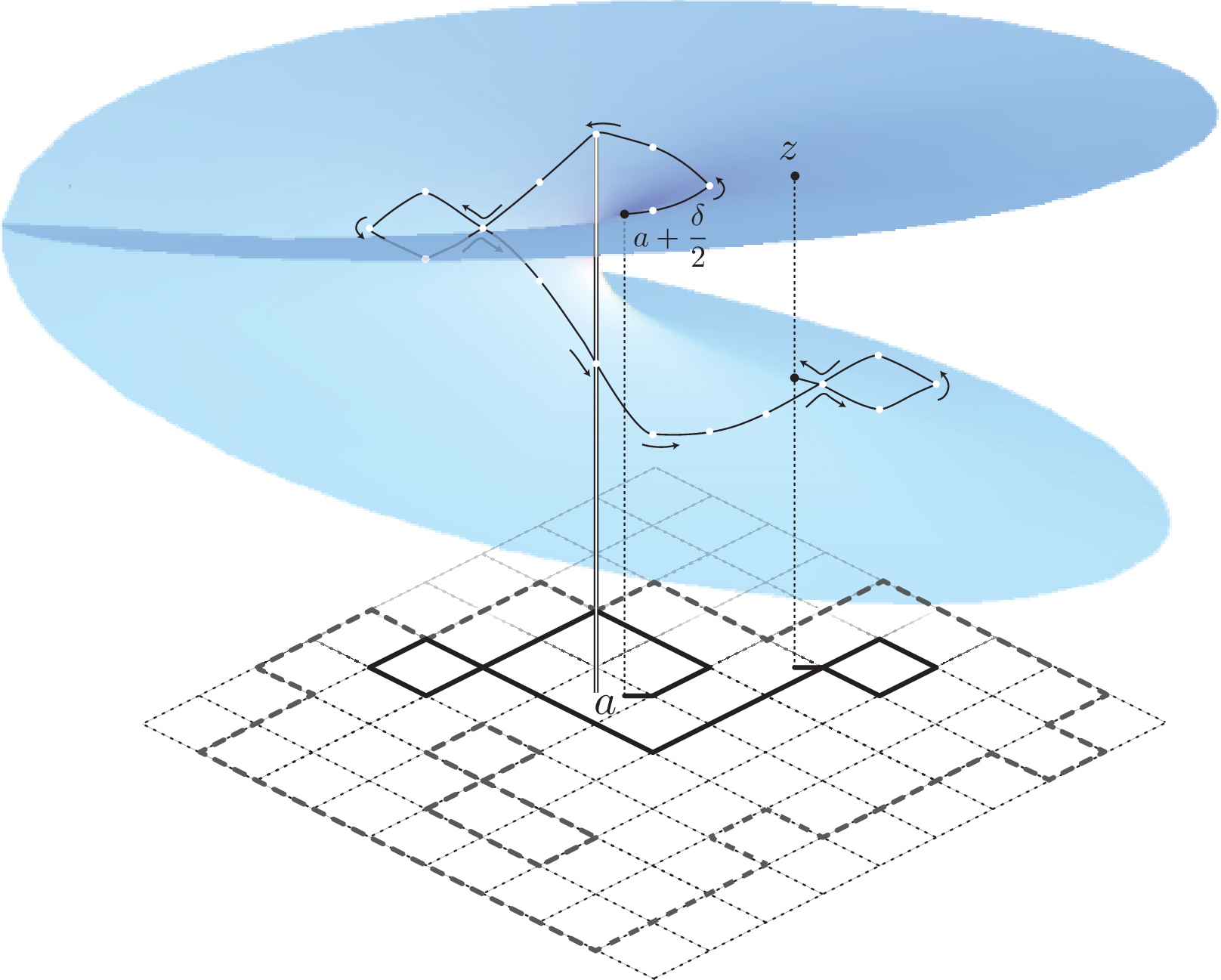}}

\caption{\label{Fig:spinor} A contour collection $\gamma\in\mathcal{C}_{\Omega_\delta}(a+\frac{\delta}{2},z)$ decomposed into non-intersecting loops (dashed) and a path $\mathrm{p}(\gamma)$. Running from $a+\frac{\delta}{2}$ to the projection of $z$, this path makes a $3\pi$ turn counterclockwise, thus $e^{-\frac{i}{2}\wind(\mathrm{p}(\gamma))}=i$. There is a single loop in $\gamma$ surrounding $a$, hence $(-1)^{\mathrm{loops}_a(\gamma\setminus \mathrm{p}(\gamma))}=-1$. Being lifted to the double cover $[\Omega_\delta,a]$, this path ends on the other sheet, thus $\mathrm{sheet}_a(\mathrm{p}(\gamma),z)=-1$, and $\phi_a(\gamma,z)=i$.}
\end{figure}

\begin{rem}\label{rem:spinor-def}
(i) For any $\omega\in\mathcal{C}_{\Omega_{\delta}}$, the sign $(-1)^{\#\mathrm{loops}_{a_1,\dots,a_n}\left(\omega\right)}$ coincides with the product of spins $\sigma_{a_1}\dots\sigma_{a_n}$ in the corresponding Ising model configuration with $+$ boundary values. Since $\alpha_{\mathrm{c}}^{\#\mathrm{edges}\left(\omega\right)}$ is just the Ising weight of $\omega$, one concludes that
\[
\mathcal{Z}_{\Omega_{\delta}}^+\left[\sigma_{a_1}\dots\sigma_{a_n}\right]= \mathbb{E}_{\Omega_{\delta}}^+\left[\sigma_{a_1}\dots\sigma_{a_n}\right]\cdot \mathcal{Z}_{\Omega_{\delta}}^+ >0,
\]
where $\mathcal{Z}_{\Omega_{\delta}}^+=\sum_{\omega\in\mathcal{C}_{\Omega_{\delta}}}\alpha_{\mathrm{c}}^{\#\mathrm{edges}}$ is the partition function of the model. 

\smallbreak

\noindent (ii) It is easy to check that the complex phase $\phi_{a_1,\dots,a_n}\left(\gamma,z\right)$ is independent of the choice of a decomposition of $\gamma$ into a path $\mathrm{p}(\gamma)$ and a collection of loops, e.g., see discussion in~\cite{chelkak-izyurov}. Note that there are four types of corners: lying to the right of a nearby vertex $v$, below $v$, to the left of $v$, and upper $v$. For each of these groups, the total turning of the path $\mathrm{p}(\gamma)$ is defined uniquely modulo $2\pi$. Therefore, the discrete spinors introduced above always have purely real values at the first group corners, are collinear to $\lambda:=e^{\frac{\pi}{4}i}$ for the second group, etc. This motivates the following notation:
\begin{itemize}
\item we partition the set $\mathcal{V}_{\mathbb{C}_{\delta}}^{\mathrm{c}}$ of all corners into four subsets $\mathcal{V}_{\Omega_{\delta}}^{1}$, $\mathcal{V}_{\Omega_{\delta}}^{\lambda}$, $\mathcal{V}_{\Omega_{\delta}}^{i}$ and $\mathcal{V}_{\Omega_{\delta}}^{\overline{\lambda}}$ depending on the position of a nearby vertex $v\in\mathcal{V}_{\mathbb{C}_\delta}^{\bullet}$ (to the left, upper, to the right, below) with respect to the corner.
\end{itemize}

\smallbreak
\noindent (iii) \emph{We extend Definition~\ref{def:discrete-spinor} to edge midpoints $z\in\mathcal{V}_{\left[\Omega_{\delta},a_1^{\rightarrow},\dots,a_n\right]}^{\mathrm{m}}$ by adding the factor $(\cos\frac{\pi}{8})^{-1}$ to the formula (\ref{spinor-corners-def})}, with $\#\mathrm{edges}\left(\gamma\right)$ being the number of full edges contained in $\gamma$ and the complex phase $\phi_{a_1,\dots,a_n}(\gamma,z)$ being defined as above. Note that each edge midpoint $z$ can be reached by a path $\mathrm{p}(\gamma)$ from two opposite sides, and both types of configurations are included in the sum. Thus, in this case the argument of the spinor value $F_{\left[\Omega_{\delta},a_1,\dots,a_n\right]}\left(z\right)$ is no longer fixed.

\smallbreak
\noindent (iv) The definition of $F_{\left[\Omega_{\delta},a_1,\dots,a_n\right]}$ is invariant under permutations of $a_2,\dots,a_n$. The reader should always keep in mind, however, that the point $a_1$ plays a special role. The same applies to $\mathcal{A}_\Omega(a_1,\dots,a_n)$ and other related notation below.
\end{rem}

\subsection{S-holomorphicity and boundary conditions}
 \label{sub:s-holomorphicity}
 A version of discrete holomorphicity, the notion of \emph{s-holomorphicity} was introduced in \cite{smirnov-i} together with the nonbranching version of discrete holomorphic observables as a tool to study the critical Ising model on the square lattice. The properties of such functions were further investigated in \cite{chelkak-smirnov-ii} for a more general class of graphs. On the square grid, s-holomorphic functions may be thought of as (more classical) discrete holomorphic functions whose real part is defined on $\mathcal{V}^{1}_{\Omega_\delta}$ and imaginary part on $\mathcal{V}^{i}_{\Omega_\delta}$, extended in a particular way to $\mathcal{V}^{\lambda}_{\Omega_\delta}$,~$\mathcal{V}^{\overline{\lambda}}_{\Omega_\delta}$, and further to $\mathcal{V}^{\mathrm{m}}_{\Omega_\delta}$ (see more details in Section~\ref{sub:s-holomorphicity-proof}). Our definitions resemble those in \cite{smirnov-i}.

\begin{defn}
\label{def: shol}
With each corner $x\in\mathcal{V}_{\mathbb{C}_{\delta}}^{\tau}$ (with $\tau\in\left\{ 1,i,\lambda,\overline{\lambda}\right\} $),
we associate the line $\ell\left(x\right):=\tau\mathbb{R}$ in the complex plane, and denote by $\mathsf{P}_{\ell\left(x\right)}$ the projection onto that line, defined by
\[
\mathsf{P}_{\ell\left(x\right)}\left[w\right]:=\tfrac{1}{2}\left(w+\tau^{2}\overline{w}\right),\quad w\in\C.
\]
We say that a function $F:\mathcal{V}_{\Omega_{\delta}}^{\mathrm{cm}}\to\mathbb{C}$
is \emph{s-holomorphic in $\Omega_\delta$} if for every $x\in\mathcal{V}_{\Omega_{\delta}}^{\mathrm{c}}$
and $z\in\mathcal{V}_{\Omega_{\delta}}^{\mathrm{m}}$ that are adjacent, one has
\[
F\left(x\right)=\mathsf{P}_{\ell\left(x\right)}\left[F\left(z\right)\right].
\]
For functions defined on double covers, we introduce the notion of s-holomorphicity exactly in the same manner.
\end{defn}

The following proposition contains the crucial properties of $F_{[\Omega_\delta,a_1,\dots,a_n]}$ that will allow us to analyze their scaling limits. For $z\in \partial\mathcal{V}_{\Omega_{\delta}}^{\mathrm{m}}$, let $\nu_{\mathrm{out}}(z)$ denote the discrete \emph{``outer normal to the boundary at $z$''}: the edge whose midpoint is $z$, oriented towards the exterior of the domain and viewed as a complex number.
\begin{prop}
\label{prop: s-hol+bc}
 The function $F_{[\Omega_\delta,a_1,\dots,a_n]}$ is s-holomorphic and has (multiplicative) monodromy $-1$ around each of the marked points $a_1,\dots,a_n$, thus being a discrete s-holomorphic spinor on $\mathcal{V}_{\left[\Omega_{\delta},a_1^{\rightarrow},\dots,a_n\right]}^{\mathrm{cm}}$. Also,
\begin{equation}
\label{eq: discrete_bc}
\Im\mathfrak{m}\,\left[F_{\left[\Omega_{\delta},a_1,\dots,a_n\,\right]}\left(z\right){\sqrt{\nu_{\mathrm{out}}(z)}}\right]=0~~\text{for~all~~} z\in\partial\mathcal{V}_{\left[\Omega_{\delta},a_1^{\rightarrow},\dots,a_n\right]}^{\mathrm{m}}.
\end{equation}
\end{prop}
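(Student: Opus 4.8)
The plan is to reduce all three assertions to local computations of exactly the type carried out in \cite{smirnov-i,chelkak-smirnov-ii,chelkak-izyurov}; the only genuinely new feature here is that the ``source'' corner $a+\tfrac{\delta}{2}$ lies \emph{inside} $\Omega_\delta$, which is precisely why it is removed in the double cover $[\Omega_\delta,a^{\rightarrow};a_1,\dots,a_k]$, so that s-holomorphicity and the relation (\ref{eq: discrete_bc}), being conditions at corners and edge-midpoints other than $a+\tfrac{\delta}{2}$, never see this source. \emph{Spinor structure.} First I would check, as in \cite{chelkak-izyurov}, that $\phi_{a;a_1,\dots,a_k}(\gamma,z)$ is independent of the decomposition of $\gamma$ into a path $\mathrm{p}(\gamma)$ and a family of loops: passing from one admissible decomposition to another transfers a single loop between $\mathrm{p}(\gamma)$ and $\gamma\setminus\mathrm{p}(\gamma)$, which changes $\wind(\mathrm{p}(\gamma))$ by a multiple of $2\pi$, toggles whether that loop is counted in $\#\mathrm{loops}_{a;a_1,\dots,a_k}$, and flips $\mathrm{sheet}_{a;a_1,\dots,a_k}(\mathrm{p}(\gamma),z)$, so that the three sign changes cancel in the product. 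Hence $F_{[\Omega_\delta,a;a_1,\dots,a_k]}$ is a well-defined function on $\mathcal{V}^{\mathrm{cm}}_{[\Omega_\delta,a^{\rightarrow};a_1,\dots,a_k]}$, and carrying $z$ once around any of $a,a_1,\dots,a_k$ flips $\mathrm{sheet}_{a;a_1,\dots,a_k}(\mathrm{p}(\gamma),z)$ for every contributing $\gamma$, which gives the claimed multiplicative monodromy $-1$.

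\emph{s-holomorphicity.} Fix adjacent $x\in\mathcal{V}^{\mathrm{c}}_{\Omega_\delta}$ and $z\in\mathcal{V}^{\mathrm{m}}_{\Omega_\delta}$; by the $90^{\circ}$ rotational symmetry of the grid one may assume $x\in\mathcal{V}^{1}_{\Omega_\delta}$. I would split the sum defining $F_{[\Omega_\delta,a;a_1,\dots,a_k]}(z)$ according to which of the two endpoints of the edge $z$ the last step of $\mathrm{p}(\gamma)$ enters from. On each of the two resulting groups the winding $\wind(\mathrm{p}(\gamma))$ is pinned modulo $2\pi$ by the local geometry (cf.\ Remark~\ref{rem:spinor-def}(ii)), so each partial sum lies on a fixed line, and the two lines are orthogonal; thus $F(z)=w_{1}+w_{2}$ with $w_{1},w_{2}$ on those two lines, the normalization $(\cos\tfrac{\pi}{8})^{-1}$ of Remark~\ref{rem:spinor-def}(iii) being exactly what calibrates the ``edge'' weight at $z$ against the ``corner'' weights at adjacent corners. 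The required identity $F(x)=\mathsf{P}_{\ell(x)}[F(z)]$ then becomes a weight-preserving matching between the contour collections contributing to $F(x)$ and those contributing to the partial sum of $F(z)$ that lies on $\ell(x)$: passing from one to the other is a modification inside a single square face, in which the change of $\#\mathrm{edges}(\gamma)$ (by $0$ or $1$) and the change of $\wind(\mathrm{p}(\gamma))$ (by $\pm\tfrac{\pi}{2}$) compensate precisely because $\alpha_{\mathrm{c}}=\tan\tfrac{\pi}{8}$ (together with the edge-midpoint normalization above). All of this takes place away from the branch points $a,a_1,\dots,a_k$ and from the excluded corner $a+\tfrac{\delta}{2}$, so the sheet- and loop-signs are untouched and the local lemma of \cite{smirnov-i,chelkak-smirnov-ii} applies verbatim.

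\emph{Boundary relation.} Equation (\ref{eq: discrete_bc}) is again a local statement at boundary edge-midpoints $z\in\partial\mathcal{V}^{\mathrm{m}}_{\Omega_\delta}$, which lie far from the source. With $+$ boundary conditions and the standing no-single-face-fjord assumption on $\Omega_\delta$, the contour collections near $\partial\Omega_\delta$ are constrained so that $\wind(\mathrm{p}(\gamma))$ is pinned modulo $2\pi$ for \emph{every} $\gamma\in\mathcal{C}_{\Omega_\delta}(a+\tfrac{\delta}{2},z)$ reaching $z$; hence all terms in the sum defining $F(z)$ share the same argument, and identifying this argument in terms of $\nu_{\mathrm{out}}(z)$ gives $F_{[\Omega_\delta,a;a_1,\dots,a_k]}(z)\sqrt{\nu_{\mathrm{out}}(z)}\in\mathbb{R}$, which is (\ref{eq: discrete_bc}). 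This step is identical to the corresponding one in \cite{smirnov-i,chelkak-smirnov-ii,chelkak-izyurov}.

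\emph{Main obstacle.} I expect the delicate point to be the verification behind s-holomorphicity: although each ingredient is elementary, one must simultaneously control the change in the edge count, the change in the winding (with its dependence on the four corner types), and the consistency of the sheet- and loop-signs under the local move, and confirm that $\alpha_{\mathrm{c}}=\tan\tfrac{\pi}{8}$ and the factor $(\cos\tfrac{\pi}{8})^{-1}$ make everything balance. Since this bookkeeping coincides with that of \cite{smirnov-i,chelkak-smirnov-ii,chelkak-izyurov}, I would carry out the full details in Section~\ref{sub:s-holomorphicity-proof} and record only the new points (the interior source) in the present section.
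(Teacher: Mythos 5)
Your plan is structurally the same as the paper's: reduce s-holomorphicity and the boundary relation to local combinatorics via an XOR bijection, as in the cited references, with the interior source $a+\frac{\delta}{2}$ excluded from the double cover so that the local checks never touch it. The spinor/monodromy discussion and the boundary-condition argument (winding pinned modulo $2\pi$ at boundary mid-edges) are fine. However, the middle paragraph on s-holomorphicity contains a few inaccuracies that should be fixed. (i) Neither of the two ``arrival-direction'' components of $F(z)$ lies on $\ell(x)$: for a mid-edge $z$ and an adjacent corner $x$ sharing the vertex $v$, they lie on lines at angles $\frac{\pi}{8}$ and $\frac{3\pi}{8}$ to $\ell(x)$, and \emph{both} contribute to $\mathsf{P}_{\ell(x)}[F(z)]$ (with factors $\cos\frac{\pi}{8}$ and $\cos\frac{3\pi}{8}$). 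So ``matching $F(x)$ to the partial sum of $F(z)$ on $\ell(x)$'' is not the right picture. The paper instead runs a term-by-term bijection $\gamma\mapsto\omega(\gamma)$ (XOR with the two half-edges $(zv)$, $(vx)$) and verifies in both cases, $(zv)\in\gamma$ or not, that
$(\cos\frac{\pi}{8})^{-1}\mathsf{P}_{\ell(x)}\bigl[\alpha_{\mathrm{c}}^{\#\mathrm{edges}(\gamma)}\phi_{a;\Ak}(\gamma,z)\bigr]=\alpha_{\mathrm{c}}^{\#\mathrm{edges}(\omega(\gamma))}\phi_{a;\Ak}(\omega(\gamma),x)$,
the balance between the extra edge and the projection factor being exactly $\cos\frac{3\pi}{8}/\cos\frac{\pi}{8}=\tan\frac{\pi}{8}=\alpha_{\mathrm{c}}$. (ii) The winding increment under $\gamma\mapsto\omega(\gamma)$ is $\pm\frac{\pi}{4}$ or $\pm\frac{3\pi}{4}$ (giving $e^{\pm i\pi/8}$, resp.\ $e^{\pm 3i\pi/8}$), not $\pm\frac{\pi}{2}$. (iii) It is not quite true that ``the sheet- and loop-signs are untouched'' by the XOR move: it can merge a loop into $\mathrm{p}(\gamma)$ or split one off, which flips $(-1)^{\#\mathrm{loops}_{a;\Ak}}$ when that loop separates the marked points; the point (which the paper checks explicitly) is that this flip is compensated by a simultaneous flip of $\mathrm{sheet}_{a;\Ak}(\mathrm{p}(\gamma),\cdot)$, so only the product is invariant. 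With these corrections your sketch coincides with the paper's proof.
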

\begin{proof}
We give a proof (based on the standard XOR bijection, cf.~\cite{chelkak-smirnov-ii}) in Section~\ref{sub:s-holomorphicity-proof}.
\end{proof}

\begin{rem}
\label{rem:intF^2}
The boundary conditions (\ref{eq: discrete_bc}) are a priori not robust enough to pass to the scaling limit: even if the limiting domain $\Omega$ has a smooth boundary, the discrete normal $\nu_{\mathrm{out}}\left(z\right)$ can possibly admit only the values $e^{\pm\frac{\pi i}{4}}$ and $e^{\pm\frac{3\pi i}{4}}$, and so does not (pointwise) converge to its continuous counterpart. These conditions become much more tractable, if one finds a way to ``integrate'' the square of $F_{\left[\Omega_{\delta},a_1,\dots,a_n\right]}$: the real part of this primitive (antiderivative) will satisfy Dirichlet boundary conditions on $\partial\Omega$ due to (\ref{eq: discrete_bc}). This approach is not as straightforward as in the continuum, since the square of a discrete holomorphic function is, in general, not discrete holomorphic, and so does not have a well-defined discrete primitive. However, the following remarkable fact has been observed in~\cite{smirnov-i}: one can naturally define the real part of the integral, using the \emph{s-holomorphicity} of discrete observables, which is a stronger
property than the usual discrete holomorphicity. Moreover, a technique developed in \cite{chelkak-smirnov-ii} allows one to treat this real part essentially as if it were a harmonic function, see further details in Section~\ref{sub: riemann-problem}.
\end{rem}

\subsection{\label{sub:discrete-spinors-to-ratios-corr}From discrete spinors
to ratios of correlations}
The following lemma expresses ratios of spin correlations 
in terms of the spinor observables introduced in Section~\ref{sub:discrete-spinors}, providing a crucial ingredient for the proof of Theorems~\ref{thm:log-derivatives} and~\ref{thm:ratio-free-plus}.

\begin{lem}
\label{lem:ratios-disc-magnet-spinor} For any $n=1,2,\dots$, we have
\begin{eqnarray}
\frac{\mathbb{E}_{\Omega_{\delta}}^{+}\left[\sigma_{a_1+2\delta}\sigma_{a_2}\dots\sigma_{a_n}\right]} {\mathbb{E}_{\Omega_{\delta}}^{+}\left[\sigma_{a_1}\dots\sigma_{a_n}\right]} & = & F_{\left[\Omega_{\delta},a_1,\dots,a_n\right]}\left(a_1+\tfrac{3\delta}{2}\right),\label{eq:log-diff-spinor-hor}
\end{eqnarray}
where we take the corner $a_1+\frac{3\delta}{2}$ on the same sheet as the ``source point'' $a_1+\frac{\delta}{2}$.
Moreover, in the case of just two marked points, we also have
\begin{eqnarray}
\frac{\mathbb{E}_{\Omega_{\delta}^\bullet}^{\mathrm{free}}\left[\sigma_{a+\delta}\sigma_{b+\delta}\right]} {\mathbb{E}_{\Omega_{\delta}}^{+}\left[\sigma_{a}\sigma_{b}\right]} & = & \pm i F_{\left[\Omega_{\delta},a,b\right]}\left(b+\tfrac{\delta}{2}\right),\label{eq:spinor-duality}
\end{eqnarray}
where $\mathbb{E}_{\Omega_{\delta}^\bullet}^{\mathrm{free}}$ denotes the expectation for the critical Ising model defined on the vertices of $\Omega_\delta$ (with free boundary conditions outside the set $\mathcal{V}_{\Omega_{\delta}}^{\bullet}$) and the sign $\pm$ depends on the sheet where the corner $b+\frac{\delta}{2}$ is taken.
\end{lem}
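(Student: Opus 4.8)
The plan is to expand the definition~(\ref{spinor-corners-def}) of the observable at the two specific target points $a+\tfrac{3\delta}{2}$ and $b+\tfrac{\delta}{2}$ and recognize the resulting sums as (ratios of) partition functions of the low-temperature Ising model, using the standard XOR bijection between contour families. For the first identity~(\ref{eq:log-diff-spinor-hor}), I would start by noting that $a+\tfrac{3\delta}{2}$ is the corner of the face $a+2\delta$ that sits opposite (across the shared edge) to the source corner $a+\tfrac{\delta}{2}$ of the face $a$. A path $\mathrm{p}(\gamma)$ running from $a+\tfrac{\delta}{2}$ to this corner on the \emph{same sheet} can be taken to be the single half-edge crossing from one corner to the other, contributing zero winding, no loops, and $\mathrm{sheet}=+1$, so $\phi_{a;a_1,\dots,a_k}(\gamma,a+\tfrac{3\delta}{2})=1$ for the base path; for a general $\gamma\in\mathcal{C}_{\Omega_\delta}(a+\tfrac{\delta}{2},a+\tfrac{3\delta}{2})$ one checks, as in~\cite{chelkak-izyurov}, that the phase is always $+1$. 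The key combinatorial point is then that $\mathcal{C}_{\Omega_\delta}(a+\tfrac{\delta}{2},a+\tfrac{3\delta}{2})$, after XOR-ing with the single crossing half-edge (which is not a full edge, hence does not change $\#\mathrm{edges}$), is in weight-preserving bijection with $\mathcal{C}_{\Omega_\delta}$ but where now the marked-point parity around $a$ is \emph{flipped}: loops separating $a$ from the source no longer separate $a+2\delta$. Concretely, moving the source from $a$ to $a+2\delta$ toggles which of $a,a+2\delta$ is ``inside'' each loop, so $\sum_\gamma \alpha_c^{\#\mathrm{edges}}\phi = \sum_\omega \alpha_c^{\#\mathrm{edges}(\omega)}(-1)^{\#\mathrm{loops}_{a+2\delta;a_1,\dots,a_k}(\omega)} = \mathcal{Z}^+_{\Omega_\delta}[\sigma_{a+2\delta}\sigma_{a_1}\dots\sigma_{a_k}]$. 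Dividing by the normalizing factor $\mathcal{Z}^+_{\Omega_\delta}[\sigma_a\sigma_{a_1}\dots\sigma_{a_k}]$ and invoking Remark~\ref{rem:spinor-def}(i) to pass from $\mathcal{Z}^+$ to $\mathbb{E}^+$ (the overall $\mathcal{Z}^+_{\Omega_\delta}$ cancels) gives~(\ref{eq:log-diff-spinor-hor}).

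For the second identity~(\ref{eq:spinor-duality}), the argument runs through Kramers--Wannier duality. I would first observe that the numerator $\mathbb{E}^{\mathrm{free}}_{\Omega_\delta^\bullet}[\sigma_{a+\delta}\sigma_{b+\delta}]$, for the Ising model on the \emph{vertices} of $\Omega_\delta$ with free boundary conditions, admits its own contour (high-temperature / low-temperature-dual) representation: the spin correlation of two vertices equals a ratio of sums over edge subsets having odd degree exactly at those two vertices (plus the usual even-degree constraint elsewhere). The dual lattice of $\Omega_\delta^\bullet$ is precisely the face lattice underlying $\mathcal{C}_{\Omega_\delta}$, and under this duality the ``defect'' edges emanating from $a+\delta$ and $b+\delta$ become exactly the branching structure of the double cover $[\Omega_\delta,a;b]$; a path in the dual picture from the corner $a+\tfrac{\delta}{2}$ to the corner $b+\tfrac{\delta}{2}$ plays the role of the single open curve in the definition of $F_{[\Omega_\delta,a;b]}$. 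Expanding $F_{[\Omega_\delta,a;b]}(b+\tfrac{\delta}{2})$ from~(\ref{spinor-corners-def}), the complex phase $\phi_{a;b}$ should now evaluate to a \emph{fixed} unimodular constant (up to the sheet sign), and the critical weight $\alpha_c=\sqrt2-1$ is exactly the self-dual value, so the weights match after duality; collecting the phases produces the factor $\pm i$. The normalization $\mathcal{Z}^+_{\Omega_\delta}[\sigma_a\sigma_b]$ in the denominator of $F$ matches $\mathbb{E}^+_{\Omega_\delta}[\sigma_a\sigma_b]$ up to the partition function $\mathcal{Z}^+_{\Omega_\delta}$, and the dual identity supplies the matching factor on the numerator side, so all partition functions cancel and~(\ref{eq:spinor-duality}) follows.

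The routine parts are the phase-independence-of-decomposition bookkeeping (identical to~\cite{chelkak-izyurov}) and the verification that the winding/sheet factors at these two \emph{specific} target corners collapse to $+1$ and to $\pm i$ respectively. The main obstacle I expect is getting the Kramers--Wannier duality set up with the correct correspondence of boundary conditions: the numerator model lives on vertices of $\Omega_\delta$ with free b.c., its dual model lives on faces with $+$ (wired) b.c., and one must check that the dual of ``free outside $\mathcal{V}^\bullet_{\Omega_\delta}$'' is precisely the $+$-b.c.\ contour family $\mathcal{C}_{\Omega_\delta}$ used to define $F_{[\Omega_\delta,a;b]}$, including matching the role of the source corner $a+\tfrac{\delta}{2}$ (which is why the domain $[\Omega_\delta,a^\to;b]$ excludes that corner). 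This is a careful but standard computation; the cleanest route is probably to argue at the level of the contour families directly, pairing each configuration in $\mathcal{C}_{\Omega_\delta}(a+\tfrac{\delta}{2},b+\tfrac{\delta}{2})$ with a configuration of the dual model via the XOR bijection with a fixed reference path from $a+\tfrac{\delta}{2}$ to $b+\tfrac{\delta}{2}$, and tracking how the defect at $b+\tfrac{\delta}{2}$ (an edge midpoint versus a corner) accounts for the $(\cos\tfrac{\pi}{8})^{-1}$ normalization introduced in Remark~\ref{rem:spinor-def}(iii) — here, since $b+\tfrac{\delta}{2}$ is a corner and not an edge midpoint, no such factor appears and the bijection is exact. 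Full details are deferred to Section~\ref{sub:s-holomorphicity-proof}.
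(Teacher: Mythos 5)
Your high-level strategy matches the paper's: for (\ref{eq:log-diff-spinor-hor}) you expand the definition and use the XOR bijection $\gamma\mapsto\omega(\gamma)$ between $\mathcal{C}_{\Omega_\delta}(a+\tfrac{\delta}{2},a+\tfrac{3\delta}{2})$ and $\mathcal{C}_{\Omega_\delta}$ together with Remark~\ref{rem:spinor-def}(i); for (\ref{eq:spinor-duality}) you invoke Kramers--Wannier duality and then argue that $\phi_{a;b}(\gamma,b+\tfrac{\delta}{2})$ is $\gamma$-independent. Both are exactly the routes the paper takes.

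There is, however, a genuine gap in your treatment of (\ref{eq:log-diff-spinor-hor}). You assert that for a general $\gamma\in\mathcal{C}_{\Omega_\delta}(a+\tfrac{\delta}{2},a+\tfrac{3\delta}{2})$ ``the phase is always $+1$.'' This is false, and is moreover inconsistent with the displayed identity that you state two lines later: if $\phi_{a;a_1,\dots,a_k}(\gamma,a+\tfrac{3\delta}{2})\equiv 1$ then the bijection would give $\sum_\gamma\alpha_c^{\#\mathrm{edges}(\gamma)}\phi=\mathcal{Z}^+_{\Omega_\delta}$, not $\mathcal{Z}^+_{\Omega_\delta}[\sigma_{a+2\delta}\sigma_{a_1}\dots\sigma_{a_k}]$. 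What one must prove is the per-configuration identity
\begin{equation*}
\phi_{a;a_1,\dots,a_k}\bigl(\gamma,a+\tfrac{3\delta}{2}\bigr)=(-1)^{\#\mathrm{loops}_{a+2\delta;a_1,\dots,a_k}(\omega(\gamma))},
\end{equation*}
whose right-hand side is $-1$ for many $\gamma$: e.g.\ whenever $\omega(\gamma)$ has a loop enclosing a single $a_j$, or (already with $k=0$) when $p(\gamma)$ winds once around the edge $(a+\delta,a+i\delta)$ and hence around $a$ before arriving at $a+\tfrac{3\delta}{2}$, which flips both the winding and the sheet. Establishing the displayed identity is precisely the nontrivial content of the lemma and takes two steps that are absent from your argument: (a) removing any single loop from $\gamma$ changes the two sides simultaneously, because no loop in $\gamma$ separates $a$ from $a+2\delta$ (such a loop would cross $p(\gamma)$) — note this is essentially the opposite of your ``toggles which of $a,a+2\delta$ is inside each loop,'' which doesn't occur; and (b) when $\gamma=p(\gamma)$ is a single simple path one matches $\mathrm{sheet}_{a;a_1,\dots,a_k}(p(\gamma),a+\tfrac{3\delta}{2})=-1$ with an odd number of marked points inside the loop $\omega(\gamma)$, and $\mathrm{wind}(p(\gamma))=2\pi \bmod 4\pi$ with $\omega(\gamma)$ separating $a$ from $a+2\delta$. (A smaller point: the XOR in the bijection is with two corner half-edges $(a+\tfrac{\delta}{2},a+\delta)$ and $(a+\delta,a+\tfrac{3\delta}{2})$ through the vertex $a+\delta$, not a single half-edge, though that does not change anything substantive.) Your sketch of (\ref{eq:spinor-duality}) is compatible with the paper's proof and your remark about why no $\cos\tfrac{\pi}{8}$ factor appears is correct, but the constancy of $\phi_{a;b}(\gamma,b+\tfrac{\delta}{2})$ is asserted rather than shown; the paper proves it by noting $\#\mathrm{loops}_{a;b}(\gamma\setminus p(\gamma))=0$ always and then pairing the winding of $p(\gamma)$ against its sheet factor via the parity of intersections of $p(\gamma)$ with a fixed cut $\pi^\circ_{ba}$ through face centers.
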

\begin{proof}
Recall that
\begin{eqnarray*}
F_{\left[\Omega_{\delta},a_1,\dots, a_n\right]}\left(a_1+\tfrac{3\delta}{2}\right) & = & \frac{\sum_{\gamma\in\mathcal{C}_{\Omega_{\delta}}\left(a_1+\frac{\delta}{2},a_1+\frac{3\delta}{2}\right)} \alpha_{\mathrm{c}}^{\#\mathrm{edges}\left(\gamma\right)}\phi_{a_1,\dots,a_n}\left(\gamma,a_1+\frac{3\delta}{2}\right)} {\mathcal{Z}_{\Omega_{\delta}}^+\left[\sigma_{a_1}\dots\sigma_{a_n}\right]},
\end{eqnarray*}
while, taking into account Remark~\ref{rem:spinor-def}(i),
\begin{eqnarray*}
\frac{\mathbb{E}_{\Omega_{\delta}}^{+}\left[\sigma_{a_1+2\delta}\sigma_{a_2}\dots\sigma_{a_n}\right]} {\mathbb{E}_{\Omega_{\delta}}^{+}\left[\sigma_{a_1}\dots\sigma_{a_n}\right]} & = & \frac{\sum_{\omega\in\mathcal{C}_{\Omega_{\delta}}}\alpha_{\mathrm{c}}^{\#\mathrm{edges} \left(\omega\right)}\left(-1\right)^{\#\mathrm{loops}_{a_1+2\delta,\dots,a_n}(\omega)}} {\mathcal{Z}_{\Omega_{\delta}}^+\left[\sigma_{a_1}\dots\sigma_{a_n}\right]}.
\end{eqnarray*}
There is a simple bijection between the sets $\mathcal{C}_{\Omega_{\delta}}\left(a_1+\frac{\delta}{2},a_1+\frac{3\delta}{2}\right)$ and $\mathcal{C}_{\Omega_{\delta}}$: removing the two corner-edges $(a_1+\frac{\delta}{2},a_1+\delta)$ and $(a_1+\delta,a_1+\frac{3\delta}{2})$ from a given $\gamma$, we obtain a collection of closed loops $\omega(\gamma)\in\mathcal{C}_{\Omega_{\delta}}$ and vice versa. So, it suffices to show that
\[
\phi_{a_1\dots,a_n}\left(\gamma,a_1+\tfrac{3\delta}{2}\right)=\left(-1\right)^{\#\mathrm{loops}_{a_1+2\delta,a_2,\dots,a_n}\left(\omega(\gamma)\right)}.
\]
Let us pick any loop in $\gamma$ and remove it. The left-hand side (respectively, the right-hand side) has changed the sign if and only if there was an odd number of points $a_1,\dots,a_n$ (respectively, $a_1+2\delta,a_2,\dots,a_n$) inside the loop. However, no loop in $\gamma$ separates $a_1$ from $a_1+2\delta$ (such a loop would intersect $\mathrm{p}(\gamma)$), so the two sides can only change sign simultaneously. Thus it is sufficient to consider the case when $\gamma$ is just a single non-self-intersecting path $\mathrm{p}(\gamma)$ running from $a_1+\frac{\delta}{2}$ to $a_1+\frac{3\delta}{2}$, which is treated by the following observations:  $\text{sheet}_{a_1\dots,a_n}(\mathrm{p}(\gamma),a_1+\frac{3\delta}{2})=-1$ if and only if there is an odd number of points $a_1,\dots,a_n$ inside the loop $\omega(\mathrm{p}(\gamma))$, and $\mathrm{wind}(\mathrm{p}(\gamma))=2\pi\mod 4\pi$ if an only if $\omega(\mathrm{p}(\gamma))$ separates $a_1$ from $a_1+2\delta$.

For (\ref{eq:spinor-duality}), the Kramers-Wannier duality (e.g., see \cite[Chapter 1]{palmer}) implies
\[
\frac{\mathbb{E}_{\Omega_{\delta}^\bullet}^{\mathrm{free}}\left[\sigma_{a+\delta}\sigma_{b+\delta}\right]} {\mathbb{E}_{\Omega_{\delta}}^{+}\left[\sigma_{a}\sigma_{b}\right]}= \frac{\sum_{\omega\in\mathcal{C}_{\Omega_{\delta}}(a+{\frac{\delta}{2}},b+\frac{\delta}{2})}\alpha_{\mathrm{c}}^{\#\mathrm{edges}\left(\omega\right)}} {\mathcal{Z}_{\Omega_\delta}^+[\sigma_a\sigma_b]},
\]
hence it is sufficient to prove that the (purely imaginary) number $\phi_{a,b}\left(\gamma,b+\frac{\delta}{2}\right)$ does not depend on $\gamma$. We have $\#\mathrm{loops}_{a,b}(\gamma\setminus\mathrm{p}(\gamma))=0$, since any loop in $\gamma$ either surrounds both $a,b$ or none of them (otherwise it would intersect the path $\mathrm{p(\gamma)}$ joining $a+\frac{\delta}{2}$ and $b+\frac{\delta}{2}$). Further, let $\pi_{ba}^\circ=b\sim v_1\sim \dots \sim v_m\sim a$ be some simple lattice path with $v_j\in \mathcal{V}^{\circ}_{\Omega_\delta}$. Then, $\mathrm{p}(\gamma)\cup(b+\frac{\delta}{2},b)\cup\pi_{ab}^\circ\cup(a,a+\frac{\delta}{2})$ is a loop, which we denote by $l(\gamma)$. Let $n(\gamma)$ be the number of self-intersections of $l(\gamma)$ (in order words, the number of intersections of $\mathrm{p}(\gamma)$ with $\pi_{ba}^\circ$). Since $\exp[-\frac{i}{2}\mathrm{wind}(l(\gamma))]=(-1)^{n(\gamma)+1}$ (this is true for any closed loop), we see that
\[
\exp[-\tfrac{i}{2}\mathrm{wind}(\mathrm{p(\gamma)})]=(-1)^{n(\gamma)+1} \cdot \exp[\tfrac{i}{2} \mathrm{wind}((b+\tfrac{\delta}{2},b)\cup\pi_{ba}^\circ\cup(a,a+\tfrac{\delta}{2}))],
\]
where the second factor does not depend on $\gamma$. But we may also view $\pi_{ba}^\circ$ as a cut defining a sheet of the double cover $[\Omega,a,b]$, meaning that $\text{sheet}_{a,b}(p(\gamma),b+\frac{\delta}{2})=(-1)^{n(\gamma)}$, which proves the desired result.
\end{proof}

\begin{rem}
\label{rem:ratios-disc-magnet-spinor} (i) Similarly to (\ref{eq:log-diff-spinor-hor}), one can check that
\begin{eqnarray}
\label{eq:nearby-ratios-disc-magnet-spinor}
\frac{\mathbb{E}_{\Omega_{\delta}}^{+}\left[\sigma_{a_1+\left(1\pm i\right)\delta}\sigma_{a_2}\dots\sigma_{a_n}\right]} {\mathbb{E}_{\Omega_{\delta}}^{+}\left[\sigma_{a_1}\dots\sigma_{a_n}\right]} & = & e^{\pm\frac{\pi i}{4}}\: F_{\left[\Omega_{\delta},a_1,\dots,a_n\right]}\left(a_1+\left(1\!\pm\!\tfrac{i}{2}\right)\delta\right). \label{eq:log-diff-spinor-diag}
\end{eqnarray}
The proof boils down to the identity
\[
e^{\pm\frac{\pi i}{4}}\phi_{a_1,\dots,a_n}\left(\gamma,a_1+\left(1\!\pm\!\tfrac{i}{2}\right)\delta\right) =
\left(-1\right)^{\#\mathrm{loops}_{a_1+\delta(1\pm i),a_2,\dots,a_n}\left(\omega(\gamma)\right)}
\]
for the natural bijection $\gamma\mapsto\omega(\gamma)$ removing the two corner-edges $(a_1+\frac{\delta}{2},a_1+\delta)$ and $(a_1+\delta,a_1+(1\pm\frac{i}{2})\delta)$ from a given $\gamma\in\mathcal{C}_{\Omega_{\delta}}\left(a_1+\frac{\delta}{2},a_1+(1\pm\frac{i}{2})\delta\right)$, which we leave to the reader.

\noindent (ii) The identity (\ref{eq:spinor-duality}) can be extended to the case of $2n$ marked points, see \cite[Proposition~5.6]{chelkak-izyurov}, thus allowing one to treat $2n$-point correlation functions with free boundary conditions.
\end{rem}

\subsection{\label{sub:cont-spinors}Continuous spinors} In this Section, we introduce the continuous counterparts of the discrete spinor observables defined in Section~\ref{sub:discrete-spinors}: the continuous holomorphic spinors $f_{[\Omega,a_1,\dots,a_n]}$. We define them as
solutions to the conformally covariant Riemann boundary value problem (\ref{def:spinor-bdry})\,--\,(\ref{def:spinor-1}), which is a continuous analogue of the corresponding discrete boundary value problem (see Remark~\ref{rem:cont-spinor-motivation} below).

\begin{defn}
Let $\Omega$ be a bounded simply connected domain with smooth boundary, and $a_1,\dots,a_n\in\Omega$.
We define $f_{\left[\Omega,a_1,\dots,a_n\right]}$ to be the (unique) holomorphic spinor on $[\Omega,a_1,\dots,a_n]$, branching around each of $a_1,\dots,a_n$ and satisfying the following conditions:
\begin{eqnarray}
\Im\mathfrak{m}\,\left[f_{[\Omega,a_1,\dots,a_n]}(z)\sqrt{\nu_\mathrm{out}(z)}\,\right]&=&0,~~z\in\partial \Omega;\label{def:spinor-bdry}\\
\lim\limits_{z\rightarrow a_1}\sqrt{z-a_1}\cdot f_{[\Omega,a_1,\dots,a_n]}(z)&=&1; \label{def:spinor-2}\\
 \lim\limits_{z\rightarrow a_k}\sqrt{z-a_k}\cdot f_{[\Omega,a_1,\dots,a_n]}(z)&\in&i\R, \quad k=2,\dots,n,\label{def:spinor-1}
\end{eqnarray}
where $\nu_\mathrm{out}(z)$ denotes the outer normal to the boundary of $\Omega$ at $z$.
\label{def:spinor-bounded}
\end{defn}

\begin{lem}
\label{lem: obs_uniqueness}
\label{lem: spinor_rough}
The boundary value problem (\ref{def:spinor-bdry})\,--\,(\ref{def:spinor-1}) has a unique solution. If $\varphi:\Omega\rightarrow \Omega'$ is a conformal mapping, then one has
\begin{equation}
\label{eq: obs_covariance}
f_{[\Omega,a_1\dots,a_n]}(z)=f_{[\Omega',\varphi(a_1)\dots,\varphi(a_n)]}(\varphi(z))\cdot \left(\varphi'(z)\right)^{1/2}.
\end{equation}
\end{lem}
\begin{rem}
\label{rem: sp_rough_domain}
We use this covariance property as a \emph{definition} of the continuous spinor in an arbitrary simply connected domain $\Omega$, taking $\Omega'$ to be any smooth bounded domain.
\end{rem}
\begin{proof}
If $f_1,f_2$ both satisfy (\ref{def:spinor-bdry})\,--\,(\ref{def:spinor-1}), then the spinor $f_1-f_2$ satisfies (\ref{def:spinor-bdry}) and (\ref{def:spinor-1}), while $\lim_{z\rightarrow a_1}\sqrt{z-a_1}\cdot (f_1(z)-f_2(z))=0$. Applying the Cauchy residue theorem to the \emph{single-valued} function $(f_1(z)-f_2(z))^2$, one arrives at
\begin{equation}
\label{(f1-f2)^2-int}
0\le i^{-1}\int_{\partial\Omega}(f_1(z)-f_2(z))^2dz=2\pi \sum_{k=2}^n \lim\limits_{z\rightarrow a_k}(z-a_k)(f_1(z)-f_2(z))^2\le 0,
\end{equation}
where the first inequality easily follows from (\ref{def:spinor-bdry}) and the second from (\ref{def:spinor-1}). Hence, $f_1\equiv f_2$. Moreover, the conformal covariance property (\ref{eq: obs_covariance}) now follows from an easy observation that its right-hand side satisfies all the conditions (\ref{def:spinor-bdry})\,--\,(\ref{def:spinor-1}).

To prove existence, consider the case $\Omega=\bH$ and observe that if $Q(z)$ is a polynomial of degree $n-1$ with real coefficients, then
\begin{equation}
\label{eq: f_Q}
f_Q(z):=e^{\frac{\pi i}{4}}\cdot \frac{Q(z)}{\sqrt{(z-a_1)(z-\overline{a}_1)\dots (z-a_n)(z-\overline{a}_n)}},
\end{equation}
is a holomorphic spinor on $[\bH,a_1,\dots,a_n]$ satisfying (\ref{def:spinor-bdry}) and the additional regularity condition $f_Q(z)=O(|z|^{-1})$ at infinity. Note that (\ref{(f1-f2)^2-int}) shows that, if $f_Q$  satisfies (\ref{def:spinor-1}), then $\Im\mathfrak{m} \lim\limits_{z\rightarrow a_1}\sqrt{z-a_1}\cdot f_{Q}(z)=0$. Therefore,  conditions (\ref{def:spinor-2}), (\ref{def:spinor-1}) give rise to $n$ (real) linear equations on $n$ unknown coefficiens of $Q$. By the above argument, the homogeneous counterpart of this linear system has no nontrivial solutions; thus the system is non-degenerate. The solution for a given bounded smooth domain $\Omega$ is then constructed by (\ref{eq: obs_covariance}) with $\Omega'=\bH$, the decay of $f_Q(z)$ at infinity ensures that the right-hand side is bounded near $\varphi^{-1}(\infty)$.
\end{proof}

\begin{rem}
 \label{rem:cont-spinor-motivation} The first condition (\ref{def:spinor-bdry}) in Definition~\ref{def:spinor-bounded} is a natural counterpart of (\ref{eq: discrete_bc}). The third condition~(\ref{def:spinor-1}) comes from the following observation: a discrete primitive $H_{[\Omega,a_1,\dots,a_n]}$ of the ``discrete differential form'' $\Re\mathfrak{e}\,[F_{[\Omega_\delta,a_1,\dots,a_n]}^2dz]$ (which may be defined due to the s-holomorphicity property of the discrete observable, see Remark~\ref{rem:intF^2}) remains bounded from below near the branching points $a_2,\dots,a_n$ as $\delta\to 0$. Thus, we impose the same condition for the scaling limits, which means that $\Re\mathfrak{e}\,[\int f_{\left[\Omega,a_1,\dots,a_n\right]}^2dz]$ should behave like $c_k\log|z-a_k|$ for some \emph{nonpositive} $c_k$ as $z\to a_k$, $k=2,\dots,n$, implying~(\ref{def:spinor-1}). The second condition (\ref{def:spinor-2}) which fixes the behavior of $f_{\left[\Omega,a_1,\dots,a_n\right]}$ near the ``source point'' $a_1$ is the
most delicate one and will be clarified later on (see Section~\ref{sub:full-plane}, particularly Lemma~\ref{lem: remove_sing}). Note that it is sufficient to assume that $f_{\left[\Omega,a_1,\dots,a_n\right]}$ does not blow up faster than $1/\sqrt{z-a_1}$ at $a_1$. Indeed, in this case the argument similar to (\ref{(f1-f2)^2-int}) shows that $\lim\limits_{z\rightarrow a_1}(z-a_1)(f_{[\Omega,a_1,\dots,a_n]}(z))^2>0$ and the rest is just a proper choice of the normalization.
\end{rem}

Let us now introduce the quantities that play a central role in our computations of scaling limits of spin correlations, as they will turn out to be the limits of discrete logarithmic derivatives in Theorem~\ref{thm:log-derivatives}.
\begin{defn}
We define the complex number $\mathcal{A}_\Omega(a_1,\dots,a_n)=\mathcal{A}_{[\Omega,a_1,\dots,a_n]}$ as the coefficient in the expansion
\begin{equation}
\label{eq: expansion-a}
f_{[\Omega,a_1,\dots,a_n]}=\frac{1}{\sqrt{z-a_1}}+2\mathcal{A}_{[\Omega,a_1,\dots,a_n]}\sqrt{z-a_1} + O(|z-a_1|^{3/2})
\end{equation}
of $f_{[\Omega,a_1,\dots,a_n]}$ near the point $a_1$. In the special case $n=2$, we also define the quantity $\mathcal{B}_\Omega(a,b)=\mathcal{B}_{[\Omega,a,b]}>0$ as the coefficient in the  expansion of $f_{[\Omega,a,b]}$ near $b$:
\begin{equation}
\label{eq: expansion-b}
f_{[\Omega,a,b]}=\pm \frac{i\mathcal{B}_{[\Omega,a,b]}}{\sqrt{z-b}} + O(|z-b|^{1/2}),
\end{equation}
where the sign $\pm$ depends on the sheet of $[\Omega,a,b]$.
\end{defn}
\begin{rem}
Note that the covariance rule (\ref{A-covariance}) for $\mathcal{A}_{[\Omega,a_1,\dots,a_n]}$ directly follows from the conformal covariance (\ref{eq: obs_covariance}) of spinor observables: if $\varphi:\Omega\to\Omega'$ is a conformal mapping, then one has
\begin{align*}
&f_{\left[\Omega,a_1,\dots,a_n\right]}\left(z\right) ~=~ (\varphi'\left(z\right))^{1/2}f_{\left[\Omega',\varphi(a_1),\dots,\varphi(a_n)\right]}(\varphi(z)) \\
&\phantom{f_{\left[\Omega,a_1,\dots,a_n\right]}\left(z\right)} ~=~ \left[\frac{\varphi'\left(z\right)}{\varphi\left(z\right)-\varphi\left(a_1\right)}\right]^\frac{1}{2} \cdot~\left[1+2\mathcal{A}_{\varphi}\cdot(\varphi\left(z\right)-\varphi\left(a_1\right))+\dots\right]\\
 & =~
 \biggl[\frac{1+\frac{\varphi''(a_1)}{\varphi'(a_1)}(z-a_1)+\dots}{(z-a_1)(1+\frac{\varphi''(a_1)}{2\varphi'(a_1)}(z-a_1)+\dots)}\biggr]^\frac{1}{2} \cdot~\left[1+2\mathcal{A}_{\varphi}\cdot\varphi'(a_1)(z-a_1)+\dots\right] \\
 & =~
\frac{1}{\sqrt{z-a_1}}\left[1+2\left(\mathcal{A}_\varphi\cdot\varphi'\left(a_1\right)+\frac{1}{8}\frac{\varphi''\left(a_1\right)}{\varphi'\left(a_1\right)}\right)(z-a_1)+\dots\right],
\end{align*}
where $\mathcal{A}_{\varphi}=\mathcal{A}_{\left[\Omega',\varphi(a_1),\dots,\varphi(a_n)\right]}$. Similar arguments show that the coefficient $\mathcal{B}_{[\Omega,a,b]}$ is conformally invariant, i.e.,
\begin{equation}
\label{B-invariance}
\mathcal{B}_\Omega(a,b)=\mathcal{B}_{\Omega'}(\varphi(a),\varphi(b)).
\end{equation}
We defer the further analysis of $\mathcal{A}_\Omega$ and $\mathcal{B}_\Omega$ until Lemma \ref{lem: A_explicit} and Appendix, where we compute those quantities explicitly for $n\le 2$ and $n>2$, respectively.
\end{rem}

\subsection{Convergence of the spinors}\label{sub:convergence-of-spinors}
The main purpose of this section is to derive Theorem~\ref{thm:log-derivatives} and Theorem~\ref{thm:ratio-free-plus} from the convergence results for the discrete spinor observables $F_{\left[\Omega_{\delta},a_1,\dots,a_n\right]}$ which are formulated in Theorems~\ref{thm:obs-away},~\ref{thm:localization-near-a} and~\ref{thm:localization-near-b} below. The proofs of those are given in Section~\ref{sec:analysis-of-spinors}, which is the most technical part of our paper. We use the following definitions concerning convergence of discrete s-holomorphic functions:
\begin{itemize}
\item we say that a family $\Omega_{\delta}$ of discrete domains approximates a continuous domain $\Omega\subset\mathbb{C}$ as $\delta\to 0$, if $\partial\Omega_{\delta}$ converges to $\partial\Omega$ in the Hausdorff sense (note that our proofs can be easily generalized for the Carath\'eodory convergence of planar domains which is weaker than the Hausdorff one used in this paper for simplicity);
\item we say that an s-holomorphic function (or a spinor) $F_\delta$ defined in $\Omega_\delta^\mathrm{cm}$ (or its double cover) tends to a holomorphic function (or a spinor) $f$ as $\delta\to 0$, if the ``mid-edge values'' $F_\delta\big|_{\Omega_\delta^\mathrm{m}}$ approximate the values of $f$, while the ``corner values'' $F\big|_{\Omega_\delta^\tau}$, $\tau\in\{1,\lambda,i,\overline{\lambda}\}$, tend to the projections of $f$ onto the corresponding lines $\tau\mathbb{R}$ (see Definition~\ref{def: shol});
\item we say that a convergence of discrete functions $F_\delta(z)=F_\delta(z;a_1,a_2,\dots)$ to $f(z)=f(z;a_1,a_2,\dots)$ is uniform on some compact set, iff the differences $|F_\delta(z;a_1,a_2,\dots)-f(z;a_1,a_2,\dots)|$ are uniformly small as $\delta\to 0$, when we interpret lattice vertices (or mid-edges, corners, etc) $z,a_1,a_2,\dots$  as the corresponding complex points when we plug them into $f$.
\end{itemize}

The crucial ingredient of our proofs is the interplay between (a) the values of discrete spinor observables near their branching points, which are related to the ratios of spin correlations by Lemma~\ref{lem:ratios-disc-magnet-spinor}, and (b) the mid-range behavior of these observables, which can be further related to the asymptotics expansions of their scaling limits (\ref{eq: expansion-a}),~(\ref{eq: expansion-b}).

As a main tool to relate (a) and (b), we use a \emph{full-plane version $F_{\left[\mathbb{C}_{\delta},a\right]}$ of the spinor observable} (since we are interested in local considerations, it is sufficient to stick to the case of one marked point). Though it could be constructed as an infinite-volume limit of the finite-domain observables, we prefer a more explicit strategy, which is outlined after the following lemma claiming the existence of $F_{\left[\mathbb{C}_{\delta},a\right]}$.

\begin{lem}
\label{lem: def_f_C}
For $a\in\mathcal{V}_{\mathbb{\C}_\delta}^\circ$, there exists a (unique) s-holomorphic spinor $F_{\left[\mathbb{C}_{\delta},a\right]}:\mathcal{V}^{\mathrm{cm}}_{[\C_\delta,a^\rightarrow]}\rightarrow \C$ such that $F_{\left[\mathbb{C}_{\delta},a\right]}(a+\frac{3\delta}{2})=1$ and $F_{\left[\mathbb{C}_{\delta},a\right]}(z)=o(1)$ as $z\to\infty$. Moreover,
\begin{eqnarray}
\label{F_C-convergence}
\frac{1}{\vartheta(\delta)}F_{\left[\mathbb{C}_{\delta},a\right]}(z) & \underset{\delta\to0}{\longrightarrow} &  \frac{1}{\sqrt{z-a}}~=:f_{[\mathbb{C},a]}(z),
\end{eqnarray}
uniformly on compact subsets of $\mathbb{C}\setminus \{a\}$, where 
$\vartheta(\delta)$ is defined as
\begin{equation}
\label{vartheta-def}
\vartheta(\delta):=F_{[\C_\delta,a]}\left(a+\tfrac{3\delta}{2}+2\delta\lfloor\tfrac{1}{2\delta}\rfloor\right).
\end{equation}
\end{lem}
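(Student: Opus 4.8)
The plan is to build $F_{[\C_\delta,a]}$ directly as an s-holomorphic spinor with prescribed behavior at infinity, then extract convergence from a priori regularity estimates. I would proceed in three stages: existence/uniqueness of the discrete full-plane observable, uniform a priori bounds, and identification of the subsequential limits.

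\emph{Existence and uniqueness.} First I would construct $F_{[\C_\delta,a]}$ as a limit of finite-domain observables: take a sequence of large discs $B_R$ (intersected with $\C_\delta$), consider $F_{[B_R \cap \C_\delta, a]}$ suitably normalized so that its value at $a+\tfrac{3\delta}{2}$ equals $1$, and pass to a full-plane limit along a subsequence using the compactness provided by the s-holomorphicity machinery of \cite{smirnov-i,chelkak-smirnov-ii} (equicontinuity of s-holomorphic functions with locally bounded integrated square). This requires an a priori bound on $F_{[B_R\cap\C_\delta,a]}(z)$ for $z$ in a fixed compact set, which follows because this observable is the discrete logarithmic derivative $\E^+[\sigma_{a+2\delta}\cdots]/\E^+[\sigma_a\cdots]$ type quantity by Lemma~\ref{lem:ratios-disc-magnet-spinor}, hence bounded in terms of correlation ratios that are controlled uniformly away from $a$ and $\partial B_R$ (or, alternatively, one can use the explicit contour expansion and the fact that the leading contribution is a single short path). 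For uniqueness, I would argue that if $G$ is the difference of two such spinors then $G$ is an s-holomorphic spinor on $[\C,a^\rightarrow]$ vanishing at $a+\tfrac{3\delta}{2}$ and tending to $0$ at infinity; applying the discrete analogue of the argument in Remark~\ref{rem: obs_uniqueness}(i) — integrating $\Re\mathfrak{e}[G^2 dz]$ over a large contour and using that the boundary-free full-plane setting forces the contour integral to vanish in the limit, combined with the single-valuedness of $G^2$ around $a$ (since $G(a+\tfrac{3\delta}{2})=0$ removes the branching obstruction at the source, as encoded in the ${}^\rightarrow$ notation) — forces $G\equiv 0$.

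\emph{A priori bounds and convergence.} For the convergence statement, I would normalize: set $\widetilde F_\delta := \vartheta(\delta)^{-1} F_{[\C_\delta,a]}$, where the normalization point $a+\tfrac{3\delta}{2}+2\delta\lfloor\tfrac{1}{2\delta}\rfloor$ is a lattice corner at distance $\approx 1$ from $a$, so that $\widetilde F_\delta$ has controlled size at a point roughly at unit distance from $a$. Using the s-holomorphic regularity theory of \cite{chelkak-smirnov-ii}, the family $\{\widetilde F_\delta\}$ is precompact (in the topology of uniform convergence on compact subsets of $\C\setminus\{a\}$, with the mid-edge/corner conventions), so along any subsequence $\widetilde F_{\delta_n}\to f$ for some holomorphic spinor $f$ on $[\C,a]$. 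The limit $f$ is a holomorphic spinor branching at $a$, with no other branch points, which by Liouville-type considerations (plus the decay $F_\delta(z)=o(1)$ at infinity, which survives in the limit) must be a scalar multiple of $1/\sqrt{z-a}$: indeed $\sqrt{z-a}\,f(z)$ extends to an entire single-valued function vanishing at infinity, hence is constant, hence $0$, unless we allow a simple pole, which is exactly $1/\sqrt{z-a}$ up to a constant. The constant is pinned down to be $1$ by the normalization at the point $a+\tfrac{3\delta}{2}+2\delta\lfloor\tfrac1{2\delta}\rfloor \to 1$ (more precisely, one checks the corner value there converges to the projection of $1/\sqrt{z-a}\big|_{z=1}$ onto the appropriate line, forcing the multiplicative constant to be $1$). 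Since every subsequential limit equals $1/\sqrt{z-a}$, the full family converges.

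\emph{Main obstacle.} The delicate point is not the compactness argument — that is by now standard — but rather verifying that $\vartheta(\delta)$ is a \emph{genuine} normalization, i.e., that it does not degenerate ($\vartheta(\delta)\to$ a nonzero limit, or at least $\vartheta(\delta)\asymp 1$) and that dividing by it produces precisely the coefficient $1$ in front of $1/\sqrt{z-a}$ rather than some other constant. This amounts to a compatibility check between the source normalization $F_{[\C_\delta,a]}(a+\tfrac{3\delta}{2})=1$ near the singularity and the behavior at unit scale; the subtlety is that $F_{[\C_\delta,a]}$ itself, before renormalization, may not converge (its natural size at unit distance could be some $\delta$-dependent $\vartheta(\delta)$), and one must rule out that $\vartheta(\delta)\to 0$ or $\infty$ — equivalently, that the observable neither collapses nor blows up at macroscopic scale. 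I expect this to follow from a direct estimate on the discrete observable near $a$ (a few terms of the contour expansion, giving $F_{[\C_\delta,a]}$ on corners adjacent to $a$ explicitly) combined with the discrete Harnack/regularity estimates transporting that control out to unit scale, but making this fully rigorous is where the real work lies.
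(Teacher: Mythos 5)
Your proposal takes a compactness/Liouville route that is genuinely different from the paper's construction, which is fully explicit: the real part of $F_{[\C_\delta,a]}$ on $\mathcal{V}^1_{\C_\delta}$ is \emph{defined} as the discrete harmonic measure of the tip $a+\tfrac{3\delta}{2}$ in the slit lattice $\mathcal{V}^1_{\C_\delta}\setminus L_a$, lifted to the double cover by antisymmetry across $L_a$, extended to $\mathcal{V}^i_{\C_\delta}$ by discrete harmonic conjugation, and then to all of $\mathcal{V}^{\mathrm{cm}}$ by s-holomorphicity; convergence (\ref{F_C-convergence}) then follows from known convergence of discrete harmonic measures via Lemma~\ref{lem: hm-estimate-La+grad}. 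This is not just a stylistic preference: the explicit description delivers the quantitative tools used throughout Section~\ref{sec:analysis-of-spinors} — the Beurling bound (\ref{eq: hm-beurling}), the gradient estimate (\ref{eq: hm-gradients}), positivity of $\vartheta(\delta)$, and the identity $\mathsf{P}_{i\R}[F_{[\C_\delta,a]}(a+\tfrac{1\pm i}{2}\delta)]=\mp i$ that makes $F_{[\Omega_\delta,a;\dots]}-F_{[\C_\delta,a]}$ s-holomorphic across the source (Lemmas~\ref{lem: sing_f_Omega} and~\ref{lem: remove_sing}).

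Within your own scheme there are concrete gaps. (1)~Precompactness of $\vartheta(\delta)^{-1}F_{[\C_\delta,a]}$ on compacta of $\C\setminus\{a\}$ requires a uniform-in-$\delta$ bound on a full annulus; the normalization only fixes a single corner near $a+1$, and Lemma~\ref{lem:ratios-disc-magnet-spinor} only controls $a+\tfrac{3\delta}{2}$. The paper gets this for free from $0\le\mathrm{hm}\le 1$ plus a maximum-principle chain (see the proof of Lemma~\ref{lem: hm-estimate-La+grad}(iii)). (2)~Your Liouville step needs control of the subsequential limit both near $a$ and near infinity that uniform convergence on compacta does not supply. In particular, the $o(1)$ decay at $\infty$ at \emph{fixed} $\delta$ does not survive the limit $\delta\to 0$: to rule out a parasitic $\sqrt{z-a}$-term (mass at infinity) you need a uniform-in-$\delta$ rate, and that is precisely the Beurling estimate of Lemma~\ref{lem:discrete-beurling}, which your argument never produces. (3)~Your uniqueness step asserts that $G(a+\tfrac{3\delta}{2})=0$ ``removes the branching obstruction at the source.'' This is wrong as stated: the obstruction to continuing an s-holomorphic spinor across $a+\tfrac{\delta}{2}$ is the pair of projections $\mathsf{P}_{i\R}[G(a+\tfrac{1\pm i}{2}\delta)]$, and a single real condition at a different corner does not kill them. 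The paper has to verify the explicit identity of Lemma~\ref{lem: sing_f_Omega} (using the harmonic-measure formula) before this continuation is available; some replacement for that would have to be found before your uniqueness argument via $\Re\mathfrak{e}\int G^2$ can close.
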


\begin{proof} The detailed proof is given in Section~\ref{sub:full-plane}. First, we define the (real) values of $F_{[\C_\delta,a]}$ on ${\mathcal{V}_{\mathbb{\C}_\delta}^1}$ as the discrete harmonic measure of the tip point $a+\frac{3\delta}{2}$ in the slit discrete plane ${\mathcal{V}_{\mathbb{\C}_\delta}^1}\setminus\{x+a:x\le 0\}$. This definition is motivated by the following observation: in the continuous setup, the function $\Re\mathfrak{e}\,[{1}/{\sqrt{z-a}}\,]$ can be viewed as the properly normalized harmonic measure of the (small neighborhood of) tip $a$ in the slit plane $\mathbb{C}\setminus\{x+a:x\le 0\}$. Second, we extend $F_{[\C_\delta,a]}$ to $\mathcal{V}_{\mathbb{\C}_\delta}^i$ by harmonic conjugation, then by symmetry to another sheet of $[\C_\delta,a]$, and eventually as an s-holomorphic function to $\mathcal{V}^{\lambda}_{\mathbb{C}_\delta},\mathcal{V}^{\overline{\lambda}}_{\mathbb{C}_\delta}$ and $\mathcal{V}^\mathrm{m}_{\mathbb{C}_\delta}$. The convergence~(\ref{F_C-convergence})
follows from known results on convergence of harmonic measures (e.g., see~\cite{chelkak-smirnov-i}).
\end{proof}

\begin{rem}
The normalizing factor $\vartheta(\delta)$ is essentially the value of $F_{[\C_\delta,a]}$ at $a+1$. In Section~\ref{sub:full-plane} we show that
\begin{equation}
\label{eq: vartheta-estimate}
\mathrm{C}_{-}\sqrt{\delta}\leq\vartheta\left(\delta\right)\leq\mathrm{C}_{+}\sqrt{\delta},
\end{equation}
where $\mathrm{C}_\pm>0$ are some absolute constants. Note that one can compute the limit $\lim_{\delta\to 0}\vartheta(\delta)/\sqrt{\delta}$ using the recent work of Dub\'edat \cite{dubedat-i}, but we do not need this sharp result.
\end{rem}

We further use the normalizing factors $\vartheta(\delta)$ introduced above in order to formulate the following convergence theorem for discrete spinor observables away from $\partial\Omega$ and $a_1,\dots,a_n$:
\begin{thm}
\label{thm:obs-away}
Let discrete simply connected domains $\Omega_{\delta}$ approximate a bounded simply connected domain $\Omega$ as $\delta\to 0$. Then, for any $\epsilon>0$ and any $n=1,2,\dots$, we have
\begin{eqnarray*}
\frac{1}{\vartheta\left(\delta\right)}F_{\left[\Omega_{\delta},a_1,\dots,a_n\right]}(z) & \underset{\delta\to0}{\longrightarrow} & f_{\left[\Omega,a_1,\dots,a_n\right]}(z),
\end{eqnarray*}
uniformly over all $a_1,\dots,a_n\in\mathcal{V}_{\Omega_{\delta}}^{\circ}$ and $z\in\mathcal{V}_{\Omega_{\delta}}^{\mathrm{m}}$ which are at the distance at least $\epsilon$ from $\partial\Omega$ and from each other.
\end{thm}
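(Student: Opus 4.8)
The plan is to combine three standard ingredients from the theory of discrete holomorphic observables: a priori compactness of the renormalized family $\vartheta(\delta)^{-1}F_{[\Omega_\delta,a;a_1,\dots,a_k]}$, identification of all subsequential limits via the boundary value problem (\ref{def:spinor-bdry})--(\ref{def:spinor-2}), and uniqueness of the solution to that problem (Remark~\ref{rem: obs_uniqueness}(i)). First I would localize: fix $\epsilon>0$ and restrict attention to the (discrete) region $\Omega_\delta^\epsilon$ of points at distance at least $\tfrac{\epsilon}{2}$ from $\partial\Omega$ and from the marked faces. On this region the observable is a genuine s-holomorphic spinor on an honest double cover of an annulus-type region, with no singular behavior, so the machinery of \cite{smirnov-i,chelkak-smirnov-ii} applies directly once we control the size of $F_{[\Omega_\delta,a;a_1,\dots,a_k]}$. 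The normalization by $\vartheta(\delta)$ is chosen precisely so that, near the source $a$, the function behaves like $\vartheta(\delta)^{-1}F_{[\C_\delta,a]}$, whose renormalized limit is $1/\sqrt{z-a}$ by Lemma~\ref{lem: def_f_C}; this provides the uniform bound near $a$ that is then propagated everywhere by the maximum principle for the discrete primitive of $\Re\mathfrak{e}[F^2dz]$ (Remark~\ref{rem:intF^2}).

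The heart of the argument is the a priori estimate. I would introduce the real-valued discrete primitive $H_\delta := \mathrm{Int}\,\Re\mathfrak{e}[(\vartheta(\delta)^{-1}F_{[\Omega_\delta,a;a_1,\dots,a_k]})^2\,dz]$, which is well-defined up to an additive constant thanks to s-holomorphicity, and which by (\ref{eq: discrete_bc}) is (discrete) constant on each boundary arc of $\partial\Omega_\delta$. The function $H_\delta$ is subharmonic on the black sublattice and superharmonic on the white one in the sense of \cite{chelkak-smirnov-ii}, so it obeys a maximum principle. Near each $a_j$ it is bounded above (this is exactly the heuristic behind condition (\ref{def:spinor-1})), near $a$ it matches, to leading order, the primitive of $\Re\mathfrak{e}[1/(z-a)]$ which is $\log|z-a|$ plus bounded, and on the boundary it is piecewise constant; fixing the additive constant appropriately, these one-sided bounds combine with the maximum principle to give $|H_\delta|\le C(\epsilon)$ uniformly on $\Omega_\delta^\epsilon$. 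Standard arguments then upgrade this to a uniform bound on $|\vartheta(\delta)^{-1}F_{[\Omega_\delta,a;a_1,\dots,a_k]}|$ on $\Omega_\delta^{2\epsilon}$ (e.g.\ via the discrete analogue of the fact that $|f|^2$ is controlled by oscillation of $H$), and, via the discrete Cauchy--Riemann equations, to equicontinuity. This is the step I expect to be the main obstacle: tracking the singular contributions at $a$ and at each $a_j$ with enough precision, uniformly in the positions of the marked points and in the domain, while the discrete geometry near $\partial\Omega$ is only controlled in the Hausdorff sense.

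Granted the a priori bound and equicontinuity, a diagonal/Arzel\`a--Ascoli extraction yields, along any subsequence $\delta_n\to0$, a limit $f$ which is holomorphic on $[\Omega,a;a_1,\dots,a_k]$ away from the branch points (s-holomorphicity passes to the limit as plain holomorphy, by the results of \cite{smirnov-i,chelkak-smirnov-ii} on convergence of s-holomorphic functions). I would then check that $f$ satisfies each of (\ref{def:spinor-bdry})--(\ref{def:spinor-2}): the boundary relation because $H_{\delta_n}$ converges to a function harmonic in $\Omega$ with piecewise-constant (hence locally constant, since $\Omega$ is simply connected and the arcs merge in the limit) boundary values, forcing $\Re\mathfrak{e}[\int f^2 dz]$ to be constant on $\partial\Omega$, which is equivalent to (\ref{def:spinor-bdry}); the condition (\ref{def:spinor-1}) at $a_j$ because $H_{\delta_n}$ is uniformly bounded above near $a_j$, so the coefficient of the logarithm in $\Re\mathfrak{e}[\int f^2\,dz]$ is nonpositive there, which is exactly (\ref{def:spinor-1}); and the normalization (\ref{def:spinor-2}) at $a$ by comparison with the full-plane observable, using Lemma~\ref{lem: def_f_C}: near $a$ the ratio $F_{[\Omega_\delta,a;\{a_j\}]}/F_{[\C_\delta,a]}$ is a discrete holomorphic function with removable singularity, bounded by the a priori estimate, and converges to a holomorphic function equal to $1$ at $a$, which pins the $1/\sqrt{z-a}$ coefficient of $f$ to be $1$. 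By Remark~\ref{rem: obs_uniqueness}(i) the limit $f$ is unique, hence independent of the subsequence, and therefore the whole family converges; the uniformity over $a,a_1,\dots,a_k,z$ at distance $\ge\epsilon$ from $\partial\Omega$ and each other follows because every estimate above was uniform in these data. The details of all of this are exactly what is deferred to Section~\ref{sub:convergence-away-sing}.
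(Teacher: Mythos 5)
Your outline matches the paper's high-level strategy — use the discrete primitive $H_\delta$ of $\Re\mathfrak{e}[(\vartheta(\delta)^{-1}F_\delta)^2dz]$, establish compactness, identify subsequential limits via the boundary value problem, and conclude by uniqueness — but there is a genuine gap at the step you yourself flag as the crux: the a priori bound $|H_\delta|\le C(\epsilon)$ on $\Omega_\delta(\epsilon)$. You assert that this follows from ``one-sided bounds near the singularities combined with the maximum principle,'' but this does not work. The sub/superharmonicity of $H^\circ_\delta/H^\bullet_\delta$ together with the Dirichlet conditions on $\partial\Omega_\delta$ only controls the extrema of $H_\delta$ on $\Omega_\delta(\epsilon)$ by its values on the circles $\{|z-a|=\epsilon\}$, $\{|z-a_j|=\epsilon\}$ — and those values are exactly what needs to be bounded; the one-sidedness (e.g.\ ``bounded from below near $a_j$'', not ``above'' as you wrote, see Remark~\ref{rem:cont-spinor-motivation}) does not supply a quantitative constant. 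The paper treats this as the hard part and handles it by contradiction: assume $M_\delta(\epsilon):=\max_{\Omega_\delta(\epsilon)}|H_\delta|\to\infty$, renormalize by $M_\delta(\epsilon)$, extract a subsequential limit $\tilde h$, and show that $\tilde h$ would then solve the \emph{homogeneous} problem (since the extra normalization kills the source term coming from $F_{[\C_\delta,a]}$), hence $\tilde h\equiv 0$ — while the separate Lemma~\ref{non-vanish} rules out $\tilde h\equiv 0$ by a careful argument tracking where the maximum is attained. Without this or an equivalent mechanism, your proof is circular.

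Two smaller points. First, you only verify three of the four conditions of Proposition~\ref{prop: obs_uniqueness}: you get condition (1) from the Dirichlet limit and conditions (3), (4) from the boundedness near the branch points and the comparison with $F_{[\C_\delta,a]}$, but you omit condition (2) — the sign of the outer normal derivative, i.e.\ that $\tilde h$ cannot be nonnegative in a boundary neighborhood. The boundary condition (\ref{def:spinor-bdry}) says $f^2\nu_{\mathrm{out}}$ is real \emph{and nonnegative} on $\partial\Omega$; the constancy of $\Re\mathfrak{e}\int f^2dz$ on $\partial\Omega$ only gives realness. The paper invokes Remark~6.3 of \cite{chelkak-smirnov-ii} here. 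Second, $H_\delta$ is not merely ``piecewise constant'' on $\partial\Omega_\delta$ — by Proposition~\ref{prop:int-square-spinor} it is identically zero there (in the simply connected case there is a single boundary arc), which is what makes the Dirichlet argument clean.
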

\begin{proof}
See Section~\ref{sub:convergence-away-sing}.
\end{proof}

Since we are interested in the coefficient $\mathcal{A}_{[\Omega,a_1,\dots,a_n]}$ in front of the term $\sqrt{z-a_1}$ in (\ref{eq: expansion-a}), along with the discrete analogue $F_{[\mathbb{C}_\delta,a]}$ of the function $1/\sqrt{z-a}$ given by Lemma~\ref{lem: def_f_C}, we also need a discrete counterpart $G_{[\mathbb{C}_\delta,a]}$ of the function $\sqrt{z-a}$. We construct $G_{[\mathbb{C}_\delta,a]}$ by ``discrete integration'' of $F_{[\mathbb{C}_\delta,a]}$, just mimicking the continuous setup. It is sufficient to define $G_{[\mathbb{C}_\delta,a]}$ on $\mathcal{V}_{\mathbb{C}_\delta}^1$ only, as Lemma~\ref{lem:ratios-disc-magnet-spinor} deals with the (real) values of discrete observables at the point $a_1+\tfrac{3\delta}{2}\in \mathcal{V}_{\mathbb{C}_\delta}^1$.

\begin{lem}
\label{def_r_delta}
For $a\in\mathcal{V}_{\mathbb{\C}_\delta}^\circ$, there exists a (unique) discrete harmonic spinor $G_{[\mathbb{C}_\delta,a]}:\mathcal{V}^{1}_{[\C_\delta,a]}\to \R$ such that $G_{[\mathbb{C}_\delta,a]}$ vanishes on the half-line $\{x+a:x\le 0\}$, $G_{[\mathbb{C}_\delta,a]}(a+\tfrac{3\delta}{2})=\delta$, and $G_{[\mathbb{C}_\delta,a]}(z)=O(|z-a|^{1/2})$ as $z\to\infty$. Moreover,
\begin{eqnarray}
\label{G_C-convergence}
\frac{1}{\vartheta(\delta)}G_{[\mathbb{C}_\delta,a]}(z)& \underset{\delta\to0}{\longrightarrow} & \Re\mathfrak{e}\,\sqrt{z-a}~=:g_{[\mathbb{C},a]}(z),
\end{eqnarray}
uniformly on compact subsets of $\mathbb{C}\setminus \{a\}$.
\end{lem}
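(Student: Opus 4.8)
The plan is to construct $G_{[\C_\delta,a]}$ by discrete integration of the s-holomorphic spinor $F_{[\C_\delta,a]}$ provided by Lemma~\ref{lem: def_f_C}, mimicking the continuous identity $\sqrt{z-a}=\tfrac12\int(\zeta-a)^{-1/2}d\zeta$. More precisely, on the square lattice the s-holomorphicity of $F_{[\C_\delta,a]}$ allows one to define the real part of a discrete primitive of $F_{[\C_\delta,a]}^2$; but here we want a primitive of $F_{[\C_\delta,a]}$ itself scaled appropriately. The cleanest route is to use the standard fact (going back to \cite{smirnov-i,chelkak-smirnov-ii}) that for an s-holomorphic function $F$ the quantity $\Re\mathfrak{e}[F(x)\cdot\overline{\tau(x)}\cdot(\text{edge increment})]$ around a face sums to zero, so that one can integrate the ``form'' $\Re\mathfrak{e}[F\,dz]$ along lattice paths consistently. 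I would first set $G_{[\C_\delta,a]}(a+\tfrac{3\delta}{2}):=\delta$ and then define $G_{[\C_\delta,a]}$ on the rest of $\mathcal{V}^1_{[\C_\delta;a^\rightarrow]}$ by summing increments $\tfrac{\delta}{2}\Re\mathfrak{e}[F_{[\C_\delta,a]}\cdot(\text{direction})]$ along lattice paths; the spinor nature of $F_{[\C_\delta,a]}$ (monodromy $-1$ around $a$) together with the vanishing of the closed-contour sums guarantees that this is well defined on the slit plane and, moreover, that $G_{[\C_\delta,a]}$ vanishes on $\{a+x:x\le 0\}$ once we check the increments along that half-line cancel — this uses that $F_{[\C_\delta,a]}$ takes purely imaginary values there (being collinear to $i$ by the symmetry built into its construction), so $\Re\mathfrak{e}$ of the horizontal increments is zero.

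Next I would verify discrete harmonicity of $G_{[\C_\delta,a]}$ on $\mathcal{V}^1_{\C_\delta}$. This is the algebraic heart of the matter: the discrete Laplacian of $\Re\mathfrak{e}[\int F\,dz]$ for an s-holomorphic $F$ equals (up to a positive constant) a sum of projections of $F$ onto the corner lines, which vanishes precisely by the s-holomorphicity relations $F(x)=\mathsf{P}_{\ell(x)}[F(z)]$ — this is exactly the computation carried out in \cite[Section 3]{chelkak-smirnov-ii}, so I would cite it. Away from $a$ this gives harmonicity on all of $\mathcal{V}^1_{\C_\delta}$; at the branching point the spinor structure means $G_{[\C_\delta,a]}$ lives on the double cover and the slit $\{a+x:x\le0\}$ is where the two sheets are glued with opposite signs, consistent with the stated domain $\mathcal{V}^1_{[\C_\delta;a^\rightarrow]}$. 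Uniqueness then follows from a discrete maximum principle: a discrete harmonic spinor on the slit plane vanishing on the slit, of sublinear-in-$|z-a|^{1/2}$ growth... actually of growth $O(|z-a|^{1/2})$, and with prescribed value $\delta$ at $a+\tfrac{3\delta}{2}$, is determined — the difference of two candidates is harmonic, vanishes on the slit and at the tip, and is $o(|z|^{1/2})$, hence identically zero by a Phragm\'en--Lindel\"of-type argument for the slit plane.

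Finally, the convergence (\ref{G_C-convergence}): since $\tfrac{1}{\vartheta(\delta)}F_{[\C_\delta,a]}\to (z-a)^{-1/2}$ uniformly on compacts of $\C\setminus\{a\}$ by Lemma~\ref{lem: def_f_C}, and $G_{[\C_\delta,a]}$ is obtained from $F_{[\C_\delta,a]}$ by discrete path-integration (an operation that is continuous with respect to uniform convergence on compacts, at least away from the singularity), one expects $\tfrac{1}{\vartheta(\delta)}G_{[\C_\delta,a]}(z)\to \tfrac12\Re\mathfrak{e}\int_{*}^{z}(\zeta-a)^{-1/2}d\zeta=\Re\mathfrak{e}\sqrt{z-a}$, with the base point and the normalization $G_{[\C_\delta,a]}(a+\tfrac{3\delta}{2})=\delta$ matching the continuous normalization because $\vartheta(\delta)=F_{[\C_\delta,a]}(a+\tfrac{3\delta}{2}+\dots)$ and $\Re\mathfrak{e}\sqrt{z-a}$ at $z=a+O(\delta)$ is $O(\sqrt\delta)$. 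The main obstacle I anticipate is the behavior near $a$: the discrete integration picks up contributions from mesh-scale neighborhoods of the branching point where $F_{[\C_\delta,a]}\sim \vartheta(\delta)/\sqrt{z-a}$ is large, so one must show these do not spoil either the well-definedness (the closed-loop cancellation around $a$ relies on the spinor monodromy being exactly $-1$, which is built in) or the convergence (the integral $\int(\zeta-a)^{-1/2}d\zeta$ is convergent near $a$, so with the a priori bound (\ref{eq: vartheta-estimate}) $\vartheta(\delta)\asymp\sqrt\delta$ the discrete contribution from an $\epsilon$-disc around $a$ is $O(\sqrt\epsilon)$ uniformly in $\delta$, allowing an $\epsilon\to0$ argument). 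Establishing this uniform smallness near the tip, together with an equicontinuity/precompactness argument for $\tfrac{1}{\vartheta(\delta)}G_{[\C_\delta,a]}$ on compacts of $\C\setminus\{a\}$ (from a discrete gradient estimate for discrete harmonic functions, e.g.\ \cite{chelkak-smirnov-i}), will be the technical core, and the details are naturally deferred to Section~\ref{sub:full-plane}.
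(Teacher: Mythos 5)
Your high-level plan — build $G$ as a ``discrete integral'' of $F_{[\C_\delta,a]}$ mimicking $\sqrt{z-a}=\tfrac12\int(\zeta-a)^{-1/2}d\zeta$ and then verify vanishing, harmonicity and convergence — is the right picture, and several parts (vanishing on the cut via antisymmetry of $F^1$, uniqueness via a Phragm\'en--Lindel\"of / maximum-principle argument, an $\epsilon\to0$ argument near the tip using $\vartheta(\delta)\asymp\sqrt\delta$) are in the spirit of the paper. But the central construction step has a genuine gap. You propose to define $G$ on $\mathcal{V}^1_{[\C_\delta,a^\rightarrow]}$ by path-integrating increments of the form $\tfrac\delta2\Re[F\cdot(\mathrm{direction})]$, and you justify path-independence by invoking ``closed-contour cancellation'' from \cite{smirnov-i,chelkak-smirnov-ii}. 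That standard fact concerns $\Re\int F^2\,dz$, not $\Re\int F\,dz$, and even at the level of $F$ itself (where $\oint F\,dz=0$ by discrete holomorphicity on $\mathcal{V}^{1,i}$) the discrete primitive does not live on $\mathcal{V}^1$: it lives on the \emph{dual} lattice $\mathcal{V}^{\lambda,\overline\lambda}$. There is no a~priori path-independent ``integration'' of $\Re[F\,dz]$ that outputs a function on $\mathcal{V}^1$, so the object you construct is not well defined, and the subsequent appeal to the Chelkak--Smirnov sub/super-harmonicity computation for $\Re\int F^2$ does not apply to it.

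The paper's proof sidesteps this exactly: it defines $G$ by a purely one-dimensional horizontal sum,
$G_{[\C_\delta,a]}(z):=\delta\sum_{j\ge0}F^1_{[\C_\delta,a]}(z-2j\delta)$,
which is manifestly well defined on $\mathcal{V}^1$ (convergence of the tail is controlled by the estimate~(\ref{eq: hm-near-La})) and manifestly vanishes on $L_a$. The price is that discrete harmonicity is \emph{not} automatic and becomes the real content of the lemma: off the positive half-line $R_a$ it is inherited termwise from harmonicity of $F^1$, but on $R_a$ one must show that the infinite series of Laplacians of $F^1$ along the horizontal ray sums to zero. The paper does this with a discrete Green/flux identity on growing squares $\mathcal{R}_N(z)$, bounding the boundary flux via the gradient estimate $|\nabla F^1|=O(\sqrt\delta\,|z-a|^{-3/2})$ from Lemma~\ref{lem: hm-estimate-La+grad}, and sending $N\to\infty$. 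Your proposal contains no analogue of this step, because your (nonexistent) path-independent primitive would have made harmonicity free. Also, the paper's convergence argument identifies $\nu(\delta)^{-1}G$ as a nonnegative discrete harmonic function vanishing on $L_a$ and uniformly bounded on an annulus, then invokes the Poisson-kernel convergence of~\cite{chelkak-smirnov-i} and deduces $\nu(\delta)\sim\vartheta(\delta)$ by differentiating the limit; this is cleaner and more robust than the equicontinuity-plus-limit-of-integral route you sketch, which by itself does not single out $\Re\sqrt{z-a}$ as the limit. To repair your argument you would need to either switch to the paper's one-dimensional summation and supply the flux-balance proof of harmonicity on $R_a$, or else work with the genuine primitive on $\mathcal{V}^{\lambda,\overline\lambda}$ and then relate it back to a function on $\mathcal{V}^1$, which is a nontrivial extra step.
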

\begin{proof}
For $z\in\mathcal{V}_{[\mathbb{\C}_\delta,a]}^1$, we set
$G_{[\mathbb{C}_\delta,a]}(z):=\delta\cdot\sum_{j=0}^{\infty} F_{[\mathbb{C}_\delta,a]}(z-2j\delta)$. Certainly, one should check the convergence of this series and the harmonicity on the half-line $\{x+a:x\ge 0\}$, see further details in Section~\ref{sub:full-plane}.
\end{proof}

In the continuum limit, the leading term in the expansion of $f_{[\Omega,a_1,\dots,a_n]}-f_{[\C,a_1]}$ near $a_1$ is given by $2\mathcal{A}_{[\Omega,a_1,\dots,a_n]}\sqrt{z-a_1}$. It is hence plausible to believe that the same holds true for the discrete spinors, and one has
\[
(F_{[\Omega_\delta,a_1,\dots,a_n]}-F_{[\C_\delta,a_1]})\left(a_1+\tfrac{3\delta}{2}\right) ~\approx~ 2\Re\mathfrak{e}\,\mathcal{A}_{[\Omega_\delta,a_1,\dots,a_n]}\cdot G_{[\C_\delta,a_1]}\left(a_1+\tfrac{3\delta}{2}\right)
\]
up to higher-order terms (the real part appears due to discrete complex analysis subtleties, as real and imaginary parts of s-holomorphic functions are defined on different lattices, and $a_1+\frac{3\delta}{2}\in\mathcal{V}^{1}_{\Omega_\delta}$).
We justify this heuristics in
\begin{thm}
Under conditions of Theorem~\ref{thm:obs-away}, we have
\begin{eqnarray}
F_{\left[\Omega_{\delta},a_1,\dots,a_n\right]}\left(a_1+\tfrac{3\delta}{2}\right)-1- 2\Re\mathfrak{e}\,\mathcal{A}_{\left[\Omega,a_1,\dots,a_n\right]}\cdot\delta & = & o\left(\delta\right)
\label{eq:localization-near-a}
\end{eqnarray}
as $\delta\to 0$, uniformly over all $a_1,\dots,a_n\in\mathcal{V}_{\Omega_{\delta}}^{\circ}$ which are at the distance at least $\epsilon$ from $\partial\Omega$ and from each other.
\label{thm:localization-near-a}
\end{thm}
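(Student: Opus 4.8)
The plan is to localize the analysis to a fixed $\epsilon$-neighborhood $B_\epsilon(a)$ of the source point $a$, where the only relevant branching is at $a$ itself, and to compare the finite-domain observable $F_{[\Omega_\delta,a;a_1,\dots,a_k]}$ with the full-plane observable $F_{[\mathbb{C}_\delta,a]}$ from Lemma~\ref{lem: def_f_C}. First I would consider the difference $D_\delta := F_{[\Omega_\delta,a;a_1,\dots,a_k]} - F_{[\mathbb{C}_\delta,a]}$, which is a genuine s-holomorphic \emph{function} (not a spinor) on $\mathcal{V}^{\mathrm{cm}}$ of a neighborhood of $a$: the $1/\sqrt{z-a}$-type singularities cancel because both observables are normalized to have the same value $1$ at $a+\tfrac{3\delta}{2}$, and more precisely because the local expansions both start with the same leading term; so $D_\delta$ extends s-holomorphically across $a$, with no monodromy. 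The heart of the matter is then to show that $D_\delta$, restricted to $\mathcal{V}^1_{\Omega_\delta}$, behaves near $a$ like $2\,\Re\mathfrak{e}\,\mathcal{A}_{[\Omega,a;a_1,\dots,a_k]}\cdot G_{[\mathbb{C}_\delta,a]}$ up to $o(\delta)$, where $G_{[\mathbb{C}_\delta,a]}$ is the discrete $\sqrt{z-a}$ of Lemma~\ref{def_r_delta}; evaluating at $z = a+\tfrac{3\delta}{2}$ and using $G_{[\mathbb{C}_\delta,a]}(a+\tfrac{3\delta}{2}) = \delta$ gives exactly~(\ref{eq:localization-near-a}).

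To control $D_\delta$ near $a$ I would argue in two steps. Away from $a$ (say on the circle $|z-a| = \epsilon/2$), Theorem~\ref{thm:obs-away} applied to $F_{[\Omega_\delta,a;a_1,\dots,a_k]}$ and Lemma~\ref{lem: def_f_C} applied to $F_{[\mathbb{C}_\delta,a]}$ give, after dividing by $\vartheta(\delta)$,
\[
\tfrac{1}{\vartheta(\delta)} D_\delta \;\longrightarrow\; f_{[\Omega,a;a_1,\dots,a_k]} - f_{[\mathbb{C},a]} \;=\; 2\mathcal{A}_{[\Omega,a;a_1,\dots,a_k]}\sqrt{z-a} + O((z-a)^{3/2})
\]
uniformly on that circle, by the definition~(\ref{eq: expansion-a}) of $\mathcal{A}$. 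Thus $D_\delta/\vartheta(\delta)$ is, on the scale $\epsilon$, uniformly close to a fixed s-holomorphic-in-the-limit function whose mid-range behavior is $2\mathcal{A}\sqrt{z-a}$. The second step is a \emph{discrete harmonic/s-holomorphic continuation} argument: since $D_\delta$ is s-holomorphic in the whole disc $B_{\epsilon/2}(a)$ (no singularity!), its values near $a$ are determined by its values on $\partial B_{\epsilon/2}(a)$ through a discrete Poisson-type kernel, and one knows (from the machinery of \cite{chelkak-smirnov-ii}, which already underlies Theorem~\ref{thm:obs-away}) that such a function is, up to $o(\vartheta(\delta)) = o(\sqrt{\delta})$ errors on the relevant scale, a discrete-holomorphic approximation of the continuous holomorphic function $2\mathcal{A}\sqrt{z-a} + \text{(higher order)}$ — which on $\mathcal{V}^1$ is approximated precisely by $2\,\Re\mathfrak{e}\,\mathcal{A}\cdot G_{[\mathbb{C}_\delta,a]}$ via Lemma~\ref{def_r_delta}. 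The appearance of the real part is forced by the s-holomorphicity bookkeeping: $a+\tfrac{3\delta}{2} \in \mathcal{V}^1_{\Omega_\delta}$, so the observable there takes values in $\mathbb{R}$, and $G_{[\mathbb{C}_\delta,a]}$ is exactly the real-valued discrete harmonic spinor matching $\Re\mathfrak{e}\sqrt{z-a}$ on that sublattice.

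I expect the main obstacle to be the \emph{uniformity and sharpness} of the error estimate: we need $o(\delta)$, not merely $o(\sqrt{\delta})$, in~(\ref{eq:localization-near-a}), whereas $\vartheta(\delta) \asymp \sqrt{\delta}$ and the natural convergence statements (Theorem~\ref{thm:obs-away}, Lemmas~\ref{lem: def_f_C}, \ref{def_r_delta}) only control things up to factors of $\vartheta(\delta)$. The point is that $D_\delta(a+\tfrac{3\delta}{2})$ itself is of order $\vartheta(\delta)\cdot G_{[\mathbb{C}_\delta,a]}(a+\tfrac{3\delta}{2})/\vartheta(\delta) \asymp \sqrt{\delta}\cdot \delta/\sqrt{\delta} = \delta$, so an $o(\vartheta(\delta))\cdot(\text{something of size }\delta/\vartheta(\delta))$ error is indeed $o(\delta)$ — but making this rigorous requires a careful two-scale estimate: the error in approximating $D_\delta$ by $\vartheta(\delta)\cdot\bigl(2\mathcal{A}\sqrt{z-a}\bigr)_{\text{disc}}$ must be shown to be $o(\vartheta(\delta)\cdot|z-a|^{1/2})$ uniformly down to $|z-a| \sim \delta$, not just at the fixed scale $\epsilon$. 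This is where the regularity theory for discrete s-holomorphic functions (equicontinuity of $F^2$-primitives, discrete Harnack and boundary estimates from \cite{chelkak-smirnov-i,chelkak-smirnov-ii}) must be invoked in its quantitative form, and where the technical details postponed to Section~\ref{sub:convergence-near-sing} genuinely do the work; the rest of the argument is the bookkeeping described above plus the elementary observation that subtracting $F_{[\mathbb{C}_\delta,a]}$ removes the singularity while preserving the normalization at $a+\tfrac{3\delta}{2}$.
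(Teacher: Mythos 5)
Your high-level plan --- subtract the full-plane observable $F_{[\C_\delta,a]}$, identify the mid-range behaviour via Theorem~\ref{thm:obs-away} and Lemma~\ref{lem: def_f_C}, and match the second-order term against $G_{[\C_\delta,a]}$ --- is aligned with the paper's Section~\ref{sub:convergence-near-sing}. However, there is a genuine gap, and also a small inaccuracy along the way.

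The inaccuracy: $D_\delta := F_{[\Omega_\delta,a;\Ak]} - F_{[\C_\delta,a]}$ is \emph{not} a single-valued function; both observables have monodromy $-1$ around $a$, and so does their difference (its limit is $2\mathcal{A}\sqrt{z-a} + \dots$, still a spinor). What Lemma~\ref{lem: remove_sing} gives is that the discrete singularity at $a+\tfrac{\delta}{2}$ is removable, i.e.\ $D_\delta$ is an s-holomorphic \emph{spinor} in a neighbourhood of $a$. This matters because the natural local object is still a spinor on the slit plane, so Beurling-type bounds rather than interior Poisson bounds are the relevant tool.

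The genuine gap is the absence of the symmetrization with respect to the horizontal axis through $a$. You assert that the limit $2\mathcal{A}\sqrt{z-a}$, restricted to $\mathcal{V}^1$, ``is approximated precisely by $2\,\Re\mathfrak{e}\,\mathcal{A}\cdot G_{[\C_\delta,a]}$''. This is false for complex $\mathcal{A}$: on $\mathcal{V}^1$ the limit of $D_\delta/\vartheta(\delta)$ is $\Re\mathfrak{e}\,[2\mathcal{A}\sqrt{z-a}] = 2\Re\mathfrak{e}\,\mathcal{A}\,\Re\mathfrak{e}\,\sqrt{z-a} - 2\Im\mathfrak{m}\,\mathcal{A}\,\Im\mathfrak{m}\,\sqrt{z-a}$, and the second term is $O(|z-a|^{1/2})$, the \emph{same} order as the term you want to isolate. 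It happens to vanish at $z=a+\tfrac{3\delta}{2}$ (purely real increment), but it does not vanish on a circle of radius $r$ around $a$, so after subtracting $2\Re\mathfrak{e}\,\mathcal{A}\cdot G_{[\C_\delta,a]}$ the residual is only $O(r^{1/2})$ on $\ell_a(r)$. Feeding that into a Beurling/harmonic-measure estimate gives $O(\delta)$, not $o(\delta)$ --- not enough. The paper fixes this by replacing $F_\delta$ with the symmetrized $\tfrac12(F_\delta + F_\delta^{(\mathcal{R})})$, where $F_\delta^{(\mathcal{R})}$ is the observable in the reflected domain; since $f^{(\mathcal{R})}(z) = \overline{f(\mathcal{R}_a(z))}$, this replaces $\mathcal{A}$ by $\tfrac12(\mathcal{A}+\overline{\mathcal{A}}) = \Re\mathfrak{e}\,\mathcal{A}$ in the local expansion, and the resulting $S_\delta$ has a continuous limit that is $O(|z-a|^{3/2})$ near $a$ and, crucially, vanishes on the cut $L_a$ (the odd/even split also handles the spinor sign-change across $L_a$). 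Only with both features can one apply the discrete Beurling estimate in the slit plane to get $S_\delta(a+\tfrac{3\delta}{2}) \le C\delta^{1/2} r^{-1/2}\max_{\ell_a(r)}|S_\delta| = O(\delta^{1/2}r)$, hence $o(\delta)$ after multiplying back by $\vartheta(\delta) = O(\delta^{1/2})$ and letting $r\to 0$. Your ``two-scale estimate'' correctly names the difficulty but offers no concrete mechanism; the symmetrization plus the slit-plane Beurling estimate is precisely the missing mechanism, and without the symmetrization the required $O(r^{3/2})$ decay of the limiting residual simply does not hold.
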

\begin{proof}
See Section~\ref{sub:convergence-near-sing}.
\end{proof}

\begin{rem}
\label{rem:ratios-spinspin-nearby}
In the proof of Theorem~\ref{thm:localization-near-a} we show that $F_{\left[\Omega_{\delta},a_1,\dots,a_n\right]}$ and $F_{\left[\C_\delta,a_1\right]}$ are $\delta$-close to each other at all points around $a_1$, in particular at $a_1+(1\pm\frac{i}{2})\delta$. Together with (\ref{eq:nearby-ratios-disc-magnet-spinor}), this implies
\[
\frac{\mathbb{E}_{\Omega_{\delta}}^{+}\left[\sigma_{a_1+(1\pm i)\delta}\sigma_{a_2}\dots\sigma_{a_n}\right]} {\mathbb{E}_{\Omega_{\delta}}^{+}\left[\sigma_{a_1}\dots\sigma_{a_n}\right]}=1+O(\delta) 
\]
as $\delta\to 0$, uniformly over all $a_1,\dots,a_n\in\mathcal{V}_{\Omega_{\delta}}^{\circ}$ which are at the distance at least $\epsilon$ from $\partial\Omega$ and from each other.
\end{rem}

The similar analysis for the quantity $\mathcal{B}_{[\Omega,a,b]}$ defined by the expansion (\ref{eq: expansion-b}) is simpler since we need to match the first-order coefficients instead of the second-order ones. The result is given by

\begin{thm}
For $n=2$, under conditions of Theorem~\ref{thm:obs-away}, we have
\begin{eqnarray}
F_{\left[\Omega_{\delta},a,b\right]}\left(b+\tfrac{\delta}{2}\right)\pm i\mathcal{B}_{[\Omega,a,b]} & = & o\left(1\right)
\label{eq:localization-near-b}
\end{eqnarray}
as $\delta\to 0$, uniformly over all $a,b\in\mathcal{V}_{\Omega_{\delta}}^{\circ}$ which are at the distance at least $\epsilon$ from $\partial\Omega$ and from each other, where the sign $\pm$ depends on the sheet of $[\Omega,a,b]$.
\label{thm:localization-near-b}
\end{thm}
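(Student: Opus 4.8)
The plan is to mimic the strategy of Theorem~\ref{thm:localization-near-a}, but one order lower: there we had to extract the \emph{second}-order coefficient $\mathcal A$ in the expansion of $f_{[\Omega,a;a_1,\dots,a_k]}$ near the source $a$, whereas here we only need to match the \emph{leading} coefficient of $f_{[\Omega,a;b]}$ near the other branching point $b$, where the spinor behaves like $\pm i\mathcal B_{[\Omega,a;b]}/\sqrt{z-b}$. Accordingly, instead of needing both the discrete analogue $F_{[\C_\delta,b]}$ of $1/\sqrt{z-b}$ \emph{and} the discrete analogue $G_{[\C_\delta,b]}$ of $\sqrt{z-b}$, we will only need a full-plane object behaving like $1/\sqrt{z-b}$ near $b$. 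The key point is that, near $b$, the spinor $F_{[\Omega_\delta,a;b]}$ branches around $b$ and has \emph{no source} there — it has no simple pole of the form $1/\sqrt{z-b}$ coming from a $\delta$-function, only the globally determined coefficient $\pm i\mathcal B$ — so the relevant local model near $b$ is $i\cdot F_{[\C_\delta,b]}$ suitably normalized, exactly the same harmonic-measure object from Lemma~\ref{lem: def_f_C} (with the extra $i$ because of the $i\R$ boundary behaviour~(\ref{def:spinor-1}) at $b$, i.e. the factor distinguishing conditions~(\ref{def:spinor-1}) and~(\ref{def:spinor-2})).

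The key steps, in order, are the following. First, I would invoke Theorem~\ref{thm:obs-away}: $\vartheta(\delta)^{-1}F_{[\Omega_\delta,a;b]}\to f_{[\Omega,a;b]}$ uniformly on the annulus $\{\epsilon\le|z-b|\le 2\epsilon\}$ (say), away from $\partial\Omega$, $a$, and $b$. Hence on this fixed annulus $\vartheta(\delta)^{-1}F_{[\Omega_\delta,a;b]}$ is uniformly close to $\pm i\mathcal B_{[\Omega,a;b]}f_{[\C,b]}(z) + O(\sqrt{z-b})$, i.e. to $\pm i\mathcal B_{[\Omega,a;b]}$ times the full-plane spinor $f_{[\C,b]}(z)=1/\sqrt{z-b}$, up to a term that is $O(\sqrt\epsilon)$ on this annulus and bounded in a controlled way. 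Second, I would consider the difference spinor $D_\delta := F_{[\Omega_\delta,a;b]} \mp i\,(\mathcal B_{[\Omega,a;b]}/\vartheta(\delta))\,F_{[\C_\delta,b]}$ (with $F_{[\C_\delta,b]}$ the full-plane observable of Lemma~\ref{lem: def_f_C}, centred at $b$, which is s-holomorphic and branches only around $b$). This $D_\delta$ is s-holomorphic in the $\epsilon$-disc around $b$, it branches around $b$, but — crucially — by the first step together with Lemma~\ref{lem: def_f_C} it is \emph{small} (of order $\vartheta(\delta)\cdot o(1)$, in a sense to be quantified) on the boundary circle $\{|z-b|=\epsilon\}$. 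Third, I would run the discrete-complex-analysis machinery of Section~\ref{sec:analysis-of-spinors} — the same maximum-principle / discrete harmonic-measure estimates used in the proof of Theorem~\ref{thm:localization-near-a}, via the discrete primitive of $\Re\mathfrak e[D_\delta^2\,dz]$ — to propagate this smallness from the circle $\{|z-b|=\epsilon\}$ inward to the corner $b+\tfrac\delta2$. Since $D_\delta$ has no source at $b$ (both $F_{[\Omega_\delta,a;b]}$ and $\mp i(\mathcal B/\vartheta)F_{[\C_\delta,b]}$ are genuine branching spinors with no $\delta$-mass), the estimate at $b+\tfrac\delta2$ reads $D_\delta(b+\tfrac\delta2)=\vartheta(\delta)\cdot o(1)=o(1)$ (using~(\ref{eq: vartheta-estimate})). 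Finally, combining this with the normalization $F_{[\C_\delta,b]}(b+\tfrac{3\delta}{2})=1$ and an easy check that $F_{[\C_\delta,b]}(b+\tfrac\delta2)$ is the appropriate $O(1)$ constant times $\vartheta(\delta)^{-1}$ is \emph{not} what happens — rather, I will set things up so that the bookkeeping of the full-plane normalization makes $\mp i(\mathcal B/\vartheta(\delta))\,F_{[\C_\delta,b]}(b+\tfrac\delta2)$ equal to $\mp i\mathcal B_{[\Omega,a;b]}$ up to $o(1)$ — which, together with $D_\delta(b+\tfrac\delta2)=o(1)$, yields $F_{[\Omega_\delta,a;b]}(b+\tfrac\delta2)\pm i\mathcal B_{[\Omega,a;b]}=o(1)$, as claimed. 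Uniformity over $a,b$ at distance $\ge\epsilon$ from $\partial\Omega$ and from each other follows from the uniformity already built into Theorem~\ref{thm:obs-away} and into the full-plane estimates.

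The main obstacle, as in Theorem~\ref{thm:localization-near-a}, is the near-singularity analysis of Step~3: one cannot simply pass to the limit in $D_\delta$ near $b$ because of the branching, and one must instead argue through the real-valued discrete primitive $H_\delta$ of $\Re\mathfrak e[D_\delta^2\,dz]$, which behaves like a discrete harmonic (subharmonic/superharmonic) function and for which the sign constraints at the branch point $b$ (the analogue of condition~(\ref{def:spinor-1}), forcing the relevant coefficient to have a fixed sign) must be exploited together with a discrete Harnack/boundary-Harnack estimate to turn ``small on the circle of radius $\epsilon$'' into ``small at the corner $b+\tfrac\delta2$''. The one genuine simplification compared to Theorem~\ref{thm:localization-near-a} is that here we match first-order data, so we do not need the auxiliary function $G_{[\C_\delta,b]}$ nor the more delicate two-term subtraction; the factor that must be controlled is $O(1)$ rather than $O(\delta)$, which removes one layer of cancellation bookkeeping. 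I would therefore present this proof as a streamlined corollary of the techniques developed for Theorem~\ref{thm:localization-near-a}, referring to Section~\ref{sub:convergence-near-sing} for the shared estimates rather than repeating them.
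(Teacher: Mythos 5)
Your proposal has a genuine structural gap at the crucial step, stemming from a misreading of the local structure of the full-plane observable near its base point.

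The key assertion you make is that $D_\delta := F_{[\Omega_\delta,a;b]} \mp i(\mathcal{B}_{[\Omega,a;b]}/\vartheta(\delta))\,F_{[\C_\delta,b]}$ ``has no source at $b$ (both \dots are genuine branching spinors with no $\delta$-mass).'' This is false for $F_{[\C_\delta,b]}$. By construction (Lemma~\ref{lem: def_f_C} and Lemma~\ref{lem: sing_f_Omega}), $F_{[\C_\delta,b]}$ is s-holomorphic only on $\mathcal{V}^{\mathrm{cm}}_{[\C_\delta,b^\rightarrow]}$, i.e. the corner $b+\frac{\delta}{2}$ is \emph{excluded}: the s-holomorphicity fails there, since the projections of the nearby mid-edge values onto $i\R$ disagree (that is precisely the ``discrete singularity'' or ``source''). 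Meanwhile $F_{[\Omega_\delta,a;b]}$ has its source at $a+\frac{\delta}{2}$ and, as the paper stresses before the proof, \emph{remains s-holomorphic near the branching $b$}. So your subtraction does not remove a source from $F_{[\Omega_\delta,a;b]}$ as it did near $a$ in Lemma~\ref{lem: remove_sing}; it \emph{introduces} one, and $D_\delta$ is not s-holomorphic at $b+\frac{\delta}{2}$. Your Step 3 (propagating smallness to $b+\frac{\delta}{2}$ by a harmonic-function-type argument for the primitive of $\Re\mathfrak{e}[D_\delta^2\,dz]$) then fails at exactly the point you need it, because the sub/super-harmonicity of $H_\delta$ breaks at $b+\delta$. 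Moreover, with $F_{[\C_\delta,b]}$ of size $\Theta(1)$ at $b+\frac{\delta}{2}$ and the extra $\vartheta(\delta)^{-1}\sim\delta^{-1/2}$ in your normalization, the proposed cancellation at $b+\frac{\delta}{2}$ does not bookkeep to the claimed $o(1)$.

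The paper circumvents this in a genuinely different way. It does not use the primitive of $\Re\mathfrak{e}[F^2\,dz]$ for either localization theorem: both Theorem~\ref{thm:localization-near-a} and Theorem~\ref{thm:localization-near-b} are proved by \emph{reflection symmetrization}. Here one reflects through the horizontal axis $\{b+x:x\in\R\}$ and forms $\frac{1}{2}(F_\delta+F_\delta^{(\mathcal R)})$, whose restriction to the $i$-corners $\mathcal{V}^i_{\Lambda_\delta^+}$ on one sheet (with cut $L_b$) is a genuine discrete harmonic function vanishing on $L_b\setminus\{b+\frac{\delta}{2}\}$. One then subtracts not $F_{[\C_\delta,b]}$ but the \emph{scalar discrete harmonic measure} $W_\delta=\mathrm{hm}^{\C^i_\delta\setminus L_b}_{\{b+\frac{\delta}{2}\}}$, normalized by the discrete coefficient $\mathcal{B}_\delta:=-iF_\delta(b+\frac{\delta}{2})$. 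The resulting $T_\delta$ is a single-sheeted, discrete harmonic, $i\R$-valued function that vanishes on $L_b$ \emph{including} at $b+\frac{\delta}{2}$ — no double cover, no s-holomorphicity, no source. The proof then proceeds by contradiction: along a subsequence $\mathcal B_\delta\to\widetilde{\mathcal B}$, and if $\widetilde{\mathcal B}\neq\mathcal B$ (or $\widetilde{\mathcal B}=\infty$), the scaling limit of $T_\delta$ blows up like $\Re\mathfrak{e}[1/\sqrt{z-b}]$ near $b$, which is incompatible with the discrete maximum principle for the harmonic $T_\delta$ vanishing on $L_b$. Your high-level instinct — that only first-order matching near $b$ is needed and that the argument should be lighter than near $a$ — is correct, but the correct light version of the argument avoids the full-plane spinor $F_{[\C_\delta,b]}$ entirely, replacing it with the plain harmonic measure $W_\delta$ on a slit plane.
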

\begin{proof}
See Section~\ref{sub:convergence-near-sing}.
\end{proof}


\begin{proof}[\textbf{Proof of Theorems~\ref{thm:log-derivatives} and~\ref{thm:ratio-free-plus}}]
Due to Lemma~\ref{lem:ratios-disc-magnet-spinor}, asymptotics (\ref{eq:log-der-x}) is a reformulation of (\ref{eq:localization-near-a}), while Theorem~\ref{thm:ratio-free-plus} is equivalent to Theorem~\ref{thm:localization-near-b} (the sign $\pm$ is fixed due to the positivity of spin-spin correlations). To check (\ref{eq:log-der-y}), we rotate our domain $\Omega$  around $a$ by 90\textdegree{} clockwise. According to conformal covariance rule (\ref{A-covariance}), the coefficient $\mathcal{A}$ multiplies by $i$, so the desired result follows from $\Re\mathfrak{e}\,[i\mathcal{A}]=-\Im\mathfrak{m}\,\mathcal{A}$.
\end{proof}

\subsection{Integrating logarithmic derivatives and explicit formulae}\label{sub:expl_formulae}
We start this Section by stating explicit formulae for $\mathcal{A}_{[\Omega,\Ak]}$ and $\mathcal{B}_{[\Omega,a,b]}$.
\begin{lem}
\label{lem: A_explicit}
One has
 \begin{eqnarray}
 \label{eq: A_expl}
  \mathcal{A}_{[\Omega,a_1,\dots,a_n]}&=&2\partial_{a_1}\log \langle\sigma_{a_1}\dots\sigma_{a_n}\rangle^{+}_{\Omega};\\
  \label{eq: B_expl}
  \mathcal{B}_{[\Omega,a,b]}&=&\frac{\langle\sigma_a\sigma_b\rangle_\Omega^\mathrm{free}}{\langle\sigma_a\sigma_b\rangle_\Omega^+},
  \end{eqnarray}
 where $\partial_{a_1}=\tfrac{1}{2}(\partial_{x_1}-i\partial_{y_1})$ if $a_1=x_1+iy_1$, and
the quantities $\langle\sigma_{a_1}\dots\sigma_{a_n}\rangle_{\Omega}$ are defined by the conformal covariance rule (\ref{conf-cov}) and explicit formulae (\ref{Expl_free}), (\ref{12ptFcts}).
\end{lem}
\begin{proof}
Note that (\ref{conf-cov}) readily implies that $2\partial_{a_1}\log \langle\sigma_{a_1}\dots\sigma_{a_n}\rangle^{+}_{\Omega}$ satisfies the same covariance rule (\ref{A-covariance}) as $\mathcal{A}_{[\Omega,a_1,\dots,a_n]}$, and both sides in (\ref{eq: B_expl}) are conformally invariant. Therefore, it suffices to check (\ref{eq: A_expl}), (\ref{eq: B_expl}) for $\Omega=\bH$. This amounts to computing the continuous spinors $f_{[\bH,a_1,\dots,a_n]}$, which can be done by solving a linear system for the coefficients of polynomials $Q$ introduced in the proof of Lemma \ref{lem: spinor_rough}.

Let us first illustrate this procedure for $n=1,2$. It is straightforward to check that the spinors
\begin{equation*}
f_{[\mathbb{H},a]}(z)=\frac{(2i\Im\mathfrak{m}\, a)^\frac{1}{2}}{\sqrt{(z-a)(z-\overline{a})}},
\end{equation*}
\begin{equation*}
f_{[\mathbb{H},a,b]}(z)=\frac{(2i\Im\mathfrak{m}\, a)^\frac{1}{2}}{|b-\overline{a}|+|b-a|} \cdot \frac{[(\overline{b}-\overline{a})(\overline{b}-a)]^\frac{1}{2}(z-b)+[(b-a)(b-\overline{a})]^\frac{1}{2}(z-\overline{b})} {[(z-a)(z-\overline{a})(z-b)(z-\overline{b})]^{1/2}},
\end{equation*}
satisfy (\ref{def:spinor-bdry})\,--\,(\ref{def:spinor-1}). Working out the expansions (\ref{eq: expansion-a}) and (\ref{eq: expansion-b}), one arrives at
\[
\mathcal{A}_\mathbb{H}(a)=-\frac{1}{8i\Im\mathfrak{m}\, a},
\]
\[
\mathcal{A}_\mathbb{H}(a,b)=-\frac{1}{8i\Im\mathfrak{m}\, a} +\frac{|b-\overline{a}|-|b-a|}{4(|b-\overline{a}|+|b-a|)}\left(\frac{1}{b-a}-\frac{1}{\overline{b}-a}\right)
\]
and
\[
\mathcal{B}_\mathbb{H}(a,b)=\frac{(4\Im\mathfrak{m}\, a\Im\mathfrak{m}\, b)^\frac{1}{2}}{|b-\overline{a}|+|b-a|}
= \frac{\langle\sigma_a\sigma_b\rangle_\mathbb{H}^\mathrm{free}}{\langle\sigma_a\sigma_b\rangle_\mathbb{H}^+},
\]
proving (\ref{eq: B_expl}). 
Taking the logarithmic derivatives with respect to $a$ of the explicit expressions for $\langle\sigma_a\rangle_\mathbb{H}^+$ and $\langle\sigma_a\sigma_b\rangle_\mathbb{H}^+$ given by (\ref{Expl_free}), one arrives at (\ref{eq: A_expl}) for $n=1,2$. The proof for higher $n$ is somewhat technical, and we defer it to Appendix.
\end{proof}

The following Proposition, which is a rephrasing of Corollary~\ref{cor:conv-up-to-normalization}, now follows easily from Theorem~\ref{thm:log-derivatives} via integration with respect to positions of points.

\begin{prop}
As $\Omega_\delta$ approximates $\Omega$, one has
\begin{equation}
\label{eq: conv_ratios}
 \frac{\E^{+}_{\Omega_{\delta}}[\sigma_{b_1}\dots\sigma_{b_n}]}{\E^{+}_{\Omega_{\delta}}[\sigma_{a_1}\dots\sigma_{a_n}]}
~\underset{\delta\to0}{\longrightarrow}~
\frac{\langle\sigma_{b_1}\dots\sigma_{b_n}\rangle^{+}_{\Omega}}{\langle\sigma_{a_1}\dots\sigma_{a_n}\rangle^{+}_{\Omega}}
\end{equation}
uniformly with respect to all $a_k$ at distance at least $\epsilon$ from $\partial\Omega$ and each other, and $b_k$ obeying the same condition.
\label{prop: conv_ratios}
\end{prop}

\begin{proof}
Color the faces of $\Omega_{\delta}$ black and white, in a chessboard
fashion. By Remark \ref{rem:ratios-spinspin-nearby}, the ratio of spin-spin correlations
at two adjacent spins tends to $1$, uniformly away from the boundary,
so we can assume that all $a_k,b_k$ are colored white. Let $a'_1\in\Omega$ be such that $[a_1,a'_1]$ is a horizontal segment contained in $\Omega$ and disjoint with $a_2,\dots,a_n$. Denote by \mbox{$a_1=v_{1}\sim\dots\sim v_{m_{\delta}}=a_1'$} a straight horizontal lattice path approximating $[a_1,a'_1]$. Then, by Theorem~\ref{thm:log-derivatives}, one has
\begin{align*}
\log\frac{\mathbb{E}_{\Omega_{\delta}}^{+}\left[\sigma_{v_{j+1}}\sigma_{a_2}\dots\sigma_{a_n}\right]} {\mathbb{E}_{\Omega_{\delta}}^{+}\left[\sigma_{v_{j}}\sigma_{a_2}\dots\sigma_{a_n}\right]} & = \left(\frac{\mathbb{E}_{\Omega_{\delta}}^{+}\left[\sigma_{v_{j+1}}\sigma_{a_2}\dots\sigma_{a_n}\right]} {\mathbb{E}_{\Omega_{\delta}}^{+}\left[\sigma_{v_{j}}\sigma_{a_2}\dots\sigma_{a_n}\right]}-1\right)(1 +o(1))\\ & = 2\delta\cdot[\Re\mathfrak{e}\,\mathcal{A}_\Omega(v_j;a_2,\dots,a_n)+o(1)]
\end{align*}
as $\delta\to 0$, where the $o(1)$ terms are unform in $j$. Consequently,
\begin{align}\label{LogE/Econv}
\log\frac{\mathbb{E}_{\Omega_{\delta}}^{+}\left[\sigma_{a_1'}\sigma_{a_2}\dots\sigma_{a_n}\right]} {\mathbb{E}_{\Omega_{\delta}}^{+}\left[\sigma_{a_1}\sigma_{a_2}\dots\sigma_{a_n}\right]} & = 2\delta\cdot\sum\limits_{j=1}^{m_\delta}\Re\mathfrak{e}\,\mathcal{A}_\Omega(v_j,a_2,\dots,a_n)+o(1) \cr
& \underset{\delta\to0}{\longrightarrow} \int_{[a_1,a_1']}\Re\mathfrak{e}\,\mathcal{A}_\Omega(x_1+iy_1,a_2,\dots,a_n)dx_1.
\end{align}
A similar formula with $-\Im\mathfrak{m}\,\mathcal{A}_\Omega(x_1+iy_1,a_2,\dots,a_n)dy_1$ in the right-hand side applies to the case when $[a_1,a'_1]$ is a vertical segment. Since by (\ref{eq: A_expl}) one has
\begin{eqnarray*}
\Re\mathfrak{e}\,\mathcal{A}_\Omega(x_1+iy_1,a_2,\dots,a_n) & = &\partial_{x_1}\log \langle\sigma_{x_1+iy_1}\dots\sigma_{a_n}\rangle^{+}_{\Omega},\\
-\Im\mathfrak{m}\,\mathcal{A}_\Omega(x_1+iy_1,a_2,\dots,a_n)& = &\partial_{y_1}\log \langle\sigma_{x_1+iy_1}\dots\sigma_{a_n}\rangle^{+}_{\Omega},
\end{eqnarray*}
the proposition readily follows by applying the above computation, if necessary, to several horizontal and vertical segments and to other marked points $a_2,\dots,a_n$.
\end{proof}

\begin{rem}
\label{rem:implicit_integration_L}
Note that whereas Lemma \ref{lem: A_explicit} allows us to write the limits of correlations in Theorem \ref{thm:1-3pts} explicitly, it is otherwise not essential for the proof. In fact, Theorem \ref{thm:log-derivatives} implies that discrete exact differential forms $d\log\mathbb{E}_{\Omega_{\delta}}^{+}\left[\sigma_{a_1}\dots\sigma_{a_n}\right]$ have continuous limits $\mathcal{L}_{\Omega,n}$ given in terms of $\mathcal{A}_\Omega$, see (\ref{LogE/Econv}). Thus, one could simply \mbox{\emph{define}} the functions $\langle\sigma_{a_1}\dots\sigma_{a_n}\rangle^{+}_{\Omega}$ to be $\exp(\int\mathcal{L}_{\Omega,n})$ with properly chosen constants of integration. The only property of $\langle\sigma_{a_1}\dots\sigma_{a_n}\rangle^{+}_{\Omega}$ that we use below (and which also defines their multiplicative normalization for $n>2$ uniquely) is the following:
\begin{equation}
\label{eq: bdry_decor}
\frac{\langle\sigma_{a_1}\dots\sigma_{a_n}\rangle_{\Omega}^{+}} {\langle\sigma_{a_1}\rangle_{\Omega}^{+}\langle\sigma_{a_2}\dots\sigma_{a_n}\rangle_{\Omega}^{+}}\to 1 ~~\text{as}~~a_1\to\partial\Omega.
\end{equation}
In principle, existence of constants of integration such that (\ref{eq: bdry_decor}) holds can be derived from asymptotics
\begin{eqnarray*}
\mathcal{A}_\Omega(a_1,\dots,a_n)-\mathcal{A}_\Omega(a_1)=o(1),& a_1\to\partial\Omega&\\
\mathcal{A}_\Omega(a_1,\dots,a_n)-\mathcal{A}_\Omega(a_1,\dots\hat{a}_k\dots,a_n)=o(1), &a_k\to \partial\Omega, &k=2,\dots,n,
\end{eqnarray*}
which could be obtained by comparing solutions to the corresponding boundary value problems (\ref{def:spinor-bdry})\,--\,(\ref{def:spinor-1}) as $a_1\to \partial\Omega$.
\end{rem}

\subsection{From ratios of correlations to Theorem~\ref{thm:2pts}}\label{sub:ratios-to-thm-2pts}
This section is devoted to the proof of Theorem~\ref{thm:2pts}. Our goal is to relate the normalizing factors $\rho_2(\delta,\Omega_\delta)$ from Corollary~\ref{cor:conv-up-to-normalization} (which, in principle, might depend on $\Omega$), with the full-plane normalization $\rho(\delta)$. The proof is based on Theorem~\ref{thm:ratio-free-plus}, which claims the convergence of the ratios of free and $+$ spin-spin correlations to an explicit limit $\mathcal{B}_\Omega(a,b)={\left\langle \sigma_{a}\sigma_{b}\right\rangle _{\Omega}^{\mathrm{free}}}/{\left\langle \sigma_{a}\sigma_{b}\right\rangle _{\Omega}^{+}}$. We also use classical FKG (e.g., see \cite[Chapter~2]{grimmett}) and GHS (see~\cite{griffith-hurst-sherman}) inequalities for the Ising model. The small additional ingredient is given by
\begin{rem}
\label{rem: ratio_free_plus}
The following property holds:
\begin{equation}
\label{eq: ratio_free_plus}
\mathcal{B}_\Omega(a,b)\to 1 ~~\text{as}~~\mathcal{D}_\Omega(a,b)\to 0, \quad \text{and} \quad
\mathcal{B}_\Omega(a,b)\to 0 ~~\text{as}~~ a\to\partial \Omega,
\end{equation}
where
\begin{equation}
\label{eq: D-def}
\mathcal{D}_\Omega(a,b):=\frac{|a-b|}{\text{dist}(\{a,b\},\partial\Omega)}
\end{equation}
is a quantity that measures how deeply in the bulk of $\Omega$ the points $a,b$ are. This follows readily from  conformal invariance of $\mathcal{B}_\Omega(a,b)$ and explicit formulae (\ref{12ptFcts}) for the half-plane. Also, if one denotes
\[
\langle \sigma_a\sigma_b\rangle_{\C}^{+}:=|a-b|^{-\frac{1}{4}},
\]
then
\begin{equation}
\label{eq: bulk_limit}
\frac{\left\langle \sigma_{a}\sigma_{b}\right\rangle_{\Omega}^{+}}{\left\langle \sigma_{a}\sigma_{b}\right\rangle _{\C}^{+}}\to 1 ~~\text{as}~~ \mathcal{D}_\Omega(a,b)\to 0.
\end{equation}
Indeed, let $\varphi$ be a conformal map from $\Omega$ to $\bH$ such that $\varphi(a)=i$. Due to standard estimates, one has $\varphi(b)\to i$ and $|\varphi'(b)|\to |\varphi'(a)|$ as $\mathcal{D}_\Omega(a,b)\to 0$. Therefore, 
\[
\frac{\left\langle \sigma_{a}\sigma_{b}\right\rangle _{\Omega}^{+}}{\left\langle \sigma_{a}\sigma_{b}\right\rangle _{\C}^{+}}
=
\frac{\langle \sigma_{\varphi(a)}\sigma_{\varphi(b)}\rangle _{\bH}^{+}|\varphi'(a)|^{\frac{1}{8}}|\varphi'(b)|^{\frac{1}{8}}}{\left\langle \sigma_{a}\sigma_{b}\right\rangle _{\C}^{+}}
=
\frac{|\varphi(a)\!-\!\varphi(b)|^{-\frac14}|\varphi'(a)|^{\frac{1}{4}}}{|a-b|^{-\frac14}}\left(1+o(1)\right)\to 1.
\]
\end{rem}

\begin{lem}
For any $\eta>0$ there exists an $\epsilon>0$ such that the following holds: if $\mathcal{D}_\Omega(a,b)<\epsilon$ and $\Omega_\delta$ approximates $\Omega$, then
\[
1-\eta \leq\frac{\E_{\C_\delta}[\sigma_{a}\sigma_{b}]}{\E_{\Omega_\delta}^{+}[\sigma_{a}\sigma_{b}]}\leq 1
\]
provided that $\delta$ is small enough.
\label{lem: decor_disc}
\end{lem}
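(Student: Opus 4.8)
The plan is to sandwich the ratio between two quantities whose behaviour is already under control, using only the inequalities recalled just above the statement together with Theorem~\ref{thm:ratio-free-plus}. Dividing the FKG chain
\[
\mathbb{E}^{\mathrm{free}}_{\Omega_\delta^\bullet}[\sigma_{a+\delta}\sigma_{b+\delta}]\ \le\ \mathbb{E}_{\C_\delta}[\sigma_a\sigma_b]\ \le\ \mathbb{E}^{+}_{\Omega_\delta}[\sigma_a\sigma_b]
\]
by the strictly positive number $\mathbb{E}^{+}_{\Omega_\delta}[\sigma_a\sigma_b]$ immediately yields the upper bound $\mathbb{E}_{\C_\delta}[\sigma_a\sigma_b]/\mathbb{E}^{+}_{\Omega_\delta}[\sigma_a\sigma_b]\le 1$ for \emph{every} $\delta$, with no smallness assumption, as well as
\[
\frac{\mathbb{E}^{\mathrm{free}}_{\Omega_\delta^\bullet}[\sigma_{a+\delta}\sigma_{b+\delta}]}{\mathbb{E}^{+}_{\Omega_\delta}[\sigma_a\sigma_b]}\ \le\ \frac{\mathbb{E}_{\C_\delta}[\sigma_a\sigma_b]}{\mathbb{E}^{+}_{\Omega_\delta}[\sigma_a\sigma_b]}.
\]

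For the lower bound I would invoke Theorem~\ref{thm:ratio-free-plus}, which says that the left-hand side of the last display converges, as $\delta\to 0$, to $\mathcal{B}_\Omega(a;b)$; and by Remark~\ref{rem: ratio_free_plus} (conformal invariance of $\mathcal{B}_\Omega(a;b)$ together with the explicit half-plane formula~(\ref{2ptFcts})) one has $\mathcal{B}_\Omega(a;b)\to 1$ as $\mathcal{D}_\Omega(a;b)\to 0$. Thus, given $\eta>0$, I would first choose $\epsilon>0$ so small that $\mathcal{D}_\Omega(a;b)<\epsilon$ forces $\mathcal{B}_\Omega(a;b)>1-\tfrac{\eta}{2}$; then for all sufficiently small $\delta$ (depending on $\Omega,a,b,\eta$) Theorem~\ref{thm:ratio-free-plus} gives $\mathbb{E}^{\mathrm{free}}_{\Omega_\delta^\bullet}[\sigma_{a+\delta}\sigma_{b+\delta}]/\mathbb{E}^{+}_{\Omega_\delta}[\sigma_a\sigma_b]>1-\eta$, whence $\mathbb{E}_{\C_\delta}[\sigma_a\sigma_b]/\mathbb{E}^{+}_{\Omega_\delta}[\sigma_a\sigma_b]>1-\eta$ as well. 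Combined with the upper bound this finishes the argument. (The degenerate case $a=b$, where both correlations equal $1$, is trivial.)

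The only point requiring genuine care — everything else being a rearrangement of already-established facts — is that the threshold $\epsilon$ produced by Remark~\ref{rem: ratio_free_plus} must be taken \emph{independently of the domain} $\Omega$, as the Lemma demands. This follows because $\mathcal{B}_\Omega(a;b)=\mathcal{B}_{\mathbb{H}}(\varphi(a);\varphi(b))$ for the uniformizing map $\varphi:\Omega\to\mathbb{H}$ with $\varphi(a)=i$, and standard Koebe-type distortion estimates show that $\mathcal{D}_\Omega(a;b)\to 0$ forces the half-plane quantity $u_{\varphi(a)\varphi(b)}$ to $0$ at a rate depending only on $\mathcal{D}_\Omega(a;b)$; the explicit expression for $\mathcal{B}_{\mathbb{H}}$ in~(\ref{2ptFcts}) then gives $\mathcal{B}_{\mathbb{H}}\to 1$ uniformly. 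This is the mild technical obstacle; the rest is bookkeeping with the FKG/Griffiths comparisons across the two discretizations (faces of $\Omega_\delta$ with $+$ boundary conditions versus vertices of $\Omega_\delta^\bullet$ with free boundary conditions, and the lattice shifts $a\mapsto a+\delta$), which is exactly the content already quoted before the statement.
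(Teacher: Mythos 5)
Your proposal is correct and follows the same route as the paper: the FKG sandwich $\mathbb{E}^{\mathrm{free}}_{\Omega_\delta^\bullet}\le\mathbb{E}_{\C_\delta}\le\mathbb{E}^{+}_{\Omega_\delta}$ gives the upper bound for free, and the lower bound comes from Theorem~\ref{thm:ratio-free-plus} combined with $\mathcal{B}_\Omega(a;b)\to 1$ as $\mathcal{D}_\Omega(a;b)\to 0$ (Remark~\ref{rem: ratio_free_plus}). The paper's proof is exactly this, with $\Lambda_\delta=\Omega_\delta^\bullet-\delta$ playing the role of your shifted vertex domain; your additional remark on why the threshold $\epsilon$ is uniform in $\Omega$ (conformal invariance of $\mathcal{B}$ plus Koebe distortion) correctly unpacks what the paper leaves implicit in the cited remark.
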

\begin{proof}
By FKG inequality,
$
\mathbb{E}_{\Lambda_{\delta}}^{\mathrm{free}}\left[\sigma_{a}\sigma_{b}\right]\leq\mathbb{E}_{\mathbb{C}_{\delta}}\left[\sigma_{a}\sigma_{b}\right]\leq\mathbb{E}_{\Omega_{\delta}}^{+}\left[\sigma_{a}\sigma_{b}\right]
$ for any domain $\Lambda_\delta$ containing $a,b$, hence the right-hand side readily follows. For the left-hand side, choose $\Lambda_\delta=\Omega^{\bullet}_\delta-\delta$. We have, by Theorem~\ref{thm:ratio-free-plus},
\[
\frac{\mathbb{E}_{\Lambda_{\delta}}^{\mathrm{free}}\left[\sigma_{a}\sigma_{b}\right]}{\mathbb{E}_{\Omega_{\delta}}^{+}\left[\sigma_{a}\sigma_{b}\right]}
=
\frac{\mathbb{E}_{\Omega^{\bullet}_\delta}^{\mathrm{free}}\left[\sigma_{a+\delta}\sigma_{b+\delta}\right]} {\mathbb{E}_{\Omega_{\delta}}^{+}\left[\sigma_{a}\sigma_{b}\right]} ~\underset{\delta\to0}{\longrightarrow}~\mathcal{B}_\Omega(a,b). 
\]
Due to (\ref{eq: ratio_free_plus}), one can choose $\epsilon$ so that $\mathcal{B}_\Omega(a,b)\ge 1-\frac{\eta}{2}$, which gives the result.
\end{proof}

\begin{proof}[\textbf{Proof of Theorem~\ref{thm:2pts}}]
Fix $\eta>0$. For shortness, below we will write $a+\epsilon$ for its lattice approximations $a+2\delta\lfloor\frac{\epsilon}{2\delta}\rfloor$, and $a+1$ for its lattice approximation $a+2\delta\lfloor\frac{1}{2\delta}\rfloor$. Denote
\begin{equation}
\label{eq: R+def}
R^{\Omega_\delta}_{a,b}:=
(\varrho(\delta))^{-1}{\mathbb{E}_{\Omega_{\delta}}^{+}\left[\sigma_{a}\sigma_{b}\right]}.
\end{equation}
Recall that $\varrho(\delta)={\mathbb{E}_{\mathbb{C}_{\delta}}\left[\sigma_{a}\sigma_{a+1}\right]}$ by definition, so one can write
\[
R^{\Omega_\delta}_{a,b}=\frac{\mathbb{E}_{\Omega_{\delta}}^{+}\left[\sigma_{a}\sigma_{b}\right]}{\mathbb{E}^{+}_{\Omega_{\delta}}\left[\sigma_{a}\sigma_{a+\epsilon}\right]}\cdot
\frac{\mathbb{E}_{\Omega_{\delta}}^{+}\left[\sigma_{a}\sigma_{a+\epsilon}\right]}{\mathbb{E}_{\mathbb{C}_{\delta}}\left[\sigma_{a}\sigma_{a+\epsilon}\right]}\cdot
\frac{\mathbb{E}_{\C_{\delta}}\left[\sigma_{a}\sigma_{a+\epsilon}\right]}{\mathbb{E}_{\mathbb{C}_{\delta}}\left[\sigma_{a}\sigma_{a+1}\right]}.
\]
By Lemma \ref{lem: decor_disc}, we can find a small $\epsilon>0$ and a large domain $\Lambda_\delta$ containing $a$, $b$, $a+\epsilon$ and $a+1$, such that
\[
1-\eta ~\leq~ \frac{\mathbb{E}_{\Omega_{\delta}}^{+}\left[\sigma_{a}\sigma_{a+\epsilon}\right]}{\mathbb{E}_{\mathbb{C}_{\delta}}\left[\sigma_{a}\sigma_{a+\epsilon}\right]} ~\leq~ 1\quad \text{and}\quad 1-\eta ~\le~ \frac{\mathbb{E}_{\C_{\delta}}\left[\sigma_{a}\sigma_{a+\epsilon}\right]}{\mathbb{E}_{\mathbb{C}_{\delta}}\left[\sigma_{a}\sigma_{a+1}\right]}\cdot
\frac{\mathbb{E}_{\Lambda_{\delta}}^{+}\left[\sigma_{a}\sigma_{a+1}\right]}{\mathbb{E}_{\Lambda_{\delta}}^+\left[\sigma_{a}\sigma_{a+\epsilon}\right]} ~\le~ 1+\eta,
\]
provided that $\delta$ is small enough. Consequently,
\[
(1-\eta)^2
\frac{\mathbb{E}_{\Omega_{\delta}}^{+}\left[\sigma_{a}\sigma_{b}\right]}{\mathbb{E}^+_{\Omega_{\delta}}\left[\sigma_{a}\sigma_{a+\epsilon}\right]}\cdot
\frac{\mathbb{E}_{\Lambda_{\delta}}^{+}\left[\sigma_{a}\sigma_{a+\epsilon}\right]}{\mathbb{E}^+_{\Lambda_{\delta}}\left[\sigma_{a}\sigma_{a+1}\right]}
\,\leq\, R^{\Omega_\delta}_{a,b}
\,\leq\, (1+\eta)
\frac{\mathbb{E}_{\Omega_{\delta}}^{+}\left[\sigma_{a}\sigma_{b}\right]}{\mathbb{E}^+_{\Omega_{\delta}}\left[\sigma_{a}\sigma_{a+\epsilon}\right]}\cdot
\frac{\mathbb{E}_{\Lambda_{\delta}}^{+}\left[\sigma_{a}\sigma_{a+\epsilon}\right]}{\mathbb{E}^+_{\Lambda_{\delta}}\left[\sigma_{a}\sigma_{a+1}\right]},
\]
and, by convergence of the \emph{ratios} of spin correlations proven in Proposition~\ref{prop: conv_ratios},
\[(1-\eta)^3
\frac{\langle\sigma_{a}\sigma_{b}\rangle_{\Omega}^{+}}{\langle\sigma_{a}\sigma_{a+\epsilon}\rangle^+_{\Omega}}\cdot
\frac{\langle\sigma_{a}\sigma_{a+\epsilon}\rangle_{\Lambda}^{+}}{\langle\sigma_{a}\sigma_{a+1}\rangle^+_{\Lambda}}
\,\leq\,
R^{\Omega_\delta}_{a,b}
\,\leq\,
(1+\eta)^2
\frac{\langle\sigma_{a}\sigma_{b}\rangle_{\Omega}^{+}}{\langle\sigma_{a}\sigma_{a+\epsilon}\rangle^+_{\Omega}}\cdot
\frac{\langle\sigma_{a}\sigma_{a+\epsilon}\rangle_{\Lambda}^{+}}{\langle\sigma_{a}\sigma_{a+1}\rangle^+_{\Lambda}}
\]
for $\delta$ small enough. Since $\eta$ can be chosen arbitrary small, and the bounds do not depend on $\delta$, it only remains to show that we can make the factor
\[
\frac{\langle\sigma_{a}\sigma_{a+\epsilon}\rangle_{\Lambda}^{+}}{ \langle\sigma_{a}\sigma_{a+\epsilon}\rangle^+_{\Omega}\langle\sigma_{a}\sigma_{a+1}\rangle^+_{\Lambda}}
\]
as close to $1$ as we wish by choosing $\epsilon$ small enough and $\Lambda$ large enough. However, this follows readily from (\ref{eq: bulk_limit}) if we multiply this factor by $\langle\sigma_{a}\sigma_{a+1}\rangle_{\C}=1$. Thus, ${\mathbb{E}_{\Omega_{\delta}}^{+}\left[\sigma_{a}\sigma_{b}\right]}\sim {\varrho(\delta)} \langle\sigma_{a}\sigma_{b}\rangle_{\Omega}^{+}$ as $\delta\to 0$.
To derive the asymptotics of two-point correlations for free boundary conditions as $\delta\to 0$, note that Theorem~\ref{thm:ratio-free-plus} implies
\[
{\mathbb{E}_{\Omega_{\delta}^\bullet-\delta}^\mathrm{free}\left[\sigma_a\sigma_b\right]}\sim \mathcal{B}_\Omega(a,b)\cdot {\mathbb{E}_{\Omega_{\delta}}^{+}\left[\sigma_{a}\sigma_{b}\right]} \sim
{\varrho(\delta)}\cdot\mathcal{B}_\Omega(a,b)\langle\sigma_{a}\sigma_{b}\rangle_{\Omega}^{+}= {\varrho(\delta)}\langle\sigma_{a}\sigma_{b}\rangle_{\Omega}^\mathrm{free}.
\]
The fact that we have $\Omega_{\delta}^\bullet-\delta$ instead of $\Omega_\delta$ plays no role, since they both approximate the same continuous domain  $\Omega$ and the convergence of $(\varrho(\delta))^{-1}\mathbb{E}_{\Omega_{\delta}}^{+}\left[\sigma_{a}\sigma_{b}\right]$ is independent of the particular choice of lattice approximations.
\end{proof}
\begin{rem}
As a simple byproduct of our analysis, we obtain the rotational invariance of the full-plane correlations recently proven by Pinson \cite{pinson}:
by FKG inequality, for any (large) domain $\Omega_\delta$, one has $\mathbb{E}_{\Omega_{\delta}}^\mathrm{free}\left[\sigma_{a}\sigma_{b}\right]\le \mathbb{E}_{\C_{\delta}}\left[\sigma_{a}\sigma_{b}\right] \le \mathbb{E}_{\Omega_{\delta}}^{+}\left[\sigma_{a}\sigma_{b}\right]$ and, due to Theorem~\ref{thm:2pts} and (\ref{eq: ratio_free_plus}), both sides have the same asymptotics when $\Omega_\delta$ exhausts $\C_\delta$. Then, (\ref{eq: bulk_limit}) gives the desired result:
\[
(\rho(\delta))^{-1}\E_{\C_\delta}[\sigma_a\sigma_b]\rightarrow \langle\sigma_{a}\sigma_{b}\rangle_{\C}=|a-b|^{-\frac{1}{4}}~~\text{as}~~\delta\to 0.
\]
\end{rem}

\subsection{Decorrelation near the boundary and the proof of Theorem~\ref{thm:1-3pts}}\label{sub:proof-thm-k-pts}
This section is devoted to the proof of Theorem~\ref{thm:1-3pts}. Note that it was already proven above in the special case $n=2$, as a part of Theorem~\ref{thm:2pts}. Our goal is to relate the normalizing factors in the Corollary \ref{cor:conv-up-to-normalization} with $\varrho(\delta)$. Below we rely upon the asymptotics (\ref{eq: bdry_decor}) which readily follow from explicit formulae (\ref{12ptFcts}) and the conformal covariance rule (\ref{conf-cov}).
\begin{lem}
 \label{lem: dec_bdry_discrete}
Given a domain $\Omega$ with marked points $a_2,\dots,a_n$, $n\ge 2$, and a number $\eta>0$, there exists $\epsilon>0$ such that the following holds: if $a_1\in\Omega$ is $\epsilon$-close to the boundary, $\Omega_\delta$ approximates $\Omega$ and $\delta$ is small enough, then
\begin{equation*}
 1-\eta\leq\frac{\E_{\Omega_\delta}^{+}[\sigma_{a_1}]\E_{\Omega_\delta}^{+}[\sigma_{a_2}\dots\sigma_{a_n}]}{\E_{\Omega_\delta}^{+}[\sigma_{a_1}\dots\sigma_{a_n}]}\leq 1.
\end{equation*}
\end{lem}
\begin{proof}
The upper bound follows readily from FKG inequality. For the lower one, consider first the case $n=2$. A celebrated application (e.g., see~\cite{dembo-montanari}) of the GHS inequality \cite{griffith-hurst-sherman} reads
$
\E_{\Omega_\delta}^{+}[\sigma_{a}\sigma_{b}]-\E_{\Omega_\delta}^{+}[\sigma_{a}]\E_{\Omega_\delta}^{+}[\sigma_{b}]\leq \E_{\Omega_\delta}^{\text{free}}[\sigma_{a}\sigma_{b}],
$
or equivalently,
\[
1-\frac{\E_{\Omega_\delta}^{\text{free}}[\sigma_{a}\sigma_{b}]}{\E_{\Omega_\delta}^{+}[\sigma_{a}\sigma_{b}]}
\leq\frac{\E_{\Omega_\delta}^{+}[\sigma_{a}]\E_{\Omega_\delta}^{+}[\sigma_{b}]}{\E_{\Omega_\delta}^{+}[\sigma_{a}\sigma_{b}]}.
\]
As $\delta\to 0$, the left-hand side converges to $1-\mathcal{B}_\Omega(a;b)$, 
so (\ref{eq: ratio_free_plus}) implies the claim.

To prove the result for $n\geq 3$, assume that we have already proved Theorem~\ref{thm:1-3pts} for all $n'<n$ (the precise description of our induction scheme is given in the proof of Theorem~\ref{thm:1-3pts} below, see (\ref{InductionScheme})). Let $\gamma$ be a cross-cut (simple path) in $\Omega$ separating $a_1$ from $a_2,\dots,a_n$, and let $\Omega'$ and $\Omega''$ be the corresponding connected components. Note that FKG inequality implies
\[
\frac{\E_{\Omega_\delta}^{+}[\sigma_{a_1}]\E_{\Omega_\delta}^{+}[\sigma_{a_2}\dots\sigma_{a_n}]}{\E_{\Omega_\delta}^{+}[\sigma_{a_1}\dots\sigma_{a_n}]} \cdot \frac{\E_{\Omega'_\delta}^{+}[\sigma_{a_1}]\E_{\Omega''_\delta}^{+}[\sigma_{a_2}\dots\sigma_{a_n}]} {\E_{\Omega_\delta}^{+}[\sigma_{a_1}]\E_{\Omega_\delta}^{+}[\sigma_{a_2}\dots\sigma_{a_n}]} \,\ge\, 1
\]
as $\E_{\Omega'_\delta}^{+}[\sigma_{a_1}]\E_{\Omega''_\delta}^{+}[\sigma_{a_2}\dots\sigma_{a_n}]$ is equal to the correlation of $\sigma_{a_1},\dots,\sigma_{a_n}$ in $\Omega_\delta$ conditioned on the event that all spins neighboring $\gamma$ are $+$.
By the induction assumption, the second factor converges to
\[
\frac{\langle\sigma_{a_1}\rangle_{\Omega'}^{+}}{\langle\sigma_{a_1}\rangle_{\Omega}^{+}}\cdot
\frac{\langle\sigma_{a_2}\dots\sigma_{a_n}\rangle_{\Omega''}^{+}}{\langle\sigma_{a_2}\dots\sigma_{a_n}\rangle_{\Omega}^{+}}
\]
as $\delta\to 0$, hence it is sufficient to show that we can make this quantity arbitrarily close to $1$ by choosing $a_1$ and $\gamma$ appropriately. We first choose a cross-cut $\gamma$ in such a way that $\Omega''$ would be Carath\'eodory-close to $\Omega$ as seen from $a_2,\dots,a_n$ and then put $a_1$  deeply inside $\Omega'$, so that $\Omega'$ would be Carath\'eodory close to $\Omega$ as seen from $a_1$. If two domains are Carath\'eodory-close, then the conformal maps from these domains to $\bH$ mapping the marked point to $i$ (say, with positive derivative there) are uniformly close on compact sets together with their derivatives. Thus, the lemma follows from continuity of the half-plane functions $\langle\sigma_{a_1}\dots\sigma_{a_n}\rangle_{\bH}^{+}$ with respect to positions of $a_1,\dots,a_n$.
\end{proof}

\begin{proof}[\textbf{Proof of Theorem~\ref{thm:1-3pts}}] Recall that the special case $n=2$ is already done as a part of Theorem~\ref{thm:2pts}. We proceed by induction which (together with the proof of Lemma~\ref{lem: dec_bdry_discrete} given above) starts as follows:
\begin{equation}
\label{InductionScheme}
T_2\& L_2\Rightarrow T_1\Rightarrow L_3\Rightarrow T_3\Rightarrow L_4\Rightarrow T_4\Rightarrow\dots
\end{equation}
(where $T_j$ and $L_j$ mean the particular cases of Theorem~\ref{thm:1-3pts} and Lemma~\ref{lem: dec_bdry_discrete}).

Let $n=1$ and $\eta>0$ be fixed. Similarly to (\ref{eq: R+def}), denote
\[
R^{\Omega_\delta}_{a}:=(\varrho(\delta))^{-\frac12}\mathbb{E}_{\Omega_{\delta}}^{+}\left[\sigma_{a}\right].
\]
For any $b\in\Omega$, one can write
\[
(R^{\Omega_\delta}_{a})^2 
=
\frac{\mathbb{E}_{\Omega_{\delta}}^{+}\left[\sigma_{a}\right]}{\mathbb{E}_{\Omega_{\delta}}^{+}\left[\sigma_{b}\right]}\cdot
\frac{\mathbb{E}_{\Omega_{\delta}}^{+}\left[\sigma_{a}\sigma_{b}\right]}{\varrho\left(\delta\right)}\cdot \frac{\mathbb{E}_{\Omega_{\delta}}^{+}\left[\sigma_{a}\right]\mathbb{E}_{\Omega_{\delta}}^{+}\left[\sigma_{b}\right]} {\mathbb{E}_{\Omega_{\delta}}^{+}\left[\sigma_{a}\sigma_{b}\right]}.
\]
By Lemma \ref{lem: dec_bdry_discrete}, if we choose $b$ close enough to the boundary, then
\[
1-\eta\leq\frac{\mathbb{E}_{\Omega_{\delta}}^{+}\left[\sigma_{a}\right]\mathbb{E}_{\Omega_{\delta}}^{+}\left[\sigma_{b}\right]} {\mathbb{E}_{\Omega_{\delta}}^{+}\left[\sigma_{a}\sigma_{b}\right]}\leq 1,
\]
provided that $\delta$ is small enough. Due to Proposition~\ref{prop: conv_ratios} and Theorem~\ref{thm:2pts}, one has
\[
\frac{\mathbb{E}_{\Omega_{\delta}}^{+}\left[\sigma_{a}\right]}{\mathbb{E}_{\Omega_{\delta}}^{+}\left[\sigma_{b}\right]} ~\underset{\delta\to0}{\longrightarrow}~ \frac{\langle \sigma_a \rangle_{\Omega}^{+}}{\langle \sigma_b \rangle_{\Omega}^{+}}\quad \text{and} \quad \frac{\mathbb{E}_{\Omega_{\delta}}^{+}\left[\sigma_{a}\sigma_{b}\right]}{\varrho\left(\delta\right)} ~\underset{\delta\to0}{\longrightarrow}~ \langle\sigma_{a}\sigma_{b}\rangle_\Omega^+.
\]
Consequently,
\[(1-\eta)^3\left(\langle \sigma_a \rangle_{\Omega}^{+}\right)^2\cdot\frac{\langle\sigma_{a}\sigma_{b}\rangle_{\Omega}^{+}}{\langle \sigma_a \rangle_{\Omega}^{+}\langle \sigma_b \rangle_{\Omega}^{+}}
\leq
(R^{\Omega_\delta}_{a})^2 
\leq
(1+\eta)^2\left(\langle \sigma_a \rangle_{\Omega}^{+}\right)^2\cdot\frac{\langle\sigma_{a}\sigma_{b}\rangle_{\Omega}^{+}}{\langle \sigma_a \rangle_{\Omega}^{+}\langle \sigma_b \rangle_{\Omega}^{+}}
\]
provided that $\delta$ is small enough. Since $\eta$ can be chosen arbitrary small, and by (\ref{eq: bdry_decor}) we can make the term ${\langle\sigma_{a}\sigma_{b}\rangle_{\Omega}^{+}}/{\langle \sigma_a \rangle_{\Omega}^{+}\langle \sigma_b \rangle_{\Omega}^{+}}$ arbitrary close to $1$ choosing $b$ sufficiently close to $\partial\Omega$, we complete the proof for $n=1$ by remark that positivity of magnetization fixes the sign of $R^{\Omega_\delta}_{a}$.

To get the convergence of $R^{\Omega_\delta}_{a_1,\dots,a_n}:=\varrho(\delta)^{-\frac{n}{2}}\E^{+}_{\Omega_\delta}[\sigma_{a_1},\dots\sigma_{a_n}]$ for $n\ge 3$, proceed by induction and write it as
\[
R^{\Omega_\delta}_{a_1,\dots,a_n} 
=
\frac{\E^{+}_{\Omega}[\sigma_{a_1}\sigma_{a_2}\dots\sigma_{a_n}]}{\E^{+}_{\Omega}[\sigma_{b}\sigma_{a_2}\dots \sigma_{a_n}]}
\cdot \frac{\E^{+}_{\Omega}[\sigma_{b}\sigma_{a_2}\dots\sigma_{a_n}]}{\E^{+}_{\Omega}[\sigma_{b}]\E^{+}_{\Omega}[\sigma_{a_2}\dots\sigma_{a_n}]}
\cdot R^{\Omega_\delta}_{b}R^{\Omega_\delta}_{a_2,\dots,a_n}. 
\]
The proof is finished similarly to the case $n=1$: take $b$ close to the boundary, estimate the second ratio in the right-hand side by Lemma~\ref{lem: dec_bdry_discrete}, then use convergence of the first ratio and $R^{\Omega_\delta}_{b}R^{\Omega_\delta}_{a_2,\dots,a_n}$ as $\delta\to 0$ (this follows from Proposition~\ref{prop: conv_ratios} and the induction hypothesis, respectively) and asymptotics (\ref{eq: bdry_decor}) for the continuous correlation functions: ${\langle\sigma_b\rangle_{\Omega}^{+}\langle\sigma_{a_2}\dots\sigma_{a_n}\rangle_{\Omega}^{+}}/{\langle\sigma_b\sigma_{a_2}\dots\sigma_{a_n}\rangle_{\Omega}^{+}}\to 1$ as $b\to\partial\Omega$.
\end{proof}

\section{\label{sec:analysis-of-spinors}Proofs of the main convergence Theorems \ref{thm:obs-away}, \ref{thm:localization-near-a} and \ref{thm:localization-near-b} for~discrete~spinors}

\subsection{\label{sub:s-holomorphicity-proof}S-holomorphicity of discrete observables}
The notion of s-holomorphicity was essentially introduced by Smirnov in \cite{smirnov-i} and
used for the study of the planar critical Ising model in \cite{smirnov-i, smirnov-ii, chelkak-smirnov-ii, hongler-i, hongler-smirnov-ii, chelkak-izyurov}. Our definitions follow \cite{hongler-smirnov-ii} and are equivalent to those of \cite{smirnov-ii, chelkak-smirnov-ii,chelkak-izyurov} after the multiplication by $\sqrt{i}$, see also Section~\ref{sub: riemann-problem} below.

Recall that, for $\tau\in\left\{ 1,i,\lambda,\overline{\lambda}\right\}$, we associate the line $\ell\left(x\right)=\tau\mathbb{R}$ in the
complex plane with each corner $x\in\mathcal{V}_{\mathbb{C}_{\delta}}^{\tau}$, and denote by $\mathsf{P}_{\ell\left(x\right)}$
the orthogonal projection onto that line:
\[
\mathsf{P}_{\ell\left(x\right)}\left[w\right]=\tfrac{1}{2}\left(w+\tau^{2}\overline{w}\right),\quad w\in\mathbb{C}.
\]

Definition~\ref{def: shol} says that a function $F:\mathcal{V}_{\Omega_{\delta}}^{\mathrm{cm}}\to\mathbb{C}$
is \emph{s-holomorphic in $\Omega_\delta$} (we use the same definitions working on double covers) if for every $x\in\mathcal{V}_{\Omega_{\delta}}^{\mathrm{c}}$
and $z\in\mathcal{V}_{\Omega_{\delta}}^{\mathrm{m}}$ that are adjacent, one has
\[
F\left(x\right)=\mathsf{P}_{\ell\left(x\right)}\left[F\left(z\right)\right].
\]

\begin{rem}
\label{rem: s-hol-extended}
The set $\mathcal{V}^{1,i}_{\C_\delta}=\mathcal{V}^{1}_{\C_\delta}\cup\mathcal{V}^{i}_{\C_\delta}$ may be viewed as a square lattice, divided into $\mathcal{V}^{1}_{\C_\delta}$ and $\mathcal{V}^{i}_{\C_\delta}$ in a chessboard fashion. By definition, the restriction of an \mbox{s-holo}morphic function $F$ to $\mathcal{V}^{1,i}_{\C_\delta}$ is real on $\mathcal{V}^{1}_{\C_\delta}$ and purely imaginary on $\mathcal{V}^{i}_{\C_\delta}$. It is not difficult to check (\cite{smirnov-ii}) that this restriction is in fact \emph{discrete holomorphic} in the most usual sense, that is, for any $x\in\mathcal{V}^{1,i}_{\C_\delta}$ one has
\begin{equation}
F(x+i\delta)-F(x+\delta)=i\cdot\left[F(x+(1\!+\!i)\delta)-F(x)\right].
\label{eq: disc_holom}
\end{equation}
The converse is also true: given discrete holomorphic function $F:\mathcal{V}^{1,i}_{\C_\delta}\to \R\cup i\R$, one can first extend it to $\mathcal{V}^{\mathrm{m}}_{\C_\delta}$ by the formula $F(z):=F(z+i\frac{\delta}{2})+F(z-i\frac{\delta}{2})$ and then to $\mathcal{V}^{\lambda,\overline{\lambda}}_{\C_\delta}$ by $F\left(x\right):=\mathsf{P}_{\ell\left(x\right)}\left[F\left(x\pm\frac{\delta}{2}\right)\right]$ (due to (\ref{eq: disc_holom}), these projections coincide).
\end{rem}

\def\Ato{{a_1^\rightarrow}}

We now check the s-holomorphicity of discrete spinor observables, essentially mimicking \cite{chelkak-smirnov-ii,hongler-smirnov-ii,chelkak-izyurov}. For shortness, below we use the notation
\[
\textstyle \Ato:=a_1+\frac{\delta}{2}\,\in\, \mathcal{V}^i_{\C_\delta}.
\]
\begin{proof}[\textbf{Proof of Proposition \ref{prop: s-hol+bc}.}]
Let $z\in\mathcal{V}_{\left[\Omega_{\delta},\Ak\right]}^{\mathrm{m}}$
be a medial vertex and $x$ 
be one of four nearby corners so that $|x-z|=\frac{\delta}{2}$. We should check that
\begin{eqnarray}
\label{eq: s-hol-spinor}
\mathsf{P}_{\ell\left(x\right)}\left[F_{\left[\Omega_\delta,\Ak\right]}\left(z\right)\right] & = & F_{\left[\Omega_\delta,\Ak\right]}\left(x\right),
\end{eqnarray}
where the values of $F_{\left[\Omega_\delta,\Ak\right]}$ are defined as sums over the sets $\mathcal{C}_{\Omega_{\delta}}\left(\Ato,z\right)$
and $\mathcal{C}_{\Omega_{\delta}}\left(\Ato,x\right)$, respectively. There is a simple bijection $\omega_{zx}:\gamma_z\mapsto\gamma_x$ between these two sets provided by taking XOR (symmetric difference) of a configuration with two half edges $(zv)$ and $(vx)$, where $v$ 
denotes the vertex that is adjacent to both $x$ and $z$. Hence, it is sufficient to check that for any $\gamma_z\in \mathcal{C}_{\Omega_{\delta}}\left(\Ato,z\right)$ one has
\begin{equation}
(\cos\tfrac{\pi}{8})^{-1}\cdot\mathsf{P}_{\ell\left(x\right)}\left[\alpha_c^{\#\textrm{edges}(\gamma_z)}\cdot\phi_{\Ak}(\gamma_z,z)\right]  = \alpha_c^{\#\textrm{edges}(\gamma_x)}\cdot\phi_{\Ak}(\gamma_x,x)
\label{eq: s-hol-reduced}
\end{equation}
(the additional factor $(\cos\frac{\pi}{8})^{-1}$ comes from our definition of the discrete observable on medial vertices, see Remark~\ref{rem:spinor-def}).
There are two cases: either $(zv)$ is contained in $\gamma$, which leads us to
\begin{equation*}
\#\textrm{edges}(\gamma_x)=\#\textrm{edges}(\gamma_z),\quad \exp[-\tfrac{i}{2}\wind(\mathrm{p}(\gamma_x))]=e^{\pm\frac{i\pi}{8}}\exp[-\tfrac{i}{2}\wind(\mathrm{p}(\gamma_z))],
\end{equation*}
or not, which leads to
\begin{equation*}
\#\textrm{edges}(\gamma_x)=\#\textrm{edges}(\gamma_z)+1,\quad \exp[-\tfrac{i}{2}\wind(\mathrm{p}(\gamma_x))]=e^{\pm\frac{3i\pi}{8}}\exp[-\tfrac{i}{2}\wind(\mathrm{p}(\gamma_z))].
\end{equation*}
Let us also note that in both cases
\[
\left(-1\right)^{\#\mathrm{loops}_{\Ak}(\gamma_x)}\mathrm{sheet}_{\Ak}(\gamma_x,x)= \left(-1\right)^{\#\mathrm{loops}_{\Ak}(\gamma_z)}\mathrm{sheet}_{\Ak}(\gamma_z,z)
\]
since if $\omega_{zx}$ destroys a loop in $\gamma_z$ that changed the sheet of $[\Omega_\delta,a_1,\dots,a_n]$ (leading to the change of the first factor), then this loop becomes a part of $\mathrm{p}(\gamma_x)$, so the second factor changes simultaneously. Thus, one can factor out  $\alpha_c^{\#\textrm{edges}(\gamma_z)}\cdot\phi_{\Ak}(\gamma_z,z)$ from both sides of (\ref{eq: s-hol-reduced}). In the first case (\ref{eq: s-hol-reduced}) readily follows, while in the second it becomes equivalent to
\[
(\cos\tfrac{\pi}{8})^{-1}\cos\tfrac{3\pi}{8}=\sqrt{2}-1=\alpha_\mathrm{c}\,.
\]

Thus, $F_{[\Omega,\Ak]}$ is s-holomorphic. It has multiplicative monodromy $-1$ around each of the marked points $a_1,\dots,a_n$ due to the factor $\mathrm{sheet}_{\Ak}(\gamma,z)$ in (\ref{spinor-corners-def}). In order to prove that $F_{[\Omega,\Ak]}$ obeys boundary conditions (\ref{eq: discrete_bc}), it is sufficient to note that $\wind(\mathrm{p}(\gamma))=\nu_\mathrm{out}(z) \mod 2\pi$, if $z$ is on the boundary.
\end{proof}

The spinor $F_{[\Omega,\Ak]}$ is not defined at the corner $\Ato$ and is not s-holomorphic there. 
The next lemma shows, in particular, that its values at the nearby medial vertices $a_1+\frac{1\pm i}{2}\delta$ have different projections onto the imaginary line $i\R=\ell(\Ato)$, so one cannot extend $F_{[\Omega,\Ak]}$ to $\Ato$ in an s-holomorphic way.
\begin{lem}
\label{lem: sing_f_Omega}
For medial vertices $a_1+\frac{1\pm i}{2}\delta$ taken on the same sheet of the double cover $\left[\Omega_\delta,\Ak\right]$ as the ``source'' $\Ato$, one has
\[
 \mathsf{P}_{i\mathbb{R}}\left[F_{\left[\Omega_\delta,\Ak\right]}\left(a_1+\tfrac{1\pm i}{2}\delta\right)\right]  =  \mp i.
\]
\end{lem}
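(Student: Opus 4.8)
The statement is local — it concerns only the values of $F_{[\Omega_\delta,a;\Ak]}$ at the two edge-midpoints $z_\pm:=a+\tfrac{1\pm i}{2}\delta$ adjacent to the ``source'' corner $a+\tfrac\delta2$ — so the plan is to unfold the definition there and run the same kind of XOR-bijection argument as in the proof of Proposition~\ref{prop: s-hol+bc}, this time carrying the full complex phase $\phi_{a;\Ak}$ through it rather than checking a single projection identity. First, by Remark~\ref{rem:spinor-def}(iii),
\[
F_{[\Omega_\delta,a;\Ak]}(z_\pm)=\frac{(\cos\tfrac\pi8)^{-1}}{\mathcal{Z}_{\Omega_\delta}^+[\sigma_a\sigma_{a_1}\dots\sigma_{a_k}]}\sum_{\gamma\in\mathcal{C}_{\Omega_\delta}(a+\frac\delta2,\,z_\pm)}\alpha_{\mathrm{c}}^{\#\mathrm{edges}(\gamma)}\,\phi_{a;\Ak}(\gamma,z_\pm),
\]
and, $\alpha_{\mathrm c}^{\#\mathrm{edges}}$ and $\mathcal{Z}_{\Omega_\delta}^+[\sigma_a\dots\sigma_{a_k}]>0$ being real, $\mathsf{P}_{i\R}$ acts summand by summand, selecting exactly the configurations $\gamma$ for which $\phi_{a;\Ak}(\gamma,z_+)$ is collinear with $i\R$.

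By the last paragraph of Remark~\ref{rem:spinor-def}(iii), the midpoint $z_+$ of the edge $[v,a+i\delta]$ (with $v:=a+\delta$) is reached by $\mathrm p(\gamma)$ from one of two opposite sides, and the two sides contribute phases that are collinear, respectively, with $\R$ and with $i\R$; only the latter group survives $\mathsf{P}_{i\R}$. I would then run, on this surviving group, the bijection with the loop collections $\mathcal{C}_{\Omega_\delta}$ that removes the short ``stem'' attaching the source corner and $z_+$ to $v$ (and completes $[v,a+i\delta]$ to a full edge when necessary), exactly as in the proof of Proposition~\ref{prop: s-hol+bc}: the loop and sheet factors of $\phi_{a;\Ak}$ transform compatibly (using decomposition-independence, Remark~\ref{rem:spinor-def}(ii), and tracking the sheet sign of the loop the stem-removal produces around $a$), the $\#\mathrm{edges}$ count is governed by the same two-case split as there, and the winding factor picks up the fixed phase $e^{\pm\frac{i\pi}{8}}$ or $e^{\pm\frac{3i\pi}{8}}$ — the $\tfrac\pi4$-``defect'' of a path leaving a corner, whose absence at ordinary corners is what makes $F_{[\Omega_\delta,a;\Ak]}$ s-holomorphic. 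Re-summing with the $\alpha_{\mathrm c}$-weights and the prefactor $(\cos\tfrac\pi8)^{-1}$, and using the identity $(\cos\tfrac\pi8)^{-1}\cos\tfrac{3\pi}{8}=\alpha_{\mathrm c}$ (again as in Proposition~\ref{prop: s-hol+bc}), the $i\R$-collinear part of the sum collapses onto $(\cos\tfrac\pi8)\cdot\mathcal{Z}_{\Omega_\delta}^+[\sigma_a\dots\sigma_{a_k}]$ up to the unit $-i$, so that $\mathsf{P}_{i\R}[F_{[\Omega_\delta,a;\Ak]}(z_+)]=-i$. For $z_-$ one reflects the picture across the horizontal line through $a$ (which fixes $a$, $v$ and the source corner and swaps $z_+\leftrightarrow z_-$, reversing the turn at $v$), obtaining $\mathsf{P}_{i\R}[F_{[\Omega_\delta,a;\Ak]}(z_-)]=+i$.

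The step I expect to be the main obstacle is precisely the bookkeeping in that last move: verifying that the $i\R$-collinear part of the spinor sum at $z_+$ really collapses to $-i\cos\tfrac\pi8$ times $\mathcal{Z}_{\Omega_\delta}^+[\sigma_a\dots\sigma_{a_k}]$, with the correct sign and no stray unit factor. This forces one to be entirely precise about the convention fixed after Definition~\ref{def:discrete-spinor} for how a contour path emanates from a corner (so that the extra $\tfrac\pi4$ of turning is split as $\pm\tfrac{3\pi}{8}$ versus $\pm\tfrac\pi8$ between the two arrival sides of $z_\pm$), about which arrival side is the $i\R$-collinear one, and about the sheet sign of the loop created near $a$ when the stem is removed. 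Apart from this quantitative point, every ingredient — the XOR bijection, the loop/sheet cancellations, the $\cos\tfrac{3\pi}{8}/\cos\tfrac\pi8=\alpha_{\mathrm c}$ identity — is identical to the proof of Proposition~\ref{prop: s-hol+bc}, as anticipated there by the remark that the winding identities ``fail at $a+\tfrac\delta2$''.
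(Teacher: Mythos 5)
Your overall plan is the right one and is the one the paper takes: apply the XOR bijection with the two-half-edge stem $(a+\tfrac\delta2,v)\cup(v,z_\pm)$ to match $\mathcal{C}_{\Omega_\delta}(a+\tfrac\delta2,z_\pm)$ with $\mathcal{C}_{\Omega_\delta}$, track the winding/loop/sheet factors term by term, and close via the weight identity $(\cos\tfrac\pi8)^{-1}\cos\tfrac{3\pi}{8}=\alpha_c$; the reflection reduction for $z_-$ is also correct. But there is a genuine error in the middle step, and it is not just a bookkeeping detail you left open.

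The claim that ``the two sides contribute phases that are collinear, respectively, with $\R$ and with $i\R$; only the latter group survives $\mathsf{P}_{i\R}$'' is false. For $z_+=a+\tfrac{1+i}{2}\delta$, the path $\mathrm p(\gamma)$ has winding $\tfrac{3\pi}{4}\ \mathrm{mod}\ 4\pi$ when it arrives from $v$ (equivalently, when the half-edge $(v,z_+)$ lies in $\gamma$) and $\tfrac{7\pi}{4}\ \mathrm{mod}\ 4\pi$ when it arrives from $a+i\delta$. The corresponding phases are collinear with $e^{-3\pi i/8}\R$ and $e^{-7\pi i/8}\R$, neither of which is $\R$ nor $i\R$, so \emph{both} arrival sides contribute to $\mathsf{P}_{i\R}$. (Remark~\ref{rem:spinor-def}(iii) is saying only that the midpoint value is genuinely complex, not that the two contribution families sit on the coordinate axes.) This is essential: the projections have moduli $\cos\tfrac\pi8$ and $\cos\tfrac{3\pi}{8}$, and it is precisely the ratio $\cos\tfrac{3\pi}{8}/\cos\tfrac\pi8=\alpha_c$ that compensates the $+1$ discrepancy in $\#\mathrm{edges}(\omega(\gamma))$ between the two cases of the bijection — the same mechanism that makes s-holomorphicity work at ordinary corners. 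Your proposal is internally inconsistent on this point: after ``only the latter group survives'' you still invoke a two-case split on $\#\mathrm{edges}$ and the $\alpha_c$ identity, but if only one side survived there would be only one case, the bijection would land on only half of $\mathcal{C}_{\Omega_\delta}$ (configurations with a prescribed status of the edge $[v,a+i\delta]$), and the sum would not reconstruct $\mathcal{Z}^+_{\Omega_\delta}[\sigma_a\sigma_{a_1}\dots\sigma_{a_k}]$. The correct term-by-term identity, for \emph{every} $\gamma$, is
\[
(\cos\tfrac\pi8)^{-1}\,\mathsf{P}_{i\R}\!\left[\alpha_c^{\#\mathrm{edges}(\gamma)}\phi_{a;\Ak}(\gamma,z_+)\right]
=-i\,(-1)^{\#\mathrm{loops}_{a;\Ak}(\omega(\gamma))}\alpha_c^{\#\mathrm{edges}(\omega(\gamma))},
\]
with $(\cos\tfrac\pi8)^{-1}\mathsf{P}_{i\R}[e^{-3\pi i/8}]=-i$ and $(\cos\tfrac\pi8)^{-1}\mathsf{P}_{i\R}[e^{-7\pi i/8}]=-i\alpha_c$ supplying the two cases.
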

\begin{proof}
We consider the medial vertex $z:=a_1+\frac{1+i}{2}\delta$, the opposite case is similar. Given a configuration $\gamma\in \mathcal{C}_\delta(\Ato,z)$ and applying, as above, the XOR bijection with two half-edges $(\Ato,a_1+\delta)$ and $(a_1+\delta,z)$, we obtain a configuration $\omega(\gamma)\in\mathcal{C}_\delta$. Since the normalizing factor $\mathcal{Z}_{\Omega_\delta}^+[\Ak]$ is a sum over $\mathcal{C}_\delta$ (see (\ref{eq: Z+=sum})), it is sufficient to show that, for any $\gamma$,
\[
(\cos\tfrac{\pi}{8})^{-1}\cdot\mathsf{P}_{i\mathbb{R}}\left[\alpha_\mathrm{c}^{\#\textrm{edges}(\gamma)}\phi_{\Ak}(\gamma,z)\right]= -i\cdot\alpha_\mathrm{c}^{\#\text{edges}(\omega(\gamma))}(-1)^{\#\text{loops}_{\Ak}(\omega(\gamma))}.
\]
Consider two cases, as in the proof of Proposition~\ref{prop: s-hol+bc} above. If $(a_1+\delta,z)\in\gamma$ (respectively, $(a_1+\delta,z)\notin\gamma$), then
\[
\#\textrm{edges}(\omega(\gamma))=\#\textrm{edges}(\gamma)\quad\left(\text{respectively,}\quad \#\textrm{edges}(\omega(\gamma))=\#\textrm{edges}(\gamma)+1\right).
\]
We may disregard all loops in $\omega(\gamma)$ that do not contain the edge $(a_1+\delta,a_1+i\delta)$, as they contribute the same sign to both sides. Further, if $(a_1+\delta,z)\in\gamma$, then $\wind(\mathrm{p}(\gamma))=\frac{3\pi}{4}\mod 4\pi$, otherwise $\wind(\mathrm{p}(\gamma))=\frac{7\pi}{4}\mod 2\pi$. Hence the lemma boils down to the following elementary identities:
\[
(\cos\tfrac{\pi}{8})^{-1}\cdot \mathsf{P}_{i\mathbb{R}}[e^{-\frac{3\pi i}{8}}]=-i\quad\text{and}\quad
(\cos\tfrac{\pi}{8})^{-1}\cdot \mathsf{P}_{i\mathbb{R}}[e^{-\frac{7\pi i}{8}}]=-i\alpha_c. \qedhere
\]
\end{proof}

\subsection{The full-plane discrete spinor $F_{[\C_\delta,a]}$ and its discrete primitive $G_{[\C_\delta,a]}$} \label{sub:full-plane}
This subsection is mainly devoted to the construction of the full-plane analogue of discrete spinor observables, as announced in Lemma~\ref{lem: def_f_C}. After this, we also prove Lemma~\ref{def_r_delta} and the double-sided estimate (\ref{eq: vartheta-estimate}) of the normalizing factor $\vartheta(\delta)$.

\def\X{\mathbb{X}_\delta}
\def\Y{\mathbb{Y}_\delta}
\def\ato{{a^\rightarrow}}

\subsubsection{Discrete harmonic measure in the slit plane}
We start with an important technical ingredient -- the discrete Beurling estimate with optimal exponent $\frac{1}{2}$.
On the square lattice, it was obtained by Kesten \cite{kesten}, and then generalized by Lawler and Limic \cite{lawler-limic}.
Given a face $a$, let $\X\subset\mathcal{V}^{1}_{\C_\delta}$ denote the slit discrete plane:
\[
\mathrm{Int}\,\X:=\mathcal{V}^{1}_{\C_\delta}\backslash\, L_a,\qquad L_a:=\{x+\ato\!+\delta:x<0\}
\]
(recall that $\ato\!+\delta=a+\tfrac{3\delta}{2}\in\mathcal{V}^{1}_{\C_\delta}$).

\begin{lem}
\label{lem:discrete-beurling}
For all $z\in\X$, $A\subset\X$, and some absolute constant $C>0$, the following estimates are fulfilled:
\begin{equation}
\label{eq: hm-beurling}
\mathrm{hm}_{\{\ato+\delta\} }^{\X}\left(z\right)\leq C\delta^\frac{1}{2}|z-a|^{-\frac{1}{2}},
\end{equation}
\begin{equation}
\label{eq:beurling-Lambda}
\mathrm{hm}_{A}^{\X}\left(\ato\!+\delta\right)\leq C{\delta^\frac{1}{2}}(\mathrm{dist}(a;A))^{-\frac{1}{2}},
\end{equation}
where $\mathrm{hm}^{\X}_A(z)$ denotes the discrete harmonic measure of a set $A$ in $\X$ viewed from $z$, i.e., the probability for the simple random walk on $\mathcal{V}^{1}_{\C_\delta}$ (considered as a shifted square grid $(2\delta\mathbb{Z})^2$) started at $z$ to reach $A$ before it hits the boundary of $\X$.
\end{lem}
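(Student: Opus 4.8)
The plan is to reduce the statement to the classical discrete Beurling estimate on the square lattice, following the strategy of Kesten \cite{kesten} and Lawler--Limic \cite{lawler-limic}. First, I would note that the slit plane $\Xi_\delta$ has a boundary consisting of (the two sides of) the half-line $L_a=\{a+x:x\le\frac{\delta}{2}\}$; the key geometric feature is that this boundary is ``thick'' in the sense that it connects the branch point $a$ to infinity. The discrete Beurling estimate asserts precisely that a simple random walk started near such a connected boundary piece is very likely to hit it before travelling far: if $z$ is at distance $r$ from $a$ along the lattice, the probability of reaching the tip vicinity $\{a+\frac{3\delta}{2}\}$ before hitting $L_a$ decays like $(\delta/r)^{1/2}$. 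This exponent $\frac12$ is optimal and is exactly the content of the cited results; the only thing to verify is that our rotated square grid $\mathbb{C}_\delta$ and the specific slit $L_a$ fall within their scope, which they do after rescaling by $\delta$ and identifying $\mathcal{V}^{1}_{\mathbb{C}_\delta}$ with a standard square lattice (see Remark \ref{rem: s-hol-extended}).

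For \eqref{eq: hm-beurling}, I would argue as follows. By reversibility of the simple random walk, up to a bounded multiplicative factor coming from the bounded degree and the Green's function comparison at the tip, $\mathrm{hm}^{\Xi_\delta}_{\{a+\frac{3\delta}{2}\}}(z)$ is comparable to the probability that a walk from $z$ reaches a fixed-size neighborhood of the tip $a+\frac{3\delta}{2}$ before exiting $\Xi_\delta$ through $L_a$. Since $L_a\cup\{a\}$ is a connected set joining $a$ to $\infty$, any path from $z$ to the tip must wind around $a$ and thus come close to $L_a$; the Beurling-type estimate for the square lattice with optimal exponent bounds this probability by $C(\delta/|z-a|)^{1/2}$. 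One can make this precise by a standard dyadic decomposition of the annuli $\{2^j\delta\le|w-a|\le 2^{j+1}\delta\}$ between $z$ and $a$: in each annulus the walk has a probability bounded away from $1$ of hitting $L_a$ before crossing to the next smaller annulus, which yields a geometric decay in the number $\log_2(|z-a|/\delta)$ of annuli, i.e. a power law with exponent $\frac12$ once the per-annulus constant is identified with the sharp one from \cite{kesten,lawler-limic}.

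For \eqref{eq:beurling-Lambda}, I would apply the same principle with the roles reversed: starting from the tip $a+\frac{3\delta}{2}$, the walk must travel distance at least $\mathrm{dist}(a;A)$ to reach $A$, and again it must stay away from $L_a$ (which emanates from the immediate vicinity of the starting point) for the entire journey. The dyadic-annulus argument between $a$ and $A$ gives the bound $C(\delta/\mathrm{dist}(a;A))^{1/2}$. Alternatively, \eqref{eq:beurling-Lambda} follows from \eqref{eq: hm-beurling} by summing $\mathrm{hm}^{\Xi_\delta}_{\{a+\frac{3\delta}{2}\}}(z)$ over $z\in A$ against the co-harmonic measure and using reversibility, though the direct argument is cleaner. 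The main obstacle, and the reason the precise references are invoked rather than a soft argument, is pinning down the \emph{optimal} exponent $\frac12$ uniformly in $\delta$: a naive comparison with continuous Brownian motion would only give an exponent arbitrarily close to $\frac12$ with an $o(1)$ loss, whereas the sharp discrete estimates of \cite{kesten} and \cite{lawler-limic} are needed to avoid this loss, which is essential for the later applications where the $\frac12$-power is matched against the conformal weight of the spinor.
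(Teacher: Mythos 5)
Your proposal is correct and follows essentially the same route as the paper, whose entire proof is a one-line citation of Lawler--Limic plus ``simple reversibility arguments for random walks''; you fill in the dyadic-annulus picture and the discussion of why the sharp exponent $\tfrac12$ must be imported from \cite{kesten,lawler-limic} rather than derived softly. One small remark on attribution: the step you credit to reversibility (replacing the single target $\{a+\frac{3\delta}{2}\}$ by a $O(\delta)$-neighborhood of it) is an elementary strong-Markov comparison; it is the \emph{next} step, converting ``the walk from $z$ hits the tip vicinity before $L_a$'' into the escape-from-the-slit form of the Beurling estimate that \cite{lawler-limic} actually proves, where the Green's-function symmetry $G_{\Xi_\delta}(z,y)=G_{\Xi_\delta}(y,z)$ is genuinely used, together with the fact that $G_{\Xi_\delta}(y,y)\asymp 1$ since $y=a+\frac{3\delta}{2}$ is adjacent to the slit.
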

\begin{proof}
This easily follows from \cite{lawler-limic} and simple reversibility arguments for random walks.
\end{proof}
Below we also need some additional estimates for the discrete harmonic measure $\mathrm{hm}_{\left\{\ato+\delta\right\} }^{\X}$ and its discrete derivatives.
\begin{lem}
\label{lem: hm-estimate-La+grad}
(i) For $z\in\X$ such that $|\arg(z\!-\!a)-\pi|\le \frac{\pi}{4}$, one has
\begin{equation}
\label{eq: hm-near-La}
\mathrm{hm}_{\left\{ \ato+\delta\right\} }^{\X}\left(z\right)\le
C\delta^\frac{1}{2}|\Im\mathfrak{m}\,(z-a)|\,|z-a|^{-\frac{3}{2}}\,.
\end{equation}
(ii) For all neighboring $z,z'\in \X$, one has
\begin{equation}
\label{eq: hm-gradients}
\delta^{-1}\left|\,\mathrm{hm}_{\{\ato+\delta\}}^{\X}\left(z'\right) - \mathrm{hm}_{\{\ato+\delta\}}^{\X}\left(z\right)\right|\le C\delta^\frac{1}{2}|z-a|^{-\frac{3}{2}}.
\end{equation}
(iii) Being normalized by the value $\vartheta(\delta)$ at the proper lattice approximation of the point $a+1$, the functions $(\vartheta(\delta))^{-1}\mathrm{hm}_{\{\ato+\delta\}}^{\X}$ converge to $\Re\mathfrak{e}\,[1/\sqrt{z-a}\,]$ as $\delta\to 0$, uniformly on compact subsets of $\C\setminus L_a$.  Moreover, this convergence holds true in $C^1$-sense meaning that the discrete derivatives (\ref{eq: hm-gradients}) converge to the corresponding partial derivatives as well.
\end{lem}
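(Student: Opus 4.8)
The plan is to obtain (i) and (ii) by feeding the Beurling-type bound (\ref{eq: hm-beurling}) of Lemma~\ref{lem:discrete-beurling} into standard regularity estimates for discrete harmonic functions near a straight boundary, and then to get (iii) by a compactness argument combined with a Liouville-type identification of the limit. Throughout, write $u:=\mathrm{hm}^{\Xi_\delta}_{\{a+\frac{3\delta}{2}\}}$ and $B_\delta(z,\rho)$ for a discrete disc.

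\emph{For (i)}, I would set $r:=|z-a|$ and $s:=|\Im\mathfrak{m}\,(z-a)|$; the hypotheses $\Re\mathfrak{e}\,(z-a)<0$, $s\le\tfrac12|\Re\mathfrak{e}\,(z-a)|$ give $s\le\tfrac12 r$, $\mathrm{dist}(z,L_a)\asymp s$, and that in $B_\delta(z,r/4)$ the set $L_a$ is a straight lattice ray far (at distance $\asymp r$) from the tip, so $B_\delta(z,r/4)$ avoids $a+\tfrac{3\delta}{2}$. There $u$ is nonnegative, discrete harmonic and vanishes on $L_a$, so the discrete boundary Harnack principle near a straight boundary (as in \cite{chelkak-smirnov-ii}) gives $u(z)\le C\tfrac{s}{r}\max_{B_\delta(z,r/8)}u$; since every point of $B_\delta(z,r/8)$ is at distance $\gtrsim r$ from $a$, (\ref{eq: hm-beurling}) bounds this maximum by $C\delta^{1/2}r^{-1/2}$, which yields (\ref{eq: hm-near-La}). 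In the complementary regime where $s\asymp r$ (so the disc need not meet $L_a$) the claimed bound is just (\ref{eq: hm-beurling}) up to a constant, and if $r\lesssim\delta$ it is $\gtrsim 1\ge u$, hence trivial. \emph{For (ii)} I argue analogously: for an edge $(zz')$ with $r:=|z-a|$, $u$ is discrete harmonic in $B_\delta(z,r/2)\cap\Xi_\delta$ and vanishes on $L_a$ there, so the interior gradient estimate — or its straight-boundary version when $B_\delta(z,r/2)$ meets $L_a$ — gives $|u(z')-u(z)|\le C\tfrac{\delta}{r}\max_{B_\delta(z,r/2)\cap\Xi_\delta}u\le C\delta^{3/2}r^{-3/2}$ by (\ref{eq: hm-beurling}); dividing by $\delta$ gives (\ref{eq: hm-gradients}), with $r\lesssim\delta$ again trivial.

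\emph{For (iii)}, I would first recall that on $\mathcal{V}^1_{\C_\delta}$ the observable $F_{[\C_\delta,a]}$ was \emph{defined} to be $u$, so $h_\delta:=\vartheta(\delta)^{-1}u$ is precisely the function in (\ref{F_C-convergence}) and equals $1$ at the chosen approximation of $a+1$. By (\ref{eq: hm-beurling}), (\ref{eq: hm-gradients}) and $\vartheta(\delta)\asymp\sqrt\delta$ (see (\ref{eq: vartheta-estimate})), the family $h_\delta$ and its discrete gradients are uniformly bounded and equicontinuous on compact subsets of $\C\setminus L_a$, with singularities no worse than $|z-a|^{-1/2}$, resp.\ $|z-a|^{-3/2}$, near $a$. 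An Arzel\`a--Ascoli/diagonal argument extracts from any subsequence a further one along which $h_\delta\to h$ in $C^1_{\mathrm{loc}}$ off the slit, where $h$ is harmonic in $\C\setminus\{a+x:x\le0\}$, satisfies $h(a+1)=1$, $h=O(|z-a|^{-1/2})$ near $a$, and (from (\ref{eq: hm-near-La}) divided by $\vartheta(\delta)$) $h\equiv0$ on the slit; also $h\to0$ at $\infty$ by (\ref{eq: hm-beurling}). To identify $h$, the map $w=\sqrt{z-a}$ sends the slit plane to $\{\Re\mathfrak{e}\,w>0\}$ and the slit to $i\R$, turning $h$ into a nonnegative harmonic function on the half-plane vanishing on $i\R\setminus\{0\}$, with an $O(|w|^{-1})$ singularity at $0$ and decaying at $\infty$; odd reflection across $i\R$ extends it harmonically to $\C\setminus\{0\}$, and the growth constraints leave only degree $-1$ terms, with vanishing on $i\R$ excluding $\Im\mathfrak{m}\,(1/w)$, so $h=c\,\Re\mathfrak{e}\,(1/w)=c\,\Re\mathfrak{e}\,[1/\sqrt{z-a}]$ and $c=1$ by the normalization. (Alternatively, this identification is exactly the known convergence of discrete Poisson kernels of \cite{chelkak-smirnov-i}.) Since the limit is subsequence-independent, $h_\delta\to\Re\mathfrak{e}\,[1/\sqrt{z-a}]$ in $C^1_{\mathrm{loc}}(\C\setminus L_a)$, which is (iii).

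\emph{Main obstacle.} Parts (i)--(ii) are routine once one accepts off-the-shelf discrete boundary Harnack and gradient estimates; the real work is (iii). The observation point sits exactly at the tip of the slit (the branch point), so the continuous limit is an unbounded Poisson-kernel-type harmonic function, not a harmonic measure of a fixed compact set. Consequently one cannot simply quote a black-box convergence statement: the delicate point is controlling $h_\delta$ and its discrete derivatives uniformly \emph{down to scale $\delta$} near $a$ — which is precisely what (\ref{eq: hm-beurling}) and (\ref{eq: hm-gradients}) are designed to provide — and then pinning down the limit among harmonic functions carrying a $|z-a|^{-1/2}$-type singularity.
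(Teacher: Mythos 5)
Your treatment of parts (i) and (ii) is sound and essentially equivalent to the paper's: the paper phrases (i) as a two-step random walk argument (probability $O(s/r)$ to escape a ball of radius $r/2$ without touching the slit, then Beurling to reach the tip), while you invoke a boundary Harnack/Carleson-type estimate near a straight segment — the same content, obtained by reflection. For (ii) the paper applies the discrete Harnack gradient estimate on a ball of radius $\frac{1}{2}|z-a|$, shrinking it to $\frac{1}{2}|\Im\mathfrak{m}(z-a)|$ when $z$ is near the slit so that the ball stays inside $\Xi_\delta$ (feeding in (i) to control $u$ there), which is a touch cleaner than reflecting across $L_a$, but both give the same bound.

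There is a genuine circularity in your argument for (iii). You invoke the two-sided estimate $\vartheta(\delta)\asymp\sqrt\delta$ from (\ref{eq: vartheta-estimate}) to obtain the uniform bounds on $h_\delta=\vartheta(\delta)^{-1}u$ needed for Arzel\`a--Ascoli and to conclude that the limit is $O(|z-a|^{-1/2})$ near $a$ and $o(1)$ at $\infty$. But only the \emph{upper} bound $\vartheta(\delta)\le C_+\sqrt\delta$ is available at this point (it follows from (\ref{eq: hm-beurling})); the \emph{lower} bound $\vartheta(\delta)\ge C_-\sqrt\delta$ is established in Section~\ref{sub:G-construction}, \emph{after} this lemma, and its proof relies on the construction of $G_{[\C_\delta,a]}$ and a convergence argument of exactly the type being proved in part (iii). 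Without the lower bound, the estimate $u\le C\delta^{1/2}|z-a|^{-1/2}$ does not yield any bound on $h_\delta=u/\vartheta(\delta)$. The paper avoids this by a different and self-contained argument: it shows directly that $\vartheta(\delta)^{-1}u$ is uniformly bounded on any fixed annulus around $a$, using the maximum principle (if $u(v)\ge M\vartheta(\delta)$ for some $v$ in the annulus, the maximum principle produces a nearest-neighbor path from $v$ to the tip $a+\frac{3\delta}{2}$ along which $u\ge M\vartheta(\delta)$, and the discrete harmonic measure of that path seen from $a+1$ is bounded below, contradicting $u(a+1)=\vartheta(\delta)$ if $M$ is large). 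This boundedness both replaces the missing lower bound on $\vartheta(\delta)$ and rules out the $\Re\mathfrak{e}\sqrt{z-a}$ component in the subsequential limit. Your Liouville identification via $w=\sqrt{z-a}$ and odd reflection is a perfectly good alternative to quoting Theorem~3.13 of \cite{chelkak-smirnov-i} once the uniform bounds are in place, but the uniform bounds themselves need the paper's maximum-principle argument (or some other argument not relying on (\ref{eq: vartheta-estimate})).
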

\begin{proof}
(i) In order to prove (\ref{eq: hm-near-La}), note that the probability for the random walk started at $z$ to leave the ball of radius $\frac{1}{2}|z-a|$ around $z$ before hitting $\partial\X$ is $O(|\Im\mathfrak{m}\,(z-a)|\cdot|z-a|^{-1})$, and once this has happened the probability to hit $\ato+\delta$ is uniformly bounded by $O(\delta^\frac{1}{2}|z-a|^{-\frac{1}{2}})$ due to the Beurling estimate (\ref{eq: hm-beurling}).

\smallskip

\noindent (ii) The estimate (\ref{eq: hm-gradients}) for the discrete derivatives follows from (\ref{eq: hm-beurling}), (\ref{eq: hm-near-La}) and the (discrete) Harnack estimate (e.g., see \cite[Proposition~2.7]{chelkak-smirnov-i}), applied to the ball of radius $\frac{1}{2}|z-a|$ (or $\frac{1}{2}|\Im\mathfrak{m}\,(z-a)|$, if $z$ is close to $L_a$) around~$z$.

\smallskip

\noindent (iii) This is essentially a special case of \cite[Theorem~3.13]{chelkak-smirnov-i} which claims the $C^1$-convergence of discrete Poisson kernels normalized at some inner point to their continuous counterparts. The fact that our domain is unbounded plays no role here as, for any $r>0$,
the positive discrete harmonic functions $f_a^\delta:=(\vartheta(\delta))^{-1}\mathrm{hm}_{\{\ato+\delta\}}^{\X}$ are uniformly bounded in the annulus $\{z:|z-a|\ge r\}$. Indeed, if $f_a^\delta$ is big at some point $v$ in this annulus, then, by the maximum principle, $f_a^\delta$ is also big along some nearest-neighbor path running from $v$ to $\ato\!+\delta$. Since the discrete harmonic measure of such a path as seen from $a+1$ is bounded from below (by a constant depending on $r$ but not on $\delta$), this leads to a contradiction with $f_a^\delta(a+1)=1$.
\end{proof}

\subsubsection{Construction of the full-plane spinor $F_{[\C_\delta,a]}$}
Now we are ready to construct $F_{[\C_\delta,a]}$, as announced in Lemma~\ref{lem: def_f_C}.

\begin{proof}[\textbf{Proof of Lemma \ref{lem: def_f_C}}]
Let $\X^{\pm}$ denote two copies of the slit plane $\X\subset\mathcal{V}^{1}_{\C_\delta}$. We first define the real component $F^1_{[\C_\delta,a]}$ on $\X^{\pm}$ as
\begin{equation}
\label{F-1-as-hm}
F^1_{[\C_\delta,a]}(z)= \pm\mathrm{hm}^{\X^\pm}_{\{\ato+\delta\} }\left(z\right),\quad z\in\X^\pm.
\end{equation}
Since $F^1_{[\C_\delta,a]}$ vanishes identically on $L_a$, by identifying the upper side of this cut in $\X^+$ with the lower side in $\X^-$ and vice versa, we obtain a function $F^1_{[\C_\delta,a]}$ which is defined and harmonic everywhere on $\mathcal{V}^{1}_{[\C_\delta,a]}$ except at the (two) points over $\ato\!+\delta$. Recall that Lemma~\ref{lem: hm-estimate-La+grad}(iii) ensures the convergence
\begin{equation}
\label{F1-hm-convergence}
(\vartheta(\delta))^{-1}F^1_{[\C_\delta,a]}(z) ~\underset{\delta\to0}{\longrightarrow}~ \Re\mathfrak{e}\,[{1}/{\sqrt{z-a}}\,]
\end{equation}
as well as the convergence of discrete derivatives of $F^1_{[\C_\delta,a]}(z)$ to partial derivatives of $\Re\mathfrak{e}\,[{1}/{\sqrt{z-a}}\,]$, uniformly on compact subsets of $\C\setminus L_a$.

\smallskip

We then define the imaginary component $F^i_{[\C_\delta,a]}:\mathcal{V}^{i}_{[\C_\delta,a^\rightarrow]}=\mathcal{V}^{i}_{[\C_\delta,a]}\backslash \{\ato\}\to i\R$ as discrete harmonic conjugate of $F^1_{[\C_\delta,a]}$, that is, by integrating the identity (\ref{eq: disc_holom}) along paths on $\mathcal{V}^{i}_{[\C_\delta,a^\rightarrow]}$ starting from, say, one of the two fibers of the point $\ato\!+2\delta$. Since $F^1_{[\C_\delta,a]}$ is harmonic, its discrete harmonic conjugate is well-defined at least on the universal cover of $[\C_\delta,a^\rightarrow]$. Further, let $\varpi$ denote some simple loop in $\mathcal{V}^{i}_{\C_\delta}$, starting and ending at $\ato\!+2\delta$, symmetric with respect to the horizontal line $\{x:\Im\mathfrak{m}\,(x-a)=0\}$, and surrounding the singularity (so, it lifts to $[\C_\delta,a^\rightarrow]$ as a path connecting two different fibers of $\ato\!+2\delta$). It follows from the antisymmetry of $F^1_{[\C_\delta,a]}$ with respect to $L_a$ (also, note that $\varpi$ changes
the sheet once passing across the cut), that
the total increment of $F^i_{[\C_\delta,a]}$ along $\varpi$ is zero. Thus, $F^i_{[\C_\delta,a]}$ vanishes at both fibers of the point $\ato\!+2\delta$, and hence inherits the spinor property of its discrete harmonic conjugate $F^1_{[\C_\delta,a]}$. Moreover, for $b:=\ato\!+2j\delta$, $j\ge 1$, the discrete holomorphicity equation (\ref{eq: s-hol-reduced}) and
the symmetry of $F^1_{[\C_\delta,a]}$ with respect to the half-line $R_a:=\{x+\ato:x>0\}$ imply
\[
F^i_{[\C_\delta,a]}(b+2\delta)-F^i_{[\C_\delta,a]}(b)
= \mp i[F^1_{[\C_\delta,a]}(b\pm i\delta)-2F^1_{[\C_\delta,a]}(b+\delta)+F^1_{[\C_\delta,a]}(b+(2\pm i)\delta)]
=0.
\]
Therefore, $F^i_{[\C_\delta,a]}$ vanishes
everywhere on $R_a$.
Further, the estimate (\ref{eq: hm-gradients}) guarantees that $F^i_{[\C_\delta,a]}$ is uniformly bounded: just take a path of discrete integration in the definition of $F^i_{[\C_\delta,a]}$ running from $R_a$ to $z$ along the circular arc centered at $0$.

\smallskip

It is worth to note that $F^i_{[\C_\delta,a]}$ also admits a discrete harmonic measure representation similar to (\ref{F-1-as-hm}). Namely, let $\Y^{\pm}$ denote the two sheets of $\mathcal{V}^i_{[\C_\delta,a^\rightarrow]}\setminus R_a$, where signs in the notation are chosen so that $F^1_{[\C_\delta,a]}> 0$ in the upper half-plane on $\Y^+$ and in the lower half-plane on $\Y^-$. Then,
\begin{equation}
\label{F-i-as-hm'}
F^i_{[\C_\delta,a]}(z)=\mp i\cdot\mathrm{hm}^{\Y^\pm}_{\{\ato\}}(z),\quad z\in\Y^\pm.
\end{equation}
Indeed, on each of the sheets $\Y^\pm$ one can further extend $F^i_{[\C_\delta,a]}$ (as a harmonic conjugate of $F^1_{[\C_\delta,a]}$) to the point $\ato$: the only obstruction to do so on  $[\C_\delta,a]$ was that the increment of $F^i_{[\C_\delta,a]}$ along the smallest loop surrounding $\ato\!+\delta$ would be non-zero as $F^1_{[\C_\delta,a]}$ is not harmonic at $\ato\!+\delta$, but now this loop intersects the cut~$R_a$. Since the bounded function $F^i_{[\C_\delta,a]}$ is harmonic on $\Y^\pm$ and vanishes on~$R_a$, it is proportional to $\mathrm{hm}^{\Y^\pm}_{\{\ato\}}$. Finally, it follows from symmetry arguments that
\begin{eqnarray*}
\pm\mathrm{hm}^{\Y^\pm}_{\{\ato\}}\left(\ato\!+(1\!+\!i)\delta\right)~-~(\pm 1)
& = & \pm\mathrm{hm}^{\X^\pm}_{\{\ato+\delta\}}\left(\ato\!+i\delta\right)-
(\pm 1) \\ & = & F_{[\C_\delta,a]}^1\left(\ato+i\delta\right)-
F_{[\C_\delta,a]}^1\left(\ato\!+\delta\right) \\
& = & i \cdot [F_{[\C_\delta,a]}^i\left(\ato\!+\left(1\!+\!i\right)\delta\right)-
F_{[\C_\delta,a]}^i\left(\ato\right)]
\end{eqnarray*}
which fixes the multiplicative constant $\mp i$ in (\ref{F-i-as-hm'}).

\smallskip

Similarly to (\ref{F1-hm-convergence}), we have
\begin{equation}
\label{Fi-hm-convergence}
(\vartheta(\delta))^{-1}F^i_{[\C_\delta,a]}(z) ~\underset{\delta\to0}{\longrightarrow}~ \Im\mathfrak{m}\,[{1}/{\sqrt{z-a}}\,]
\end{equation}
together with the convergence of discrete derivatives, uniformly on compact subsets of $\C\setminus R_a$. Now, following Remark \ref{rem: s-hol-extended}, we extend $F_{[\C_\delta,a]}$ to $\mathcal{V}^{\mathrm{m}}_{[\C,a]}$ and $\mathcal{V}^{\lambda,\overline{\lambda}}_{[\C,a]}$ as an s-holomorphic function. The convergence (\ref{F_C-convergence}) readily follows from (\ref{F1-hm-convergence}) and (\ref{Fi-hm-convergence}) (for points lying near $L_a$, one can approximate $F^1_{[\C_\delta,a]}$ by summing discrete derivatives of $F^i_{[\C_\delta,a]}$ and vice versa near $R_a$).
\end{proof}

\def\x{{\dag}}

The following lemma explains the future role of $F_{[\C_\delta,a]}$: it has the same ``discrete singularity'' (whatever it means) at $\ato$ as  $F_{[\Omega_\delta,\Ak]}$ has at $\Ato$. This allows one to handle this singularity by taking the difference $F_{[\Omega_\delta,\Ak]}-F_{[\C_\delta,a_1]}$
\begin{lem}
\label{lem: remove_sing}
The spinor
$
F_{[\Omega_\delta,\Ak]}^\x:=F_{[\Omega_\delta,\Ak]}-F_{[\C_\delta,a_1]},
$
extended to be zero at $\Ato$, is s-holomorphic in any simply connected neighborhood of $a_1$ which does not contain other branching points $a_2,\dots, a_n$.
\end{lem}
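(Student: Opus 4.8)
The claim is that the difference spinor $F_{[\Omega_\delta,a;\Ak]}^{(a)}$, which a priori is only defined on $\mathcal{V}^{\mathrm{cm}}_{[\Omega_\delta,a^\rightarrow;\Ak]}$ (both observables omit the corner $a+\frac\delta2$), extends s-holomorphically across $a+\frac\delta2$ once we set its value there to be $0$. Let me think about what has to be checked.

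First, both $F_{[\Omega_\delta,a;\Ak]}$ and $F_{[\C_\delta,a]}$ are s-holomorphic on the common domain away from $a+\frac\delta2$ — the first by Proposition~\ref{prop: s-hol+bc} (restricted to a simply-connected neighborhood $U$ of $a$ containing no $a_j$), the second by Lemma~\ref{lem: def_f_C}. Since s-holomorphicity is a linear condition (the projections $\mathsf P_{\ell(x)}$ are $\R$-linear), the difference is s-holomorphic at every corner/mid-edge of $U$ except possibly at the four corners and four mid-edges immediately surrounding $a$. So the content of the lemma is entirely local: one must verify the s-holomorphicity relations $F^{(a)}(x)=\mathsf P_{\ell(x)}[F^{(a)}(z)]$ for the corner–mid-edge pairs $(x,z)$ that involve the corner $a+\frac\delta2$, with the convention $F^{(a)}(a+\frac\delta2)=0$.

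The plan is therefore: (i) record that on $U\setminus\{a+\frac\delta2\}$ the difference is s-holomorphic by linearity; (ii) compute the relevant local values of both $F_{[\Omega_\delta,a;\Ak]}$ and $F_{[\C_\delta,a]}$ near $a$ and check they cancel. The key input is Lemma~\ref{lem: sing_f_Omega}: it says that, for the two mid-edges $a+\tfrac{1\pm i}{2}\delta$ on the source sheet, $\mathsf P_{i\R}[F_{[\Omega_\delta,a;\Ak]}(a+\tfrac{1\pm i}2\delta)]=\mp i$. The combinatorial derivation of Lemma~\ref{lem: sing_f_Omega} is completely insensitive to what the domain is and to how many extra marked points $a_1,\dots,a_k$ there are — it only uses the XOR bijection near $a$ and the winding bookkeeping, which is identical in $\Omega_\delta$ and in $\C_\delta$. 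Hence the \emph{same} identity holds for $F_{[\C_\delta,a]}$ in place of $F_{[\Omega_\delta,a;\Ak]}$ (this is essentially how $F_{[\C_\delta,a]}$ was normalized: its imaginary part is the harmonic conjugate with the matching singularity, cf. the normalization computation at the end of the proof of Lemma~\ref{lem: def_f_C}). Subtracting, $\mathsf P_{i\R}[F^{(a)}(a+\tfrac{1\pm i}2\delta)]=0$, which is exactly the s-holomorphicity relation at the corner $x=a+\frac\delta2\in\mathcal V^{i}_{\Omega_\delta}$ (so $\ell(x)=i\R$) paired with each of its two adjacent mid-edges, given the convention $F^{(a)}(a+\frac\delta2)=0$. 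The remaining pairs to check are those joining $a+\frac\delta2$ to mid-edges — but in the double cover $[\Omega_\delta,a^\rightarrow;\Ak]$ the corner $a+\frac\delta2$ is adjacent only to the mid-edges $a+\tfrac{1\pm i}2\delta$ (the mid-edge $a-\frac\delta2$ lies on the cut / is excluded along with $a+\frac\delta2$ from the $a^\rightarrow$ notation), so those two relations are all there is.

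The step I expect to require the most care is making the matching of singularities in Lemma~\ref{lem: sing_f_Omega}-type statements fully rigorous for $F_{[\C_\delta,a]}$: one must confirm that the purely combinatorial argument proving Lemma~\ref{lem: sing_f_Omega} applies verbatim to the full-plane observable, i.e.\ that $F_{[\C_\delta,a]}$ \emph{is} the combinatorial spinor on $\C_\delta$ with source $a+\frac\delta2$ (so that Lemma~\ref{lem: sing_f_Omega} literally applies), or, alternatively, to re-derive $\mathsf P_{i\R}[F_{[\C_\delta,a]}(a+\tfrac{1\pm i}2\delta)]=\mp i$ directly from the harmonic-measure construction via the increment identities at the end of the proof of Lemma~\ref{lem: def_f_C}. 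Once this local normalization is pinned down, the lemma is immediate by linearity, and one notes for later use that the extended $F^{(a)}$ is s-holomorphic on all of $U$ including $a+\frac\delta2$, so that its ``discrete primitive'' $H^{(a)}$ is well-defined there — which is what makes the subsequent analysis near $a$ (Theorem~\ref{thm:localization-near-a}) possible.
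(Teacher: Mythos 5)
Your proposal is correct in structure and follows essentially the same route as the paper: reduce, via linearity, to the two s-holomorphicity relations that involve the corner $a+\frac{\delta}{2}$ (paired with the mid-edges $a+\frac{1\pm i}{2}\delta$), then invoke Lemma~\ref{lem: sing_f_Omega} for $F_{[\Omega_\delta,a;\Ak]}$ and check that $F_{[\C_\delta,a]}$ has the matching projections $\mathsf P_{i\R}[F_{[\C_\delta,a]}(a+\tfrac{1\pm i}{2}\delta)]=\mp i$.

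The concern you flag at the end is exactly the right one, and your primary suggested route has a genuine gap there: $F_{[\C_\delta,a]}$ is \emph{not} a combinatorial contour sum (the definition via a sum over $\mathcal C_{\C_\delta}(a+\frac{\delta}{2},z)$ would not obviously converge on the infinite lattice), so Lemma~\ref{lem: sing_f_Omega} cannot be applied ``verbatim''. The paper indeed takes your fallback route: it uses the harmonic-measure representation (\ref{F-i-as-hm'}) of $F^i_{[\C_\delta,a]}$ on the auxiliary sheets $\Xi'_\pm$ (slit along $R_a$ rather than $L_a$), on which $F^i_{[\C_\delta,a]}$ extends s-holomorphically to $a+\frac{\delta}{2}$ with value $-i\cdot\mathrm{hm}^{\Xi'_+}_{\{a+\frac{\delta}{2}\}}(a+\frac{\delta}{2})=-i$, and then transfers to $\Xi_+$ using the observation that $\Xi'_+$ agrees with $\Xi_+$ in the upper half-plane but with $\Xi_-$ in the lower, which is what produces the sign $\mp$. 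A minor geometric slip: $a-\frac{\delta}{2}$ is a corner at distance $\delta$ from $a+\frac{\delta}{2}$, not an adjacent mid-edge, and it is not excluded by the $a^{\rightarrow}$ convention; the only mid-edges adjacent to $a+\frac{\delta}{2}$ are $a+\frac{1\pm i}{2}\delta$, so your final conclusion that there are exactly two relations to check is correct anyway.
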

\begin{proof}
By Lemma \ref{lem: sing_f_Omega}, it suffices to check that
$\mathsf{P}_{i\mathbb{R}}\left[F_{[\C_{\delta},a_1]}\left(a_1+\frac{1\pm i}{2}\delta\right)\right]  =  \mp i$ on the sheet $\X^+$. The considerations given in the previous proof show that, being considered on $\Y^\pm$ instead of $\X^+$, the spinor $F_{[\C_\delta,a_1]}$ can be extended at $\Ato$ in an s-holomorphic way and
\[
 \mathsf{P}_{i\mathbb{R}}\left[F_{[\C_{\delta},a_1]}\left(a_1+\tfrac{1\pm i}{2}\delta\right)\right]  = F_{[\C_{\delta},a_1]}^i(\Ato) = \mp i~~\text{on}~~\Y^\pm.
\]
However, by definition, $\X^+$ coincides with $\Y^+$ in the upper half-plane, and with $\Y^-$ in the lower half-plane.
\end{proof}

\subsubsection{Construction of the harmonic spinor $G_{[\C_\delta,a]}$}
\label{sub:G-construction} Now we ``integrate'' the real component of $F_{[\C_\delta,a]}$ in order to construct a discrete counterpart of the harmonic spinor $\Re\mathfrak{e}\,\sqrt{z-a}$, as announced in Lemma~\ref{def_r_delta}. Along the way, we also prove the double-sided estimate (\ref{eq: vartheta-estimate}) of the normalizing factor $\vartheta(\delta)$. Note that the upper bound $\vartheta(\delta)\le \mathrm{C}_+ \sqrt{\delta}$ directly follows from the discrete Beurling estimate (\ref{eq: hm-beurling}), so we need to prove the lower bound only.

\begin{proof}[\textbf{Proof of Lemma \ref{def_r_delta} and the estimate (2.16).}]
We use the notation for the sheets and the cuts of $[\C_\delta,a]$ introduced above. For $z\in\X^+\subset\mathcal{V}^{1}_{[\C_\delta,a]}$, define
\begin{equation}
G_{[\C_{\delta},a]}(z):=
\delta\cdot{\sum\limits_{j=0}^{\infty}} F^1_{[\C_\delta,a]}(z-2j\delta).
\label{eq: def_r}
\end{equation}
Due to the estimate (\ref{eq: hm-near-La}) for $F^1_{[\C_\delta,a]}(\,\cdot\,)=\mathrm{hm}^{\X}_{\{\ato+\delta\}}(\,\cdot\,)$, this sum always converges. Note that $F^1_{[\C_\delta,a]}=0$ on the cut $L_a$, so $G_{[\C_{\delta},a]}$ vanishes on $L_a$ too.

We are going to prove that $G_{[\C_{\delta},a]}$ is harmonic \emph{everywhere} inside $\X^+$, including the point $\ato\!+\delta$ right near the cut $L_a$.
For $z$ outside $R_a$, this harmonicity readily follows from the harmonicity of $F^1_{[\C_{\delta},a]}$, so let $z\in R_a$.
Denote by $\mathcal{S}_N(z)$,
\[
\mathrm{Int}\,\mathcal{S}_N(z):=\mathcal{V}^{1}_{\C_\delta}\cap\{z:|\Re\mathfrak{e}\, (w-z)|,\, |\Im\mathfrak{m}\, (w-z)| \leq 2N\delta\}
\]
a sufficiently large square centered at $z$ and write the discrete Green formula:
\begin{align*}
\sum\limits_{j=\lfloor {z}/{2\delta} \rfloor}^{N} \Delta F^1_{[\C_\delta,a]}(z-2j\delta)\  = &\sum\limits_{w\in \mathcal{S}_N(z)} \Delta F^1_{[\C_\delta,a]}(w)
\\ = & \sum\limits_{(ww')\in \partial\mathcal{S}_N(z)} (F^1_{[\C_\delta,a]}(w')-F^1_{[\C_\delta,a]}(w)).
\end{align*}
Let $N$ be large enough so that $|w-a|\ge N\delta$ for all boundary edges of the square $\mathcal{S}_N(z)$.
Then, it immediately follows from the estimate (\ref{eq: hm-gradients}) that the last sum is $O(N\cdot \delta^{\frac{3}{2}}(N\delta)^{-\frac{3}{2}})=O(N^{-\frac{1}{2}})$. Passing to a limit as $N\to\infty$, we conclude that
\[
\Delta G_{[\C_\delta,a]}(z)=\delta\cdot\!\!\sum\limits_{j=\lfloor {z}/{2\delta} \rfloor}^{\infty} \Delta F^1_{[\C_\delta,a]}(z-2j\delta)=0.
\]
Thus, $G_{[\C_\delta,a]}$ is indeed discrete harmonic in $\X^+$. Moreover, since it vanishes on $L_a$, one can harmonically extend $G_{[\C_\delta,a]}$ to the double cover $\mathcal{V}^1_{[\C_\delta,a]}$ by symmetry.

Let $\nu(\delta)$ denote the value of $G_{[\C_\delta,a]}$ in a lattice approximation of the point $a+1$. Arguing as in the proof of Lemma~\ref{lem: hm-estimate-La+grad}(iii), we see that, uniformly on compact subsets of $\C\setminus L_a$,
\begin{equation}
\label{G-convergence}
(\nu(\delta))^{-1}G_{[\C_\delta,a]}(z) ~\underset{\delta\to0}{\longrightarrow}~ \Re\mathfrak{e}\,{\sqrt{z-a}}
\end{equation}
(since $G_{[\C_\delta,a]}$ vanishes everywhere on $L_a$ and remains bounded near $0$ by the maximum principle, in this case the limiting positive harmonic function should be proportional to $\Re\mathfrak{e}\,{\sqrt{z-a}}$, and the multiplicative normalization is fixed at $a+1$). Moreover, the similar convergence holds true for discrete derivatives, yielding
\[
(\nu(\delta))^{-1}\cdot{\textstyle\frac{1}{2}}F^1_{[\C_\delta,a]}(z) ~\underset{\delta\to0}{\longrightarrow}~ \partial_x\Re\mathfrak{e}\,{\sqrt{z-a}}={\textstyle\frac{1}{2}}\,\Re\mathfrak{e}\,\left[{1}/{\sqrt{z-a}}\,\right].
\]
In particular, $\nu(\delta)\sim\vartheta(\delta)$ as $\delta\to 0$ which allows us to give a simple proof of the lower bound in (\ref{eq: vartheta-estimate}): as discrete harmonic functions $(\nu\left(\delta\right))^{-1}G_{[\C_\delta,a]}$ are uniformly bounded near the unit circle around $a$ and vanish identically on $L_a$, discrete Beurling estimate(\ref{eq:beurling-Lambda}) implies
\[
(\nu\left(\delta\right))^{-1}\cdot{\delta}=(\nu\left(\delta\right))^{-1}{G_{[\C_\delta,a]}\left(\ato\!+\delta\right)}\le C\sqrt\delta.
\]
Finally, the convergence (\ref{G-convergence}) near $L_a$ follows from the convergence of $F^1_{[\C_\delta,a]}(z)$, since the tails in (\ref{eq: def_r}) are uniformly small due to the estimate (\ref{eq: hm-near-La}).
\end{proof}

\subsection{\label{sub: riemann-problem}The boundary value problem for spinors}
In this section we reformulate the Riemann-type boundary value problem for holomorphic spinors (both discrete and continuous) using primitives (antiderivatives) of their squares. Note that this approach is not completely straightforward, since the square of a discrete holomorphic function, in general, does not have discrete primitive. However, it was noted in~\cite{smirnov-i} that one can naturally define the real part of this primitive, using the \emph{s-holomorphicity} of observables. Moreover, a technique developed in \cite{chelkak-smirnov-ii} (see, in particular, sections 3.4 and 3.5 therein) allows one to treat this real part essentially if it were a harmonic function. Below, we summarize the tools we will use. We warn the reader that all our definitions are equivalent to those of \cite{chelkak-smirnov-ii} \emph{after the multiplication of the spinor by $\sqrt{i}$}, which means that imaginary part, sub-/super-harmonicity and positivity of functions and their (inner) normal derivatives used in \cite{chelkak-smirnov-ii}
should be replaced by real part, super-/sub-harmonicity and negativity, respectively.

\subsubsection{\label{sub:discrete-integration}Discrete integration of the squared spinor observables} Let $\Delta_{\delta}^{\circ}$ be the standard (unnormalized) discrete Laplacian
acting on functions $H_{\delta}^\circ:\mathcal{V}_{\Omega_{\delta}}^{\circ}\to\mathbb{R}$ (which are defined on faces of $\Omega_\delta$):
\[
\Delta_{\delta}^{\circ}H_{\delta}^\circ\left(z\right)~:=~\sum_{w\sim z}\left(H_{\delta}^\circ\left(w\right)-H_{\delta}^\circ\left(z\right)\right),\quad z\in\mathrm{Int}\mathcal{V}_{\Omega_{\delta}}^{\circ},
\]
where the sum is over the four neighbors $w\in{\mathcal{V}}_{\Omega_{\delta}}^{\circ}$
of $z$. Similarly, for functions $H_{\delta}^\bullet:{\mathcal{V}}_{\Omega_{\delta}}^{\bullet}\to\mathbb{R}$ defined on vertices of $\Omega_\delta$, let $\Delta_{\delta}^{\bullet}$ denote the slightly modified discrete Laplacian:
\[
\Delta_{\delta}^{\bullet}H_{\delta}^\bullet\left(z\right) ~:=~ \sum_{w\sim z}c_{zw}\cdot\left(H_{\delta}^\bullet\left(w\right)-H_{\delta}^\bullet\left(z\right)\right),\quad z\in\mathrm{Int}\mathcal{V}_{\Omega_{\delta}}^{\bullet},
\]
where the conductance $c_{zw}$ is equal to $1$ for inner edges (i.e., for $w\in\mathrm{Int}\mathcal{V}_{\Omega_{\delta}}^{\bullet}$) and $c_{zw}={2}({\sqrt{2}-1})$ for the boundary edges (see Section~3.6 in \cite{chelkak-smirnov-ii} or \cite{duminil-copin-hongler-nolin} for the reason of this ``boundary modification'' of $\Delta_{\delta}^{\bullet}$).
\begin{prop}
\label{prop:int-square-spinor} For an s-holomorphic spinor observable $F_\delta=F_{\left[\Omega_\delta,\Ak\right]}:[\Omega_\delta,a_1^\rightarrow,\dots,a_n]\to\C$ satisfying boundary conditions (\ref{eq: discrete_bc}), one can define a real-valued function $H_\delta=H_{\left[\Omega_{\delta},\Ak\right]}:{\mathcal{V}}_{\Omega_{\delta}}^{\bullet\circ}\to\mathbb{R}$ which is a discrete analogue of the primitive $\Re\mathfrak{e}\int(F_\delta(z))^{2}dz$, so that the following properties are fulfilled:
\begin{itemize}
\item for any adjacent $w\in{\mathcal{V}}_{\Omega_{\delta}}^{\circ}$ and $v\in\mathcal{V}_{\Omega_{\delta}}^{\bullet}$, one has
\begin{equation}
\label{eq:H-def}
H_\delta^\circ\left(w\right)-H_\delta^\bullet\left(v\right)  =  2\delta\left|F_\delta\left(\tfrac{1}{2}(w+v)\right)\right|^{2},
\end{equation}
where $\tfrac{1}{2}(w+v)\in\mathcal{V}_{\Omega_{\delta}}^{\mathrm{c}}$ is the corner between $v$ and $w$ (in accordance with Lemma~\ref{lem: sing_f_Omega}, we set $\left|F_\delta\left(\Ato\right)\right|:=1$ in the case $w=a_1$ and $v=a_1+\delta$);
\item $H_\delta$ satisfies Dirichlet boundary conditions:
$H_\delta^\circ\left(v\right)=0$ for any $w\in\partial\mathcal{V}_{\Omega_{\delta}}^{\circ}$, and
$H_\delta^\bullet\left(v\right)=0$ for any $v\in\partial\mathcal{V}_{\Omega_{\delta}}^{\bullet}$;
\item $H_\delta^\bullet$ has a ``nonpositive inner normal derivative'', i.e. $H_\delta\left(v\right)\leq 0$ for any vertex $v\in\mathcal{V}_{\Omega_{\delta}}^{\bullet}$ adjacent to a boundary vertex;
\item $H_\delta^{\circ}$ is $\Delta_{\delta}^{\circ}$-subharmonic
on $\mathcal{V}_{\Omega_{\delta}}^{\circ}\setminus\left\{ \Ak\right\}$, while
$H_\delta^\bullet$ is $\Delta_{\delta}^{\bullet}$-superharmonic on $\mathcal{V}_{\Omega_{\delta}}^{\bullet}\setminus\left\{a_1+\delta\right\}$.
\end{itemize}
\end{prop}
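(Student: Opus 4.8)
The plan is to construct $H_\delta$ directly via the local rule (\ref{eq:H-def}) and then verify each bullet point, following the scheme of \cite[Sections 3.4--3.6]{chelkak-smirnov-ii} adapted to the spinor setting. First I would check that the right-hand side of (\ref{eq:H-def}) is consistent, i.e.\ that it defines $H_\delta$ up to a single additive constant: going around a vertex $v\in\mathcal{V}_{\Omega_\delta}^\bullet$ through its four incident corners $x_1,x_2,x_3,x_4$ and faces $w_1,\dots,w_4$, the alternating sum of the increments $H_\delta^\circ(w_j)-H_\delta^\bullet(v)$ and $H_\delta^\bullet(v')-H_\delta^\circ(w_j)$ must telescope to zero; equivalently, going around a face one needs $\sum_j\pm|F_\delta(x_j)|^2=0$. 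This is precisely the identity that s-holomorphicity of $F_\delta$ guarantees: writing $F_\delta(x_j)=\mathsf{P}_{\ell(x_j)}[F_\delta(z)]$ for the mid-edge values $z$ and expanding, the squared projections combine so that $\Re\mathfrak{e}[(F_\delta(z))^2(z'-z)]$ is a closed discrete $1$-form around each face and each vertex. Because $F_\delta$ is a spinor (monodromy $-1$ around $a,a_1,\dots,a_k$) while $(F_\delta)^2$ is single-valued, this consistency is unaffected by going around the branch points; the only subtlety is the corner $a+\frac{\delta}{2}$, where Lemma~\ref{lem: sing_f_Omega} forces us to set $|F_\delta(a+\frac{\delta}{2})|:=1$, and one checks directly from that lemma that the four increments around the vertex $a+\delta$ still telescope, which is exactly why $H_\delta^\bullet$ is allowed to fail superharmonicity only at $a+\delta$.

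Next I would establish the boundary behaviour. The Dirichlet condition $H_\delta^\circ\equiv 0$ on $\partial\mathcal{V}_{\Omega_\delta}^\circ$ and $H_\delta^\bullet\equiv 0$ on $\partial\mathcal{V}_{\Omega_\delta}^\bullet$ is a direct consequence of the boundary relation (\ref{eq: discrete_bc}): along the boundary the increment (\ref{eq:H-def}) for the mid-edge $z\in\partial\mathcal{V}_{\Omega_\delta}^{\mathrm{m}}$ reads $2\delta\,\Re\mathfrak{e}[(F_\delta(z))^2\nu_{\mathrm{out}}(z)/|\nu_{\mathrm{out}}(z)|]$, and $\Im\mathfrak{m}[F_\delta(z)\sqrt{\nu_{\mathrm{out}}(z)}]=0$ makes $(F_\delta(z))^2\nu_{\mathrm{out}}(z)$ real and of a fixed sign, so the primitive is constant along $\partial\Omega_\delta$; normalizing that constant to $0$ (and using that $\Omega_\delta$ has no single-face fjords so the boundary is connected) gives both vanishing statements. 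The ``negative inner normal derivative'' $H_\delta^\bullet(v)\le 0$ for $v$ adjacent to a boundary vertex then follows from (\ref{eq:H-def}) applied across the boundary edge: $H_\delta^\bullet(v)=H_\delta^\bullet(v_{\partial})-2\delta|F_\delta(\cdot)|^2=-2\delta|F_\delta(\cdot)|^2\le 0$, where $v_\partial$ is the adjacent boundary vertex and the corner value between them is the relevant mid-edge projection; here I should be careful to use the modified conductance $2(\sqrt2-1)$ on boundary edges exactly as in \cite[Section 3.6]{chelkak-smirnov-ii} and \cite{duminil-copin-hongler-nolin}, which is what makes the sub/superharmonicity compatible with this sign of the normal derivative.

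Finally, for the sub/superharmonicity I would compute $\Delta_\delta^\circ H_\delta^\circ(w)$ and $\Delta_\delta^\bullet H_\delta^\bullet(v)$ from (\ref{eq:H-def}) and the s-holomorphicity of $F_\delta$. Summing the four increments around a face $w$ and expressing everything through the two mid-edge values meeting at each corner, the discrete Laplacian of $H_\delta^\circ$ at $w$ equals a non-negative quadratic expression in the differences of $F_\delta$ at neighbouring mid-edges (this is the standard computation: $\Delta_\delta^\circ H_\delta^\circ(w)=2\delta\sum\big(|\cdot|^2-|\cdot|^2\big)$ reorganizes into $\delta\,|F_\delta(z_1)-\lambda F_\delta(z_2)|^2$-type terms by the projection identities), hence $H_\delta^\circ$ is subharmonic away from the singular faces $a,a_1,\dots,a_k$; dually, the sign flips for $H_\delta^\bullet$ and it comes out superharmonic away from $a+\delta$. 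The singular points are genuinely excluded: at $a,a_1,\dots,a_k$ the monodromy of $F_\delta$ breaks the telescoping, and at $a+\delta$ the artificial value $|F_\delta(a+\frac{\delta}{2})|=1$ does. I expect the main obstacle to be precisely this bookkeeping at and around the source corner $a+\frac{\delta}{2}$ and the marked faces --- making sure the definition of $H_\delta$ is single-valued on the right (punctured) domain, that the subharmonicity defect is localized exactly at $a+\delta$ (for $H_\delta^\bullet$) and at $a,a_1,\dots,a_k$ (for $H_\delta^\circ$) and nowhere else, and that the sign conventions (real vs.\ imaginary part, sub- vs.\ super-harmonic, negative vs.\ positive normal derivative) are consistently the $\sqrt i$-rotated versions of those in \cite{chelkak-smirnov-ii}. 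The rest is a routine, if somewhat lengthy, transcription of the arguments in \cite[Sections 3.4--3.6]{chelkak-smirnov-ii}.
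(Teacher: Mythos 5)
Your proposal follows essentially the same route as the paper: the paper's proof is a one-line appeal to \cite[Section 3.3]{chelkak-smirnov-ii}, with the observation that all the listed properties are local and that $|F_\delta|^2$ is sheet-independent, so the spinor nature causes no new issues; you reproduce exactly this plan, and you also correctly pinpoint the three nontrivial adaptations (the artificial value $|F_\delta(a+\frac\delta2)|:=1$ coming from Lemma~\ref{lem: sing_f_Omega}, the $\sqrt i$-rotated sign conventions relative to \cite{chelkak-smirnov-ii}, and the $2(\sqrt2-1)$ boundary conductance in $\Delta^\bullet_\delta$). A few formulas in your sketch are imprecise --- the well-definedness check is around the elementary quadrilaterals of the bipartite vertex-face graph (one per medial vertex), not ``around a face''; the tangential boundary increment of $H_\delta$ should involve $\Re\mathfrak{e}\,[(F_\delta(z))^2 \cdot i\nu_{\mathrm{out}}(z)]$ rather than $\Re\mathfrak{e}\,[(F_\delta(z))^2\nu_{\mathrm{out}}(z)/|\nu_{\mathrm{out}}(z)|]$; and in the normal-derivative step you mean $H^\bullet_\delta(v)=H^\circ_\delta(w_\partial)-2\delta|F_\delta|^2$ with $w_\partial$ a boundary \emph{face} --- but none of these slips affects the structure or validity of the argument.
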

\begin{proof}
All the claims follow directly from the results of Section~3.3 in \cite{chelkak-smirnov-ii}. Since all listed properties are local, the spinor nature of $F_\delta$ plays no role here (note that the right-hand side of (\ref{eq:H-def}) does not depend on the sheet).
\end{proof}

\begin{rem}
\label{rem:maximum}
By construction, $H^{\circ}_\delta(w)\geq H^{\bullet}_\delta(v)$ for adjacent $w$ and $v$. Combined with sub-/super-harmonicity, this implies a maximum principle for $H_\delta$: if $\Omega'_\delta\subset \Omega_\delta$ does not contain $a_1+\delta$ (respectively, any of $\Ak$), then \[
\min\limits_{\Omega'_\delta}H_\delta=\min\limits_{\partial\Omega'_\delta}H_\delta^{\bullet}\quad \text{(respectively,}\; \max\limits_{\Omega'_\delta}H_\delta=\max\limits_{\partial\Omega'_\delta}H_\delta^\circ).
\]
Moreover, if $\mathrm{hm}_A(z)$ denotes the discrete harmonic measure of a set $A$ in $\Omega'_\delta$ viewed from $z$, then
\begin{eqnarray*}
H_\delta(z)\geq (1-\mathrm{hm}^\bullet_A(z))\min\limits_{\partial\Omega'_\delta}H_\delta^{\bullet}+ \mathrm{hm}^\bullet_A(z)\min\limits_{A}H_\delta^{\bullet};\\
H_\delta(z)\leq (1-\mathrm{hm}^\circ_A(z))\max\limits_{\partial\Omega'_\delta}H_\delta^{\circ}+\mathrm{hm}^\circ_A(z)\max\limits_{A}H_\delta^{\circ}.
\end{eqnarray*}
\end{rem}
\begin{rem}
\label{rem:fix_singularity}
The subharmonicity of $H^{\circ}_\delta$ fails at $a_1,\dots,a_n$ because $F_{[\Omega_\delta,a_1\dots a_n]}$ branches at those points, while the superharmonicity of $H^{\bullet}_\delta$ fails at $a_1+\delta$ because of the discrete singularity of $F_{[\Omega_\delta,a_1\dots a_n]}$ which is not defined at $\Ato$. Due to Lemma~\ref{lem: remove_sing}, one can remove this singularity subtracting the full-plane observable $F_{[\C_\delta,a_1]}$. Then, the function
\[
\textstyle H_{\delta}^\x=H_{[\Omega_\delta,a_1\dots,a_n]}^\x:= \Re\mathfrak{e}\int(F_\delta^\x(z))^2dz,\quad F_\delta^\x:=F_{[\Omega_\delta,a_1\dots,a_n]}-F_{[\C_\delta,a_1]}
\]
(defined in the same way as $H_{\delta}$ accordingly to Proposition~\ref{prop:int-square-spinor}) is subharmonic on $\mathcal{V}^\circ_{\Omega_\delta}\setminus\{a_1\}$ and superharmonic on $\mathcal{V}^\bullet_{\Omega_\delta}$ everywhere in a neighborhood of $a_1$. Moreover, since $F_\delta^\x(\Ato)=0$ on both sheets, the values of $F_\delta^\x$ at the nearby corners $a_1\pm\frac{\delta}{2},a_1\pm \frac{i\delta}{2}$ and midedges $a_1\pm\frac{1\pm i}{2}\delta$ satisfy the \mbox{s-holomorphicity} conditions as if $F_\delta^\x$ were nonbranching at $a_1$. Thus, $H_{\delta}^\x$ is subharmonic at the point $a_1\in\mathcal{V}^\circ_{\Omega_\delta}$ too.
\end{rem}

\subsubsection{Integration of squared spinors in the continuous setup.}
We now give a characterization of the continuous spinors solving boundary value problem (\ref{def:spinor-bdry})\,--\,(\ref{def:spinor-1}) in terms of the primitives of their squares. In the next section, this characterization will be used in the proofs of main convergence results.

\begin{prop}
\label{prop: obs_uniqueness}
 Let $\Omega$ be a simply connected domain, and suppose a holomorphic spinor $f=f_{[\Omega,a_1,\dots,a_n]}$ solves the boundary value problem (\ref{def:spinor-bdry})\,--\,(\ref{def:spinor-1}) (or is defined according to Remark \ref{rem: sp_rough_domain}(ii), if $\Omega$ is not smooth). Define two harmonic functions
 \[
 \textstyle h:=\Re\mathfrak{e}\int(f(z))^2dz\quad\text{and}\quad h^\x:=\Re\mathfrak{e}\int(f(z)-f_{[\C,a_1]}(z))^2dz,
 \]
where $f_{[\C,a_1]}(z):=1/\sqrt{z-a_1}$. Then, the following holds true:
\begin{enumerate}
 \item $h$ is a single-valued function in $\Omega\backslash\{\Ak\}$, continuous up to $\partial\Omega$, which satisfies Dirichlet boundary conditions $h\equiv \text{const}$ on $\partial \Omega$ (since $h$ is defined up to an additive constant, below we assume that $h\equiv 0$ on $\partial \Omega$);
 \item there is no point $z_0$ on $\partial\Omega$ such that $h(z)\geq 0$ in a neighborhood of $z_0$;
 \item $h$ is bounded from below in a neighborhood of each $a_2,\dots,a_n$;
 \item $h^\x$ is single-valued and bounded in a neighborhood of $a_1$.
\end{enumerate}
Moreover, if $h$ and $h^\x$ satisfy (1)\,--\,(4), then $f$ solves the 
problem (\ref{def:spinor-bdry})\,--\,(\ref{def:spinor-1}).
\end{prop}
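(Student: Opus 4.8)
The plan is to read both sides of the claimed equivalence off the residue data of the single-valued meromorphic form $f^2\,dz$ and off the boundary behaviour of $h$, and to check that they match. Two elementary observations are used throughout. First, since $f$ is a spinor branching at each of $a,a_1,\dots,a_k$, the square $f^2$ is genuinely single-valued on $\Omega\setminus\{a,a_1,\dots,a_k\}$, so $f^2\,dz$ is a single-valued holomorphic $1$-form there with at worst simple poles at the marked points, and $h$ (resp. $h^{\mathrm{(a)}}$) is harmonic away from them. Second, from $\partial_x h=\Re\mathfrak{e}\, f^2$ and $\partial_y h=-\Im\mathfrak{m}\, f^2$ (so that $\nabla h=\overline{f^2}$) one gets: along $\partial\Omega$, with unit tangent $\tau$ and outer normal $\nu_{\mathrm{out}}=-i\tau$, the tangential derivative of $h$ equals $\Re\mathfrak{e}\,[f^2\tau]=-\Im\mathfrak{m}\,[f^2\nu_{\mathrm{out}}]$ and the inner normal derivative equals $-\Re\mathfrak{e}\,[f^2\nu_{\mathrm{out}}]$; near a marked point $p$ with $f(z)=c_p(z-p)^{-1/2}+O((z-p)^{1/2})$ one has $h=\Re\mathfrak{e}\,[c_p^2]\log|z-p|-\Im\mathfrak{m}\,[c_p^2]\arg(z-p)+(\text{bounded})$ and the period of $f^2\,dz$ around $p$ equals $2\pi i\,c_p^2$; and the analogous statements hold for $h^{\mathrm{(a)}}$ with $f$ replaced by $f-f_{\C,a}$ near $a$.

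For the implication "(\ref{def:spinor-bdry})--(\ref{def:spinor-2}) $\Rightarrow$ (1)--(4)": condition (\ref{def:spinor-1}) gives $c_{a_j}\in i\R$, hence $c_{a_j}^2\le 0$ is real; condition (\ref{def:spinor-2}) gives $\mathrm{Res}_a f^2=1$; therefore every period of $f^2\,dz$ is purely imaginary, and since $\Omega$ is simply connected $\pi_1(\Omega\setminus\{a,a_1,\dots,a_k\})$ is generated by small loops around the marked points, so $h$ is single-valued. Continuity of $h$ up to $\partial\Omega$ follows by Schwarz reflection of $f$ across the smooth boundary away from the marked points (using (\ref{def:spinor-bdry})), and then $\Re\mathfrak{e}\,[f^2\tau]=-\Im\mathfrak{m}\,[(f\sqrt{\nu_{\mathrm{out}}})^2]=0$ shows $h$ is locally constant on $\partial\Omega$, normalized to $0$; this proves (1). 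Since (\ref{def:spinor-bdry}) also gives $\partial_n h=-(f\sqrt{\nu_{\mathrm{out}}})^2\le 0$ on $\partial\Omega$, property (2) is the Hopf lemma: if $h\ge 0$ near some $z_0\in\partial\Omega$ then, as $h$ is not locally constant ($f^2\sim(z-a)^{-1}$ forbids $f\equiv 0$), Hopf would force $\partial_n h(z_0)>0$, a contradiction. Property (3) is the local expansion with $c_{a_j}^2\le 0$, so $h=c_{a_j}^2\log|z-a_j|+(\text{bounded})$ is bounded below. Finally $f-f_{\C,a}=2\mathcal{A}\sqrt{z-a}+O((z-a)^{3/2})$ by (\ref{eq: expansion-a}), so $(f-f_{\C,a})^2=O(z-a)$ is holomorphic and single-valued near $a$, giving (4).

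For the converse I would run the same computations backwards. Single-valuedness of $h$ around $a_j$ forces $\Re\mathfrak{e}\,[2\pi i\,c_{a_j}^2]=0$, i.e. $c_{a_j}^2\in\R$; boundedness from below of $h$ near $a_j$ (property (3)) then forces $c_{a_j}^2\le 0$, i.e. $c_{a_j}\in i\R$, which is (\ref{def:spinor-1}). Writing $f(z)=e_0(z-a)^{-1/2}+O((z-a)^{1/2})$ near $a$, one has $(f-f_{\C,a})^2=(e_0-1)^2(z-a)^{-1}+O(1)$, so single-valuedness of $h^{\mathrm{(a)}}$ near $a$ gives $(e_0-1)^2\in\R$ and boundedness of $h^{\mathrm{(a)}}$ near $a$ kills the surviving $\log|z-a|$ term, whence $(e_0-1)^2=0$ and $e_0=1$, which is (\ref{def:spinor-2}). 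Lastly, $h\equiv 0$ on $\partial\Omega$ gives $\Im\mathfrak{m}\,[f^2\nu_{\mathrm{out}}]=0$, i.e. $f^2\nu_{\mathrm{out}}\in\R$ along $\partial\Omega$; if $f^2\nu_{\mathrm{out}}<0$ on a boundary arc then $\partial_n h=-f^2\nu_{\mathrm{out}}>0$ there and $h>0$ on the inner side of that arc, contradicting (2), so $f^2\nu_{\mathrm{out}}=(f\sqrt{\nu_{\mathrm{out}}})^2\ge 0$, which is (\ref{def:spinor-bdry}). The non-smooth case reduces to the smooth one since $f^2\,dz$, hence $h$, is conformally invariant by (\ref{eq: obs_covariance}), while the local arguments at $a,a_1,\dots,a_k$ are unaffected by the choice of uniformization.

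I expect the main obstacle to be the boundary analysis: making "continuous up to $\partial\Omega$", the tangential and normal derivative identities, and the Hopf lemma rigorous requires justifying that $f$ extends smoothly across the smooth part of $\partial\Omega$ via Schwarz reflection for the Riemann--Hilbert condition (\ref{def:spinor-bdry}), and then reducing the general simply connected $\Omega$ to this case by a conformal map; the bookkeeping of residues and leading coefficients is otherwise routine.
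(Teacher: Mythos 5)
Your proof is correct and follows essentially the same approach as the paper: reduce to the smooth-boundary case by conformal covariance, then read off conditions (1)--(2) from the identity $\nabla h=\overline{f^2}$ together with the boundary relation $f^2\nu_{\mathrm{out}}\ge 0$ (equivalently, tangential derivative zero and outer normal derivative nonnegative), and read off (3)--(4) from the local expansions of $f$ and $f-f_{\C,a}$ near $a_j$ and $a$, running the same computations backwards for the converse. You merely add some rigor that the paper leaves implicit (Schwarz reflection for boundary continuity, Hopf's lemma for (2)), but the decomposition, the key identities, and the residue/period bookkeeping are the same.
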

\begin{proof}
Note that, being integrated, the covariance property (\ref{eq: obs_covariance}) claims the conformal invariance of both $h$ and $h^\x$. As the  properties (1)\,--\,(4) are preserved under conformal mappings too, we will further assume that $\partial \Omega$ is smooth. Note that the property (\ref{def:spinor-bdry}) is equivalent to (1) and (2): it states that $f^2(z)\nu_\mathrm{out}(z)$ is nonnegative on the boundary, which yields $\partial_{\tau(z)}h \equiv 0$ (where $\tau(z)$ denotes a tangent vector) and $\partial_{\nu_\mathrm{out}(z)}h \geq 0$ everywhere on $\partial\Omega$. A straightforward integration of the asymptotics of $f$ near $a_k$ given by the conditions (\ref{def:spinor-2}) and (\ref{def:spinor-1}) yields
\begin{equation}
\begin{array}{rcll}
h^\x(z)& = & O(1), & z\rightarrow a_1, \\
h(z)& = &- c_k\log|z-a_k|+O(1), & z\rightarrow a_k, ~ 2\le k \le n.
\end{array}
\end{equation}
for some $c_k\ge 0$, giving (3) and (4). Vice versa, (3),(4) imply (\ref{def:spinor-1}), (\ref{def:spinor-2}) by differentiating and taking the square root.
\end{proof}

\subsection{Convergence of discrete observables away from singularities} \label{sub:convergence-away-sing} In this section we prove the convergence of (properly normalized) discrete spinor observables to their continuous counterparts on compact subsets of $\Omega\setminus\{\Ak\}$.

\begin{proof}[\textbf{Proof of Theorem~\ref{thm:obs-away}}]
Let the discrete integrals
\[
H_{\delta}:=\Re\mathfrak{e}\int\left((\vartheta\left(\delta\right))^{-1}F_\delta(z)\right)^{2}dz,\quad F_\delta=F_{\left[\Omega_\delta,\Ak\right]},
\]
be defined on $\mathcal{V}_{\Omega_{\delta}}^{\bullet\circ}$ accordingly to Proposition \ref{prop:int-square-spinor}. Given $\epsilon>0$, denote
\[
\Omega_\delta(\epsilon):=\Omega_\delta\cap \{z:\mathrm{dist}(z;\{\Ak\})>\epsilon\}.
\]

 Assume for a moment that,
\begin{equation}
\label{assump-bdd-away}
\begin{array}{l}
\text{for~any}~\epsilon>0,~\text{the~functions}~H_{\delta}~\text{remain~uniformly} \\
\text{bounded~on}~\Omega_\delta(\epsilon)~\text{by~some~constant}~C(\epsilon)~\text{as}~\delta\to 0.
 \end{array}
\end{equation}
 Then by \cite[Theorem~3.12]{chelkak-smirnov-ii}, the functions $(\vartheta(\delta))^{-1}F_\delta$ are equicontinuous on $\Omega_\delta(\epsilon)$ for any $\epsilon>0$. Therefore, by passing to a subsequence and applying the diagonal process, we can assume that $(\vartheta\left(\delta\right))^{-1}F_\delta$ tends to a limit $\tilde{f}$ and $H_{\delta}\rightarrow \tilde{h}:=\Re\mathfrak{e}\int \tilde{f}^2$ uniformly on compact subsets of $\Omega\backslash\{\Ak\}$. Our goal is to check that $\tilde{f}$ satisfies the properties (1)\,--\,(4) given in Proposition \ref{prop: obs_uniqueness}. Then, the  uniqueness of a solution to the boundary value problem (\ref{def:spinor-bdry})\,--\,(\ref{def:spinor-1}) proven in Lemma~\ref{lem: spinor_rough} implies $\tilde{f}=f_{[\Omega,\Ak]}$.

Clearly, $\tilde{f}$ is a holomorphic spinor on $[\Omega,\Ak]$. By superharmonicity of $H^\bullet_\delta$, for $2\le k\le n$, we have
\[
\min\limits_{|z-a_k|\le \epsilon}H_\delta(z)\ \ge \min\limits_{\epsilon<|z-a_k|\le 2\epsilon}H^\bullet_\delta(z),
\]
 thus $\tilde{h}$ is bounded from below in the neighborhoods of $a_2,\dots,a_n$, so the property (3) holds true. By the maximum principle for $H_{\delta}$ (see Remark \ref{rem:maximum}), taking into account that $H_\delta\equiv0$ on $\partial\Omega_\delta$, we have that
\begin{equation}
\label{eq: F_boundary}
\begin{array}{rcl}
 H^{\bullet}_{\delta}(z) & \geq & -C(\epsilon)(1-\mathrm{hm}^{\Omega_\delta(\varepsilon)}_{\partial \Omega_\delta}(z))\\[3pt]
 H^{\circ}_{\delta}(z) & \leq & C(\epsilon)(1-\mathrm{hm}^{\Omega_\delta(\varepsilon)}_{\partial \Omega_\delta}(z))
 \end{array}
\end{equation}
Since $\mathrm{hm}^{\Omega_\delta(\varepsilon)}_{\partial \Omega_\delta}(z)\to 1$ uniformly in $\delta$ as $z$ approaches the boundary of $\Omega_\delta$, this implies $\tilde{h}\equiv 0$ on $\partial\Omega$, giving (1). Moreover, thanks to Remark~6.3 in~\cite{chelkak-smirnov-ii}, we also have (2): there is no point on $\partial \Omega$  such that $\tilde{h}\geq 0$ in its neighborhood.

Consider now the function discussed (up to normalization) in Remark~\ref{rem:fix_singularity}:
\[
H_\delta^\x:=\Re\mathfrak{e}\int ((\vartheta\left(\delta\right))^{-1}(F_\delta(z)-F_{[\C_\delta,a_1]}(z)))^2dz,
\]
which is well-defined in the disc $\{z:|z-a_1|<r\}$ provided that $r$ is small enough. Since $(\vartheta(\delta))^{-1}F_\delta$ and $(\vartheta(\delta))^{-1}F_{[\C_\delta,a_1]}$ converge to $\tilde{f}$ and $f_{[\C,a_1]}$, respectively (see Lemma~\ref{lem: def_f_C}), uniformly on compact subsets of $A_r:=\{z:0<|z-a_1|<r\}$, we conclude that $H_\delta^\x$ converges to $\tilde{h}^\x:=\Re\mathfrak{e}\int(\tilde{f}(z)-f_{[\C,a_1]}(z))^2dz$ everywhere in $A_r$. By Remark~\ref{rem:fix_singularity} (sub-/super-harmonicity of $H_\delta^\x$ on $\mathcal{V}^\circ/\mathcal{V}^\bullet$ near the point $a_1$), the functions $H_\delta^\x$ are uniformly bounded in $A_r$, so $\tilde{h}^\x$ is bounded in $A_r$ too, which concludes the proof of the property (4). Therefore, $\tilde{f}=f_{\left[\Omega,\Ak\right]}$.

\smallskip

It remains to justify (\ref{assump-bdd-away}). On the contrary, suppose that
\[
M_\delta(\epsilon):=\max\limits_{\Omega_\delta(\epsilon)}|H_{\delta}|\underset{\delta\rightarrow 0}{\longrightarrow}\infty
\]
for some $\epsilon>0$ and along some subsequence of $\delta$'s. Then the re-normalized functions $(M_\delta(\epsilon))^{-1}H_{\delta}$ are uniformly bounded in $\Omega_\delta(\varepsilon)$, and thus $(M_\delta(\epsilon))^{-1/2}\cdot(\vartheta(\delta))^{-1}F_\delta$ and $(M_\delta(\epsilon))^{-2}H_{\delta}$ have (taking subsequences) the limits $\tilde{f}$ and $\tilde{h}=\Re\mathfrak{e}\int\tilde{f}^2$ which are holomorphic and harmonic in $\Omega_\delta(\epsilon)$, respectively. An important observation, proven in Lemma~\ref{non-vanish} below, is that $\tilde{h}$ cannot be identically zero. In particular, for any $0<\epsilon'<\epsilon$, we have $M_\delta(\epsilon')\le CM_\delta(\epsilon)$ with some $C=C(\epsilon',\epsilon)$ independent of $\delta$.

Applying the diagonal procedure, we may assume that $(M_\delta(\epsilon))^{-1/2}\cdot(\vartheta(\delta))^{-1}F_\delta$ tends to a limit $\tilde{f}$ (and $(M_\delta(\epsilon))^{-1}H_{\delta}$ tends to $\tilde{h}=\Re\mathfrak{e}\int \tilde{f}^2$) uniformly on \emph{each} of $\Omega_\delta(\epsilon')$.
Arguing as above, we see that $\tilde{h}$ is harmonic in $\Omega\backslash \{\Ak\}$, satisfies Dirichlet boundary conditions, has nonnegative outer normal derivative, and is bounded from below near $\Ak$. Moreover, repeating the last step of the proof given above we see that the function
\begin{align*}
  \tilde{h}^\x=&\lim\limits_{\delta\rightarrow 0}\, (M_\delta(\epsilon))^{-1}\Re\mathfrak{e}\int ((\vartheta(\delta))^{-1}(F_\delta(z)-F_{[\C_{\delta},a_1]}(z)))^2dz \cr
    = &\lim\limits_{\delta\rightarrow 0}\, (M_\delta(\epsilon))^{-1}\Re\mathfrak{e}\int ((\vartheta(\delta))^{-1}F_\delta(z))^2dz=\tilde{h}
\end{align*}
is also bounded in a neighborhood of $a_1$ (one can neglect $F_{[\C_{\delta},a_1]}(z)$ in the last expression since $(\vartheta(\delta))^{-1}F_{[\C_{\delta},a_1]}(z)$ tends to $f_{[\C,a_1]}$ and $M_\delta(\epsilon)\to\infty$). Thus, $\tilde{h}$ is bounded from below near \emph{all} $\Ak$ and has nonnegative outer normal derivative which contradicts to the maximum principle, if it does not vanish identically.
\end{proof}

\begin{lem}
 \label{non-vanish}
In the notation of the proof above, none of the subsequential limits of $(M_\delta(\epsilon))^{-1}H_{\delta}$ is identically zero in $\Omega(\epsilon)$.
\end{lem}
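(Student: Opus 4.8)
The plan is to argue by contradiction. Suppose that along some subsequence of $\delta$'s one has $(M_\delta(\epsilon))^{-1}H_\delta\to\tilde h\equiv 0$; equivalently $\tilde f\equiv 0$, since $\Re\mathfrak{e}\int\tilde f^2\equiv\mathrm{const}$ forces $\tilde f^2$ to be a purely imaginary constant, which a non-trivial holomorphic spinor cannot be. The heuristic is that the blow-up recorded by $M_\delta(\epsilon)\to\infty$ cannot evaporate in the limit, because the maximum principle forces it to be concentrated near the punctures $a_0:=a,a_1,\dots,a_k$ in a very rigid way. After shrinking $\epsilon$ if necessary, we may assume the closed discs $\overline{B(a_j,2\epsilon)}$, $0\le j\le k$, are pairwise disjoint and compactly contained in $\Omega$; write $\Omega(\epsilon):=\Omega\cap\{z:\mathrm{dist}(z;\{a,\Ak\})>\epsilon\}$.

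First I would use the structural properties of $H_\delta$ from Proposition~\ref{prop:int-square-spinor} and Remark~\ref{rem:maximum} --- sub-/super-harmonicity of $H^\circ_\delta$, $H^\bullet_\delta$ off the punctures, the Dirichlet condition and negative normal derivative on $\partial\Omega_\delta$, and $H^\circ_\delta\ge H^\bullet_\delta$ at adjacent vertices --- to show that, up to an additive $O(1)$,
\[
M_\delta(\epsilon)\ \le\ \max_{0\le j\le k}\ \max_{|z-a_j|=\epsilon}|H_\delta(z)|,
\]
the reverse inequality being trivial. Indeed, on the part of $\Omega_\delta(\epsilon)$ adjacent to $\partial\Omega_\delta$ the function $H_\delta$ is pinned near $0$, so the extrema of $|H_\delta|$ over $\Omega_\delta(\epsilon)$ can only be produced by the circles surrounding the $a_j$. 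Fix $j_0$ at which the maximum on the right is (asymptotically) attained.

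Next I would remove the singularity of $F_\delta$ at $a_{j_0}$ and transport the blow-up onto a circle lying in a compact subset of $\Omega(\epsilon)$. For $j_0=0$ set $F^{(a)}_\delta:=F_\delta-F_{[\C_\delta,a]}$ as in Lemma~\ref{lem: remove_sing}; for $j_0\ge 1$ set $F^{(j_0)}_\delta:=F_\delta-\mu^{(j_0)}_\delta F^{(j_0)}_{\mathrm{fp}}$, where $F^{(j_0)}_{\mathrm{fp}}$ is a full-plane s-holomorphic spinor branching at $a_{j_0}$ with $\vartheta(\delta)^{-1}F^{(j_0)}_{\mathrm{fp}}\to i(z-a_{j_0})^{-1/2}$ (constructed as in Lemma~\ref{lem: def_f_C}) and $\mu^{(j_0)}_\delta$ is the unique coefficient making $F^{(j_0)}_\delta$ s-holomorphic and non-branching across $a_{j_0}$; in either case the associated ``discrete $\Re\mathfrak{e}\int(\,\cdot\,)^2$'' function $H^{(j_0)}_\delta$ is sub-/super-harmonic in $B(a_{j_0},2\epsilon)$ \emph{including} at $a_{j_0}$, by Remark~\ref{rem:fix_singularity} (and its verbatim analogue at a branch point). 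Since $(M_\delta(\epsilon))^{-1/2}\vartheta(\delta)^{-1}F_\delta\to\tilde f$ uniformly on compacts of $\Omega(\epsilon)$, the Laurent coefficients at $a_{j_0}$ converge, whence $(M_\delta(\epsilon))^{-1/2}\mu^{(j_0)}_\delta\to\kappa$, the coefficient of $(z-a_{j_0})^{-1/2}$ in $\tilde f$; for $j_0=0$ this coefficient equals $1$, so $\kappa=0$. Pass to a further subsequence and put $\gamma:=|\kappa|^2\ge 0$. If $\gamma>0$ (only possible for $j_0\ge 1$): on the annulus $\{\epsilon<|z-a_{j_0}|<2\epsilon\}\subset\Omega(\epsilon)$ the spinor $\tilde f$ has a nonzero $(z-a_{j_0})^{-1/2}$ term, so $\tilde f^2$ has a $\kappa^2(z-a_{j_0})^{-1}$ term; single-valuedness of $\tilde h=\Re\mathfrak{e}\int\tilde f^2$ forces $\kappa^2\in\R$, and then $\tilde h=\Re\mathfrak{e}(\kappa^2)\log|z-a_{j_0}|+(\text{harmonic})$ there with $\Re\mathfrak{e}(\kappa^2)=\pm\gamma\ne 0$, contradicting $\tilde h\equiv 0$. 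If $\gamma=0$: near $a_{j_0}$ the cross term $\Re\mathfrak{e}\int\vartheta^{-2}(F^{(j_0)}_\delta)(\mu^{(j_0)}_\delta F^{(j_0)}_{\mathrm{fp}})=O(\sqrt{M_\delta(\epsilon)}\,|\mu^{(j_0)}_\delta|)=o(M_\delta(\epsilon))$ and the singular term $|\mu^{(j_0)}_\delta|^2\Re\mathfrak{e}\int(\vartheta^{-1}F^{(j_0)}_{\mathrm{fp}})^2=O(|\mu^{(j_0)}_\delta|^2|\log\epsilon|)=o(M_\delta(\epsilon))$ become negligible after dividing by $M_\delta(\epsilon)$, so $(M_\delta(\epsilon))^{-1}H^{(j_0)}_\delta\to\tilde h$ on $\Omega(\epsilon)\cap B(a_{j_0},2\epsilon)$; since $|H^{(j_0)}_\delta|=O(M_\delta(\epsilon))$ on $\overline{B(a_{j_0},2\epsilon)}$ (control on $\{|z-a_{j_0}|=2\epsilon\}\subset\Omega_\delta(\epsilon)$, propagated inward by the maximum principle on the disc), the renormalized functions converge uniformly on $\overline{B(a_{j_0},\tfrac32\epsilon)}$ to a harmonic $\tilde h^{(j_0)}$, which agrees with $\tilde h$ on the annulus and hence, by analytic continuation, $\equiv 0$ on $B(a_{j_0},2\epsilon)$. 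But comparing the circles $|z-a_{j_0}|=\epsilon$ and $|z-a_{j_0}|=2\epsilon$ via the maximum principle yields $\max_{|z-a_{j_0}|=2\epsilon}|H^{(j_0)}_\delta|\ge\max_{|z-a_{j_0}|=\epsilon}|H^{(j_0)}_\delta|-O(\delta)\ge M_\delta(\epsilon)-o(M_\delta(\epsilon))$, so $\max_{|z-a_{j_0}|=2\epsilon}|\tilde h^{(j_0)}|\ge 1$, a contradiction. In either case the lemma follows.

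The main obstacle will be the maximum-principle reduction of the second paragraph --- making rigorous that the blow-up of $H_\delta$ over $\Omega_\delta(\epsilon)$ is genuinely carried by the circles $\{|z-a_j|=\epsilon\}$, dealing with the two lattices $\mathcal{V}^{\circ}_{\Omega_\delta}$, $\mathcal{V}^{\bullet}_{\Omega_\delta}$ and the behavior of $H_\delta$ near $\partial\Omega_\delta$ at once --- together with the bookkeeping for the singularity removal at a branch point: constructing $F^{(j_0)}_{\mathrm{fp}}$ with the required local s-holomorphic structure, and obtaining the two-sided control of $H^{(j_0)}_\delta$ by its values on $\{|z-a_{j_0}|=2\epsilon\}$, which relies on the convergence of Laurent coefficients of discrete s-holomorphic spinors.
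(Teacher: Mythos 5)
Your overall contradiction strategy and the reduction to the $\epsilon$-circles around the marked points is sound, and for the source point ($j_0=0$) your argument via $F^{(a)}_\delta=F_\delta-F_{[\C_\delta,a]}$ and the maximum principle matches the paper's treatment of that case essentially verbatim. The difficulty is with $j_0\ge 1$, where you diverge from the paper and, I believe, run into a genuine gap.

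The problem is with the construction of $F^{(j_0)}_{\mathrm{fp}}$ and the claim that $H^{(j_0)}_\delta=\Re\mathfrak{e}\int\bigl(F_\delta-\mu^{(j_0)}_\delta F^{(j_0)}_{\mathrm{fp}}\bigr)^2$ is sub-/super-harmonic throughout $B(a_{j_0},2\epsilon)$. At a branch point $a_{j_0}$, $j_0\ge 1$, the observable $F_\delta=F_{[\Omega_\delta,a;\Ak]}$ is already \emph{defined and s-holomorphic} at the corner $a_{j_0}+\frac{\delta}{2}$; the only defect near $a_{j_0}$ is the failure of subharmonicity of $H^\circ_\delta$ at the face $a_{j_0}$ itself, caused by branching. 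By Proposition~\ref{prop: s-hol+bc} and Remark~\ref{rem:fix_singularity}, $H^\bullet_\delta$ is $\Delta^\bullet_\delta$-superharmonic at every vertex in $B(a_{j_0},2\epsilon)$, including $a_{j_0}+\delta$. Now, any full-plane s-holomorphic spinor $F^{(j_0)}_{\mathrm{fp}}$ branching at $a_{j_0}$, decaying at infinity, and converging after normalization to $i(z-a_{j_0})^{-1/2}$ must, exactly like $F_{[\C_\delta,a]}$, carry a discrete singularity at $a_{j_0}+\frac{\delta}{2}$ (the two adjacent mid-edge values have different $i\R$-projections, so $F^{(j_0)}_{\mathrm{fp}}$ cannot be extended there s-holomorphically). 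Subtracting $\mu^{(j_0)}_\delta F^{(j_0)}_{\mathrm{fp}}$ from the regular $F_\delta$ therefore \emph{creates} a discrete singularity at $a_{j_0}+\frac{\delta}{2}$ rather than cancelling one, and the superharmonicity of $H^{\bullet,(j_0)}_\delta$ breaks at $a_{j_0}+\delta$. So you have traded one harmonicity defect for another, and the two-sided maximum-principle control of $H^{(j_0)}_\delta$ by its boundary values on $\{|z-a_{j_0}|=2\epsilon\}$ — the crux of your $\gamma=0$ case — is not available. (Your $\gamma>0$ branch is vacuous here anyway, since $\tilde f\equiv 0$ forces $\kappa=0$.)

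The paper handles $j_0\ge 1$ by a different and arguably more robust route that requires no singularity removal at all: since $H^\bullet_\delta$ is already superharmonic on the whole disc $B(a_{j_0},2\epsilon)$ and $\to 0$ on $\ell_{a_{j_0}}(2\epsilon)$, its minimum there $\to 0$; hence the extremal point $z^{\mathrm{max}}_\delta$ must be a face where $H^\circ_\delta=M_\delta(\epsilon)$. By subharmonicity one runs a nearest-neighbor chain of faces from $z^{\mathrm{max}}_\delta$ to $a_{j_0}$ along which $H^\circ_\delta\ge M_\delta(\epsilon)$; the comparability of $H^\bullet_\delta$ and $H^\circ_\delta$ at adjacent sites (Remark 3.10 of \cite{chelkak-smirnov-ii}) transfers this to $H^\bullet_\delta\ge cM_\delta(\epsilon)$ along adjacent vertices; and the discrete Beurling estimate then pushes this lower bound to a neighborhood of $z^{\mathrm{max}}_\delta$ that lies inside $\Omega_\delta(\epsilon)$, contradicting $(M_\delta(\epsilon))^{-1}H_\delta\to 0$ there. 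If you want to salvage your approach, you would need to either verify by explicit local computation that the sign of the source introduced at $a_{j_0}+\delta$ is such that superharmonicity is preserved (this is not clear and is not addressed in your write-up), or switch to the paper's barrier argument for $j_0\ge 1$.
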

\begin{proof}
Suppose by contradiction that $(M_\delta(\epsilon))^{-1}H_\delta\rightarrow 0$ uniformly on compact subsets of $\Omega_\delta(\epsilon)$. Let $z^\mathrm{max}_\delta$ be the point of $\Omega_\delta(\epsilon)$ where the maximum of $|H_\delta|$ is attained. Since $H_\delta$ vanishes on $\partial\Omega_\delta$, the sub-/super-harmonicity of $H_\delta$ implies that $z_\delta^\mathrm{max}$ belongs to one of the ``discrete circles''
 \[
 \varpi_k(\epsilon):=\{z:\epsilon\leq|z-a_k|\leq\epsilon+5\delta\}
 \]
 of radius $\epsilon$ around $a_k$. Passing to a subsequence, one can assume that $z_\delta^\mathrm{max}\rightarrow z^\mathrm{max}$.
 Recall that $H^\circ\ge H^\bullet$ at adjacent points (see (\ref{eq:H-def})). Hence, either $z_\delta^\mathrm{max}\in \mathcal{V}^\circ_{\Omega_\delta}$ and $M_\delta(\epsilon)=H_\delta^\circ(z_\delta^\mathrm{max})$, or $z_\delta^\mathrm{max}\in \mathcal{V}^\bullet_{\Omega_\delta}$ and $M_\delta(\epsilon)=-H_\delta^\bullet(z_\delta^\mathrm{max})$.


\smallskip

Suppose that $z^\mathrm{max}_\delta\in \varpi_k(\epsilon)$ for some $2\leq k\leq n$. Denote
\[
m_\delta(2\epsilon):=\min_{z:|z-a_k|\le 2\epsilon} {H}^\bullet_\delta(z).
\]
As ${H}^\bullet_\delta$ is superharmonic inside $\varpi_k(2\epsilon)$ and $(M_\delta(\epsilon))^{-1}H^\bullet_\delta$ tends to zero uniformly on $\varpi_k(2\epsilon)$ by our assumption, we have
\begin{equation}
\label{m/M->0}
(M_\delta(\epsilon))^{-1}\cdot m_\delta(2\epsilon) \underset{\delta\rightarrow 0}{\longrightarrow}0.
\end{equation}

Therefore, $z^\mathrm{max}_\delta\in \mathcal{V}^{\circ}_{\Omega_\delta}$. 
By subharmonicity of $H^\circ_\delta$, we can find a discrete nearest-neighbor path $\gamma^\circ:= \{z^\mathrm{max}_\delta=z_1\sim z_2\sim \dots\}\subset\mathcal{V}_{\Omega_\delta}^\circ$ with $M_\delta(\epsilon)\le\dots\le H^\circ_\delta(z_{j})\le H^\circ_\delta(z_{j+1})\le\dots$, which may only end up at $a_k$, where the subharmonicity fails. By \cite[Remark~3.10]{chelkak-smirnov-ii}, the functions $H^\bullet_\delta-m_\delta(2\epsilon)$ and $H^\circ_\delta-m_\delta(2\epsilon)$ are uniformly comparable at adjacent points inside $\varpi_{k}(2\epsilon)$. Taking into account (\ref{m/M->0}), this implies $H^\bullet_{\delta}(z)\ge cM_\delta(\epsilon)$ for some absolute constant $c>0$ and all $z\in\gamma^\bullet$, where $\gamma^\bullet$ is the set of vertices adjacent to $\gamma^\circ$. Further, the maximum principle gives
\[
H^\bullet_\delta(z)\geq cM_\delta(\epsilon)\mathrm{hm}_\gamma^\bullet(z)+(1-\mathrm{hm}_\gamma^\bullet(z))m_\delta(2\epsilon)~~\text{for}~z:|z-a_k|\leq 2\epsilon,
\]
where $\mathrm{hm}_\gamma^\bullet$ denotes the discrete harmonic measure of $\gamma$ in $\{z:|z-a_k|\leq 2\epsilon\}$. It follows from the discrete Beurling estimate that $\mathrm{hm}^\bullet_\gamma(z)\ge \frac{1}{2}$, if $z$ is chosen close enough (but at fixed distance that is independent of $\delta$) to $z^\mathrm{max}$ where the path $\gamma$ starts. For such $z$, one has $(M_\delta(\epsilon))^{-1}H^\bullet_\delta(z)\ge \frac{1}{2}c + o(1)$ as $\delta\to 0$, and hence the limit of $(M_\delta(\epsilon))^{-1}H_\delta(z)$ cannot be identically zero in $\Omega(\epsilon)$.

\smallskip

It remains to treat the case $z^\mathrm{max}_\delta\in \varpi_{1}(\epsilon)$. Consider the function
\[
\textstyle (M_\delta(\epsilon))^{-1}H_\delta^\x= (M_\delta(\epsilon))^{-1}\Re\mathfrak{e}\int((\vartheta(\delta))^{-1}(F_\delta(z)-F_{[\C_{\delta},a_1]}(z)))^2dz.
\]
Note that it tends to zero on $\varpi_{1}(2\epsilon)$, since one can neglect the term $F_{[\C_{\delta},a_1]}(z)$ (recall that
$(\vartheta(\delta))^{-1}F_{[\C_{\delta},a_1]}(z)\to f_{[\C,a_1]}$ and $M_\delta(\epsilon)\rightarrow \infty$). Therefore, by Remark~\ref{rem:fix_singularity} and the maximum principle, it also tends to zero in a neighborhood of $\varpi_{1}(\epsilon)$, so we consequently derive that each of the functions
\[
\frac{F_\delta- F_{[\C_{\delta},a_1]}}{(M_\delta(\epsilon))^{1/2}\vartheta(\delta)}\,, \quad  \frac{F_\delta}{(M_\delta(\epsilon))^{1/2}\vartheta(\delta)}\quad\text{and}\quad \frac{H_\delta}{M_\delta(\epsilon)}
\]
tends to zero uniformly on $\varpi_{1}(\epsilon)$. In particular, $1=(M_\delta(\epsilon))^{-1}|H_\delta(z^\mathrm{max}_\delta)|\rightarrow 0$, which is a contradiction.
\end{proof}

\subsection{Analysis near the singularities}\label{sub:convergence-near-sing}
We now pass to the most delicate part of our analysis: matching the second-order terms in the values $F_\delta(a_1+\tfrac{3\delta}{2})$ with the second coefficient in the expansion of the continuous spinor near $a_1$. For shortness, below we use the notation $a:=a_1$, $F_\delta=F_{[\Omega_\delta,a,a_2,\dots,a_n]}$ and $\mathcal{A}=\mathcal{A}_{\left[\Omega,a,a_2,\dots,a_n\right]}$.

\begin{proof}[\textbf{Proof of Theorem~\ref{thm:localization-near-a}}]
Let $\mathcal{R}$ denote the reflection with respect to the horizontal line $\left\{x:\Im\mathfrak{m}\,(x-a)=0\right\}$ and $\Lambda_\delta$ be a small neighborhood of $a$ in $\Omega_{\delta}\cap\mathcal{R}\left(\Omega_{\delta}\right)$. Recall the notation $L_{a}=\{x+a+\tfrac{3\delta}{2}:x<0\}$, and denote by $\Lambda_\delta^+\subset \mathcal{V}^1_{\Lambda_\delta}$ one of two sheets of $[\Lambda_\delta,a]\setminus L_a$ such that $F_\delta(a+\tfrac{3\delta}{2})>0$ on $\Lambda_\delta^+$. We define a \emph{real-valued} function $S_\delta:\Lambda_\delta^+\to\R$ by
\[
S_{\delta}:=(\vartheta\left(\delta\right))^{-1}({\textstyle\frac{1}{2}}(F^{\vphantom{(\mathcal{R})}}_\delta+F^{(\mathcal{R})}_\delta)- F_{[\C_\delta,a]}-2\Re\mathfrak{e}\,\mathcal{A}\cdot G_{[\C_\delta,a]}),
\]
where $F_\delta^{(\mathcal{R})}=F_{[\mathcal{R}(\Omega_\delta),a,\mathcal{R}(a_2),\dots, \mathcal{R}(a_n)]}$ and the functions $F_{[\C_\delta,a]}$, $G_{[\C_\delta,a]}$ were constructed in Section~\ref{sub:full-plane}. By symmetry, one has $F_\delta^{(\mathcal{R})}(a+\tfrac{3\delta}{2})=F_\delta^{\vphantom{\mathcal{R}}}(a+\tfrac{3\delta}{2})$. Thus,
\begin{equation}
\label{x-near-sing-1}
S_{\delta}(a+\tfrac{3\delta}{2})=(\vartheta\left(\delta\right))^{-1}\left(F_\delta(a+\tfrac{3\delta}{2})-1-2\Re\mathfrak{e}\,\mathcal{A}\cdot \delta\right)
\end{equation}
and our goal is to estimate this value. Note that $S_\delta$ vanishes on the cut $L_{a}$: both $F_{[\C_\delta,{a}]}$ and $G_{[\C_\delta,a]}$ vanish by construction, and $F^{(\mathcal{R})}_\delta=-F^{\vphantom{\mathcal{R}}}_\delta$ on $L_{a}$ due to the spinor property ($F_\delta$ changes the sign between opposite sides of $L_{a}$, since they belong to different sheets). It is clear that $S_\delta$ is discrete harmonic everywhere in ${\Lambda_\delta^+}$ except at the point $a+\frac{3\delta}{2}$ since all terms are discrete harmonic there. Moreover, due to Lemma~\ref{lem: remove_sing}, it is discrete harmonic at $a+\frac{3\delta}{2}$ also. Therefore, for any (small, but fixed) $r>0$, discrete Beurling estimate (\ref{eq: hm-beurling}) implies
\begin{equation}
\label{x-near-sing-2}
|S_{\delta}(a+\tfrac{3\delta}{2})|\le 
C{\delta^\frac{1}{2}}\cdot r^{-\frac{1}{2}}\max_{\varpi(r)}|S_\delta|,
\end{equation}
where $\varpi(r):=\{z:r\le|z-a|\le r+5\delta\}$ denotes the ``discrete circle'' of radius $r$ around $a$. Further, it follows from Theorem~\ref{thm:obs-away} and Lemmas~\ref{lem: def_f_C},~\ref{def_r_delta} that
\[
S_\delta  ~\underset{\delta\to0}{\longrightarrow}~
s:=\Re\mathfrak{e}\,[\tfrac{1}{2}(f+f^{(\mathcal{R})})- f_{[\C,a]}]-2\Re\mathfrak{e}\,\mathcal{A}\cdot g_{[\C,a]},
\]
uniformly on $\varpi(r)$, where $f=f_{[\Omega,\Ak]}$, $f^{(\mathcal{R})}=f_{[\mathcal{R}(\Omega),a,\mathcal{R}(a_2),\dots,\mathcal{R}(a_n)]}$, and $f_{[\C,a]}(z)= \Re\mathfrak{e}\,[1/\sqrt{z-a}\,]$ and $g_{[\C,a]}(z)= \Re\mathfrak{e}\,\sqrt{z-a}$. Recall that, by definition of the coefficient $\mathcal{A}$, one has
\[
f-f_{[\C,a]}=2\mathcal{A}\sqrt{z-a}+O(|z-a|^{3/2}),\quad z\to a.
\]
It is easy to check that $f^{(\mathcal{R})}(z)\equiv \overline{f(\mathcal{R}(z))}$ (since this spinor solves the corresponding boundary value problem), hence
\[
f^{(\mathcal{R})}-f_{[\C,a]}=2\overline{\mathcal{A}}\sqrt{z-a}+O(|z-a|^{3/2}),\quad z\to a.
\]
Thus, we arrive at $s(z)=O(|z-a|^{3/2})$ as $z\to a$, which means
\begin{equation}
\label{x-near-sing-3}
r^{-\frac{1}{2}}\max_{\varpi(r)}|S_\delta| ~\underset{\delta\to0}{\longrightarrow}~ r^{-\frac{1}{2}}\cdot O(r^{3/2})=O(r).
\end{equation}
Combining (\ref{x-near-sing-1})\,--\,(\ref{x-near-sing-3}), one concludes that, for any given $r>0$,
\[
|F_\delta(a+\tfrac{3\delta}{2})-1-2\Re\mathfrak{e}\,\mathcal{A}\cdot \delta |\le C\vartheta(\delta)\delta^\frac{1}{2}r,
\]
if $\delta$ is small enough. Since $\vartheta(\delta)=O(\delta^{\frac{1}{2}})$ by (\ref{eq: vartheta-estimate}), and $r$ can be chosen arbitrary small, this yields (\ref{eq:localization-near-a}). All estimates are uniform with respect to $\Ak$ at definite distance from the boundary and each other.
\end{proof}

To prove Theorem~\ref{thm:localization-near-b}, we need to analyze the discrete spinor $F_{\left[\Omega_\delta,a,b\right]}$ near the point $b$.
In contrast to Theorem~\ref{thm:localization-near-a}, where second-order information near $a$ was extracted, here we only need to match
the first-order coefficients. Note that the situation is slightly different from the first-order analysis near $a$, as $F_{\left[\Omega_\delta,a,b\right]}$
does not have an explicit discrete singularity, remaining s-holomorphic near the branching point $b$, while its limit blows up at $b$. Still, the strategy for the identification resembles the one used above and appeals to the symmetrization with respect to the horizontal line passing through $b$. Let $L_b':=\{x+b+\frac{\delta}{2}:x<0\}$.

\begin{proof}[\textbf{Proof of Theorem~\ref{thm:localization-near-b}}]
Let $\mathcal{R}$ denote the reflection with respect to the line $\left\{x:\Im\mathfrak{m}(x-b)=0\right\}$ and $\Lambda_\delta\subset\Omega_{\delta}\cap\mathcal{R}\left(\Omega_{\delta}\right)$ be a small neighborhood of $b$. Fix the sheet $\Lambda_\delta^+\subset\mathcal{V}^i_{\Lambda_\delta}$ of $[\Omega_\delta,a,b]\setminus L'_b$ so that $F_\delta(b+\frac{\delta}{2})=i\mathcal{B}_\delta$ with $\mathcal{B}_\delta>0$ (recall that the values of discrete spinors on $\mathcal{V}^i_{\C_\delta}$ are purely imaginary), and do the same for the continuous spinor: fix a sheet so that $\Im\mathfrak{m}\,f_{[\Omega,a,b]}(b+x)>0$ as $x\downarrow 0$ (see (\ref{eq: expansion-b})).

Let $W_\delta:=\mathrm{hm}_{\{b+\frac{\delta}{2}\}}(\,\cdot\,)$ denote the harmonic measure of the point $b+\frac{\delta}{2}$ in the slit discrete plane $\C_\delta^{i}\setminus L_b'$ and
\[
T_{\delta}:=(\vartheta\left(\delta\right))^{-1}(\tfrac{1}{2}(F^{\vphantom{(\mathcal{R})}}_\delta+F^{(\mathcal{R})}_\delta)- i\mathcal{B}_\delta\cdot W_\delta)~:\Lambda_\delta^+\to i\R,
\]
where $F_\delta=F_{[\Omega_\delta,a,b]}$ and $F^{(\mathcal{R})}_\delta=F_{[\mathcal{R}(\Omega),\mathcal{R}(a),b]}$. By symmetry, $F^{(\mathcal{R})}_\delta(b+\frac{\delta}{2})=i\mathcal{B}_\delta$ (if one fixes the sheet for $F_\delta^{(\mathcal{R})}$ by the same condition $\Im\mathfrak{m}\,F^{(\mathcal{R})}_\delta(b+\frac{\delta}{2})>0$).
We have to prove that $\mathcal{B}_\delta\to\mathcal{B}=\mathcal{B}_{[\Omega,a,b]}$ as $\delta\to 0$. Note that, by passing to a subsequence, one may assume that $\mathcal{B}_\delta\to\widetilde{\mathcal{B}}$ for some $\widetilde{\mathcal{B}}\in[0,+\infty]$.

 Suppose that $\widetilde{\mathcal{B}}$ is finite. Then, for any fixed (small) $r>0$, Theorem~\ref{thm:obs-away} and Lemma~\ref{lem: hm-estimate-La+grad}(iii) imply the uniform convergence
\begin{equation}
\label{x-near-sing-b1}
T_\delta(z) ~\underset{\delta\to0}{\longrightarrow}~ t(z):= i\cdot \Im\mathfrak{m}\,[\tfrac{1}{2}(f(z)+f^{(\mathcal{R})}(z))] - i\widetilde{B}\cdot\Re\mathfrak{e}\,[{1}/{\sqrt{z-b}}\,]
\end{equation}
on compact subsets of $\{z:0<|z-b|\le r\}$, where $f=f_{[\Omega,a,b]}$ and $f^{(\mathcal{R})}(z)=f_{[\mathcal{R}(\Omega),\mathcal{R}(a),b]}(z)= -\overline{f(\mathcal{R}(z))}$ due to our conventions about sheets. Note that
\begin{equation}
\label{x-near-sing-b2}
t(z)=i(\mathcal{B}-\widetilde{\mathcal{B}})\cdot\Re\mathfrak{e}\,[{1}/{\sqrt{z-b}}\,]+O(|z-b|^{1/2}),\quad z\to b.
\end{equation}

\smallskip

Clearly, the function $T_\delta$ is harmonic everywhere in $\Lambda_\delta^+$ except at the point $b+\frac{\delta}{2}$, and $T_\delta=0$ on $L'_b$: indeed, $F^{(\mathcal{R})}_\delta=-F^{\vphantom{(\mathcal{R})}}_\delta$ on $L_b'$ due to the spinor property of $F_\delta$ and symmetry reasons. Moreover, $T_\delta(b+\frac{\delta}{2})=0$ as $W_\delta(b+\frac{\delta}{2})=1$. Therefore, $\widetilde{\mathcal{B}}\ne\mathcal{B}$ would contradict the maximum principle for the discrete harmonic function $T_\delta$: in this case (\ref{x-near-sing-b1}) and (\ref{x-near-sing-b2}) imply that, for sufficiently small $\delta$, the values of $T_\delta$ near $b$ are bigger than those near the circle $\{z:|z-b|=r\}$.

The similar argument works, if $\widetilde{\mathcal{B}}=+\infty$: in this case one would have
\[
\mathcal{B}_\delta^{-1}T_\delta(z)~\underset{\delta\to0}{\longrightarrow}~ -i\cdot\Re\mathfrak{e}\,[{1}/{\sqrt{z-b}}\,]
\]
which contradicts to the maximum principle for $T_\delta$, just as before.
\end{proof}

\renewcommand{\thesection}{A}
\section{Appendix} \label{sec:appendix-continuous-computations}
The goal of this Appendix is to derive the explicit formulae (\ref{12ptFcts}) for the scaling limits of $n$-point spin correlations in the upper half-plane $\bH$. Recall that we have reduced this problem to the computation of the coefficients $2\mathcal{A}_{[\bH,\Ak]}$ in front of $\sqrt{z-a_1}$ in the asymptotic expansion (\ref{eq: expansion-a}) of the spinor $f_{[\bH,a_1,\dots,a_n]}$. The latter is defined to be the unique solution to the boundary value problem \mbox{(\ref{def:spinor-bdry})\,--\,(\ref{def:spinor-1})} with additional regularity condition $f=O(|z|^{-1})$ at infinity.
What remains to be proven is the equality (\ref{eq: A_expl}) which claims that those coefficients $\mathcal{A}_{[\bH,\Ak]}$ are logarithmic derivatives of the explicit quantities $\langle \sigma_{a_1}\dots\sigma_{a_n}\rangle_\Omega^{+}$ given by (\ref{12ptFcts}).

\begin{proof}[Proof of (\ref{eq: A_expl}) for $n\geq 3$.] Recall from the proof of Lemma \ref{lem: spinor_rough} that $f_{[\bH,a_1,\dots,a_n]}$ can be written in the form
\begin{equation}
\label{xAp:fQ=}
f_Q(z)=e^{\frac{\pi i}{4}}\cdot \frac{Q(z)}{\sqrt{(z-a_1)(z-\overline{a}_1)\dots (z-a_n)(z-\overline{a}_n)}},
\end{equation}
where $Q$ is a polynomial of degree $n-1$ with real coefficients. This polynomial is to be determined from the conditions
\begin{align*}
\Re\mathfrak{e}\,(\lim\limits_{z\rightarrow a_1}\sqrt{z-a_1}\cdot f_{[\Omega,a_1,\dots,a_n]}(z))&=1, 
\\
\Re\mathfrak{e}\,(\lim\limits_{z\rightarrow a_k}\sqrt{z-a_k}\cdot f_{[\Omega,a_1,\dots,a_n]}(z))&=0, \quad k=2,\dots,n, 
\end{align*}
which are equivalent to
\begin{equation} 
\label{eq: problem}
\frac{Q(a_k)}{\prod\limits_{j\neq k}\sqrt{(a_k-a_j)(a_k-\overline{a}_j)}}+\frac{Q(\overline{a}_k)}{\prod\limits_{j\neq k}\sqrt{(\overline{a}_k-a_j)(\overline{a}_k-\overline{a}_j)}}=2\Im\mathfrak{m}\,a_k\cdot\delta_{1k}, 
\end{equation}
where $k=1,\dots,n$ and $\delta_{1k}$ is the Kronecker delta. We can write $Q(z)$ in the form
\begin{equation}
\label{xAp:Q=}
Q(z)\,=\,\sum\limits_{j=1}^{n}\frac{q_j}{z-a_j}\,\cdot\,\prod\limits_{j=1}^n(z-a_j),
\end{equation}
with some unknown coefficients $q_j$'s. Then, (\ref{eq: problem}) is equivalent to
\[
{q_k}\cdot\frac{\prod_{j\ne k}\sqrt{a_k-a_j}}{\prod_{j\ne k}\sqrt{a_k-\overline{a}_j}}+ \sum_{s=1}^n q_s\cdot\frac{(\overline{a}_k-a_k)\prod_{j\ne k}\sqrt{\overline{a}_k-a_j}}{(\overline{a}_k-a_s)\prod_{j\ne k}\sqrt{\overline{a}_k-\overline{a}_j}}=2\Im\mathfrak{m}\,a_k\cdot\delta_{1k},
\]
which can be further rewritten as
\begin{equation}
\label{eq: linear}
\sum_{s=1}^n (D_{ks}+C_{ks})q_s= c(a_1,\dots,a_k)\cdot\delta_{1k},\quad k=1,\dots,n,
\end{equation}
for a constant $c(a_1,\dots,a_k)\neq 0$ which only affects the overall normalization of the spinor $f_{[\bH,\Ak]}$, where $D$ is the diagonal matrix with entries
\[
D_{kk}=\frac{1}{\overline{a}_k-a_k}\prod\limits_{j\neq k}\chi^{\frac12}_{kj}\,,\qquad \chi_{kj}=\chi_{jk}=\frac{(a_k-a_j)(\overline{a}_k-\overline{a}_j)}{(a_k-\overline{a}_j)(\overline{a}_k-a_j)}=\left|\frac{a_k-a_j}{a_k-\overline{a}_j}\right|^2,
\]
and $C$ is the Cauchy matrix
\[
C_{ks}=\frac{1}{\overline{a}_k-a_s}\,,\quad 1\le k,s\le n.
\]
Solving (\ref{eq: linear}) by Cramer's rule, one gets $q_j/q_1=(-1)^{j+1}\det A_{[1j]}/\det A_{[11]}$, where $\det A_{[1j]}$ stands for minors of the matrix $A=D+C$. Plugging (\ref{xAp:fQ=}), (\ref{xAp:Q=}) into the expansion (\ref{eq: expansion-a}) of $f_{[\bH,\Ak]}(z)=f_Q(z)$ as $z\to a_1$, we arrive at the formula
\begin{align*}
 2\mathcal{A}_{[\bH,a_1,\dots,a_n]} &= \partial_z(\log [\,f_Q(z)\sqrt{z-a_1}\,])\big|_{z=a_1}
\cr &=\sum\limits_{j=2}^{n}\frac{\,\,(-1)^{j+1}\det A_{[1j]}}{(a_1-a_j)\det A_{[11]}}-\frac{1}{2(a_1-\overline{a}_1)}+\frac{1}{2}\sum_{j=2}^n\left(\frac{1}{a_1-a_j}-\frac{1}{a_1-\overline{a}_j}\right).
\end{align*}
Recall that principal minors of the Cauchy matrix $C$ are given by
$$
\det(C_S)\ =\ \prod\limits_{k\in S}\frac{1}{\overline{a}_k-a_k}\prod\limits_{\substack{k<m\\ k,m\in S}}\chi_{km}\ =\ \prod\limits_{k\in S}\frac{1}{\overline{a}_k-a_k}\prod\limits_{\substack{k\neq m\\ k,s\in S}}\chi^{\frac12}_{km}.
$$
Hence, expanding  $\det A_{[11]}$ by linearity and encoding a subset $S\subset\{2,\dots,n\}$ by the collection of signs $\mu\in \{\pm 1\}^n$, where $\mu_1=-1$ and $\mu_s=2\mathbbm{1}_{s\in S}-1$ for $s=2,\dots,n$, we find
\begin{align*}
\det A_{[11]}\ & = \sum\limits_{S\subset\{2,\dots,n\}} \left(\det C_{S}\cdot \det D_{\overline{S}}\right)  \cr & =\
\prod_{k=2}^n\frac{1}{\overline{a}_k-a_k}\ \cdot\!\!\sum_{\substack{S\subset\{2,\dots,n\}}}
\biggl(\prod_{\substack{k,m\in {S}\\ k\ne m}} \chi_{km}^{\frac12} \cdot \prod_{k \in \overline{S}}\prod_{\substack{j=1 \\ j\neq k}}^n(\chi_{kj}\chi_{jk})^{\frac14}\biggr)\cr
& =\ \prod_{k=2}^n\frac{1}{\overline{a}_k-a_k}\ \cdot \sum_{\substack{\mu\in\{\pm 1\}^n\\\mu_1=-1}}
\biggl(
\prod_{\substack{k,m=2\\k \neq m}}^n\chi^{\frac{3+\mu_k\mu_m}8}_{km} \prod_{\substack{k=2}}^n(\chi_{k1}\chi_{1k})^{\frac{1-\mu_k}{8}}
\biggr)\cr
&=\ \prod_{k=2}^n\frac{1}{\overline{a}_k-a_k}
\prod_{\substack{k,m=2\\k \neq m}}^n\chi^{\frac38}_{km}
\prod_{\substack{k=2}}^n\chi_{1k}^{\frac{1}{4}}\ \cdot
\sum_{\substack{\mu\in\{\pm 1\}^n\\\mu_1=-1}}
\prod_{\substack{k,m=1\\k\neq m}}^n\chi_{km}^{\frac{\mu_k\mu_m}{8}}.
\end{align*}
Note that we can also write
$$
\sum_{j=2}^{n}\frac{(-1)^{j+1}\det A_{[1j]}}{a_1-a_j} = \det[\widetilde{D}+\widetilde{C}],
$$
where $\widetilde{D}$ is a diagonal matrix such that $\widetilde{D}_{11}=0$ and $\widetilde{D}_{kk}=D_{kk}$ for $2\le k\le n$; $\widetilde{C}_{11}=0$, $\widetilde{C}_{1s}=1/(a_1-a_s)$ for $2\le s\le n$ and $\widetilde{C}_{ks}=C_{ks}=1/(\overline{a}_k-a_s)$, if $2\le k\le n$.

We now compute the principal minors of $\widetilde{C}$. If $1\notin S$, then $\det \widetilde{C}_S= \det C_S$.  Otherwise, denote $S:=\{1\}\cup S_1$, where $S_1\subset\{2,\dots,n\}$. Treating $\overline{a}_1$ and $a_1$ as independent variables (note that $\widetilde{C}$ does not contain $\overline{a}_1$), we obtain
\begin{multline*}
\det \widetilde{C}_S=
\lim_{\overline{a}_1\to a_1} \left(\det C_S - \frac{1}{\overline{a}_1-a_1} \det C_{S_1}\right)
\\
=\prod_{k\in S_1}\frac{1}{\overline{a}_k-a_k}\!\prod_{\substack{k<m\\k,m\in S_1}}\!\chi_{km}\cdot \lim_{\overline{a}_1\to a_1} \frac{\prod_{s\in S_1}\chi_{1s}\,-\,1}{\overline{a}_1-a_1}
\\
=\prod_{k\in S_1}\frac{1}{\overline{a}_k-a_k}\!\prod_{\substack{k\ne m \\ k,m\in S_1}}\!\chi^{\frac12}_{km}\,\cdot \sum_{s\in S_1}\biggl(\frac{1}{a_1-\overline{a}_s}-\frac{1}{a_1-a_s}\biggr).
\end{multline*}
Note that ${1}/{(a_1-\overline{a}_s)}-{1}/{(a_1-a_s)}=-\partial_{a_1}\log\chi_{1s}$.
Similarly to the computation of $\det A_{[11]}$ given above, we have
\begin{align*}
& \det[\widetilde{D}+\widetilde{C}]\ = \sum_{S\subset\{1,\dots,n\}}\left(\det \widetilde{C}_{S}\cdot \det \widetilde{D}_{\overline{S}}\right)\ = \sum_{\substack{S=\{1\}\cup S_1,\\ S_1\subset\{2,\dots,n\}}}\left(\det \widetilde{C}_{S}\cdot \det \widetilde{D}_{\overline{S}}\right)  \\
& = \prod_{k=2}^n\frac{1}{\overline{a}_k\!-\!a_k}
\prod_{\substack{k,m=2\\k \neq m}}^n\chi^{\frac38}_{km}
\prod_{\substack{k=2}}^n\chi_{1k}^{\frac{1}{4}}\,\,\cdot\!\!\!\!
\sum_{\substack{\mu\in\{\pm 1\}^n\\\mu_1=-1}}\biggl(
\prod_{\substack{k,m=1\\k\neq m}}^n\chi_{km}^{\frac{\mu_k\mu_m}{8}}\!\cdot \sum_{\substack{s=2}}^n\frac{1\!+\!\mu_s}{2}\,(-\partial_{a_1}\log\chi_{1s})\biggr),
\end{align*}
where we set $\mu_s=2\mathbbm{1}_{s\in S_1}-1$ for $s=2,\dots,n$. Taking into account the symmetry of $\mu_k\mu_m$ and $\mu_1\mu_s$ under the simultaneous sign flip of \emph{all} $\mu_j$, $j=1,\dots,n$, we obtain
\[
\frac{\det[\widetilde{D}+\widetilde{C}]}{\det A_{[11]}}=\frac{\sum_{\MuPm}\biggl(
\prod_{1\le k<m\le n}\chi_{km}^{\frac{\mu_k\mu_m}{4}}\cdot \sum_{\substack{s=2}}^n\frac{1-\mu_1\mu_s}{2}\,(-\partial_{a_1}\log\chi_{1s})\biggr)}
{\sum_{\MuPm}\prod_{1\le k<m\le n}\chi_{km}^{\frac{\mu_k\mu_m}{4}}}.
\]
Therefore,
\[
\frac{\det[\widetilde{D}+\widetilde{C}]}{\det A_{[11]}}= -\frac{1}{2}\sum_{\substack{m=2}}^n\partial_{a_1}\log\chi_{1m}+
\frac{2\partial_{a_1}\biggl(\sum_{\MuPm}
\prod_{1\le k<m\le n}\chi_{km}^{\frac{\mu_k\mu_m}{4}}\biggr)}
{\sum_{\MuPm}\prod_{1\le k<m\le n}\chi_{km}^{\frac{\mu_k\mu_m}{4}}},
\]
and we arrive at
$$
\mathcal{A}_{[\bH,\Ak]}=-\frac{1}{4(a_1-\overline{a}_1)}+\partial_{a_1}\log\biggl(\ \sum_{\MuPm}\prod_{1\le k<m\le n}\chi_{km}^{\frac{\mu_k\mu_m}{4}}\biggr),
$$
thus concluding the proof of the identity $\mathcal{A}_{[\bH,\Ak]}=2\partial_{a_1}\log \langle\sigma_{a_1}\dots\sigma_{a_n}\rangle^+_{\bH}$.
\end{proof}
\begin{rem} When preparing the first version of our paper, we were only able to check the explicit formula (\ref{12ptFcts}) in some special cases; thus, the exposition went along the lines of Remark \ref{rem:implicit_integration_L}. It later turned out that ``interpolation problems'' similar to (\ref{eq: problem}) also appear in the analysis of the critical Ising model with mixed free/fixed boundary conditions, in which case the collection $a_1,\overline{a}_1,\dots,a_n,\overline{a}_n$ should be replaced by $2n$ boundary points where these boundary conditions change. In particular, a computation similar to one given above, combined with the Edwards-Sokal coupling, yields general $2n$-point crossing formulae in the random cluster (Fortuin-Kasteleyn) representation of the critical Ising model, see \cite{izyurov-freebc}.
\end{rem}

\end{document}